\newcommand{\Rnum}[1]{\uppercase\expandafter{\romannumeral #1\relax}}
\newcommand{\Acurvy}{\mathcal{A}}
\newcommand{\Acurvybar}{\overline{\mathcal{A}}}
\newcommand{\Bcurvybar}{\overline{\mathcal{B}}}
\newcommand{\Hil}{\mathcal{H}}
\newcommand{\Bcurvy}{\mathcal{B}}
\newcommand{\vecmb}{\vec{m}_b}
\newcommand{\vecmbp}{\vec{m}'_b}
\newcommand{\Ccurvy}{\mathcal{C}}
\newcommand{\Gcurvy}{\mathcal{G}}
\newcommand{\Scurvy}{\mathcal{S}}
\newcommand{\oalg}[1]{\mathfrak{o}_{#1}}
\newtheorem{lemma}{Lemma}
\newtheorem{theorem}{Theorem}
\newtheorem{definition}{Definition}
\newtheorem{proposition}{Proposition}
\newtheorem{corollary}{Corollary}
\begin{document}
	
	\preprint{FILL OUT}
	
	\title{A Bravyi-K\"onig theorem for Floquet codes generated by locally conjugate instantaneous stabiliser groups}
	
	\author{Jelena Mackeprang}
	\email{jelena.mackeprang@gmail.com}
	\author{Jonas Helsen}
	\affiliation{%
		QuSoft \& CWI\\
		  Science Park 123, 1098XG Amsterdam, The Netherlands
	}%
	
	\date{\today}


\begin{abstract}
The Bravyi-K\"onig (BK) theorem is an important no-go theorem for the dynamics of topological stabiliser quantum error correcting codes. It states that any logical operation on a $D$-dimensional topological stabiliser code that can be implemented by a short-depth circuit acts on the codespace as an element of the $D$-th level of the Clifford hierarchy. In recent years, a new type of quantum error correcting codes based on Pauli stabilisers, dubbed Floquet codes, has been introduced. In Floquet codes, syndrome measurements are arranged such that they dynamically generate a codespace at each time step. Here, we show that the BK theorem holds for a definition of Floquet codes based on locally conjugate stabiliser groups. Moreover, we introduce and define a class of generalised unitaries in Floquet codes that need not preserve the codespace at each time step, but that combined with the measurements constitute a valid logical operation. We derive a canonical form of these generalised unitaries and show that the BK theorem holds for them too.
\end{abstract}

\maketitle
\section{Introduction}\label{sec:intro}

Quantum error correction (QEC) and fault tolerant quantum computing are crucial ingredients for universal quantum computation due to the inherent fragility of quantum hardware systems. A rich and well-studied framework is the Pauli stabiliser formalism, where the codespace is chosen to be the joint eigenspace of an abelian subgroup of the Pauli group - the stabiliser group or sometimes simply called the stabiliser - and errors change the sign of one or multiple elements of this group.
 A major bottleneck in designing good (stabiliser) quantum error correcting codes are high-weight measurements, i.e. ones that require addressing multiple qubits at once using possibly entangling operations. This problem motivates the development of codes allowing for higher-weight measurements to be decomposed into smaller ones, such as subsystem codes, where a subset of the logical qubits are turned into gauge qubits. \\
Dynamical codes (or Floquet codes), first introduced in~\cite{hastingsDynamicallyGeneratedLogical2021,haahBoundariesHoneycombCode2022}, pose another way to split up high-weight measurements. Here, one transitions from one stabiliser group to the next by measuring Pauli operators that anti-commute with a subset of stabilisers in the current one. Because the measured Pauli operators each anti-commute with at least one element of the stabiliser, no information is gained about the actual logical state. Logical information is preserved. The stabiliser groups between which one transitions in a Floquet code are dubbed \emph{instantaneous stabiliser groups}, or short ISGs. One can view two consecutive ISGs and the measurements as a subsystem code (as mentioned in ~\cite{fuErrorCorrectionDynamical2025,townsend-teagueFloquetifyingColourCode2023a,guFaulttolerantQuantumArchitectures2025}), but as soon as a third ISG is present, the measurements in a dynamical code do not necessarily define a standard subsystem code measurement sequence. That is because there generally is no consistent logical subspace as for subsystem codes. In periodic dynamical codes, one periodically returns (up to stabiliser signs) to the initial codespace, but the measurement sequences generally introduce a non-trivial logical operation. 
 Other dynamical codes have been introduced in~\cite{davydovaFloquetCodesParent2023,townsend-teagueFloquetifyingColourCode2023a,ellisonFloquetCodesTwist2023,kesselringAnyonCondensationColor2024,davydovaQuantumComputationDynamic2024,haahBoundariesHoneycombCode2022,fuErrorCorrectionDynamical2025}. 
 
 As far as we know, there is no universally agreed-upon definition of a dynamical code. In~\cite{kesselringAnyonCondensationColor2024} and~\cite{davydovaQuantumComputationDynamic2024} the authors define topological dynamical codes as sequences of anyon condensations with respect to a parent anyon theory. Another approach to systematically construct dynamical codes is via the ZX calculus (see~\cite{bombinUnifyingFlavorsFault2024a,townsend-teagueFloquetifyingColourCode2023a,rodatzFloquetifyingStabiliserCodes2024}). A very general approach to dynamical codes is discussed in~\cite{fuErrorCorrectionDynamical2025}, where the authors use a definition of Floquet codes consisting solely of the a priori prescription of the sequential measurements and derive an algorithm to compute different notions of distance. \\ 



In this work, we analyse the dynamical properties of Floquet codes. Namely, we want to know which logical operations can be implemented in a fault-tolerant manner. To do so, we first have to agree on a definition for Floquet codes. We refrain from the very broad notion of dynamical codes as an arbitrary sequence of measurements, as these fit within the broader framework of spacetime codes/spacetime error correction~\cite{delfosseSpacetimeCodesClifford2023}. Instead, we base our definition on the concept of (locally) conjugate stabiliser group, as introduced in~\cite{aasenMeasurementQuantumCellular2023a}. They are the building block of most previously introduced Floquet codes, such as in~\cite{hastingsDynamicallyGeneratedLogical2021,davydovaFloquetCodesParent2023,davydovaQuantumComputationDynamic2024,townsend-teagueFloquetifyingColourCode2023a,ellisonFloquetCodesTwist2023}. To be more precise, we view dynamical codes as a finite sequence of locally conjugate stabiliser groups $\Acurvy_1 \rightarrow \Acurvy_2 \rightarrow \ldots \rightarrow \Acurvy_{\tau}$, where one transitions from $\Acurvy_{t}$ to $\Acurvy_{t+1}$ by measuring specific generators of the respective subsequent stabiliser group, i.e. of $\Acurvy_{t+1}$. Most \emph{topological} dynamical codes introduced so far, such as the ones in~\cite{hastingsDynamicallyGeneratedLogical2021,davydovaFloquetCodesParent2023,davydovaQuantumComputationDynamic2024,kesselringAnyonCondensationColor2024,townsend-teagueFloquetifyingColourCode2023a}, are covered by this definition. 
 However, our work should extend to these more general definitions of dynamical codes in a straightforward manner, as we will discuss in section~\ref{sec:concl} 
 

Having narrowed down a precise definition, we ask whether an important no-go theorem about the dynamics of topological stabiliser codes (TSCs), the Bravyi-K\"onig theorem~\cite{bravyiClassificationTopologicallyProtected2013}, holds for topological Floquet codes. It states that unitary operators on standard $D$-dimensional TSCs that can be implemented by a short-depth, short-range circuit must be in the $D$-th level of the Clifford hierarchy. Here, we consider a sequence of ISGs $\Acurvy_1 \rightarrow \Acurvy_2 \rightarrow \Acurvy_3 \rightarrow \ldots \Acurvy_{\tau}$ each defining a code space at the time steps $t = 1,2,3,4, \ldots$ and the constant-depth, constant-range circuits $U_1$, $U_2$, $U_3$, $\ldots$, $U_{\tau}$ applied at the respective time steps. We ask whether the combined logical operation induced by these unitaries and the measurements is also limited to the Clifford hierarchy.
%

We consider two different types of circuits. First, we only allow unitary operations that preserve the codespace at each time step. We explain why the validity of the BK theorem quickly follows from the definition of Floquet codes based on locally conjugate stabiliser groups.

 Secondly, we move on to a more interesting case --which forms the main result of this paper-- based on the observation that Floquet codes admit a more general type of logical unitary operators, which do not necessarily preserve the codespace at each time step, yet still do  preserve error detectability and logical information. This is possible because there are Pauli operators that do not preserve the codespace, but that nevertheless do not constitute an error because they are immediately absorbed by the subsequent measurement. Within the language of anyon condensation these correspond to anyons that are subsequently condensed. Loosely speaking, we define the following~\emph{generalised logical unitaries in dynamical codes}:
 
 \begin{definition}[Generalised logical unitary in dynamical code (informal)]\label{def:genuninformal}
 	Let $\Acurvy \rightarrow \Bcurvy$ be two subsequent ISGs in a dynamical code. At time step $t$, the state is a joint eigenstate of $\Acurvy$ and at time step $t+1$ it is a joint eigenstate of $\Bcurvy$. A generalised unitary $U$ at time $t$ is a unitary that does not need to preserve the joint eigenspace of $\Acurvy$, but that fulfils the following conditions:
 	
 	\begin{itemize}
 		\item \textbf{Error detectability and self-correction}: After the application of $U$ all previously detectable errors must remain detectable. All errors that were self-correctable must remain self-correctable.
 		\item \textbf{Logical preservation and logical equivalence}: After the application of $U$ and the subsequent measurement logical information must be preserved and the effective logical action must remain the same for all measurement results. 

 	\end{itemize}
 \end{definition}
 
 The formal definition is stated in section~\ref{sec:beyondcodepres}.
  We derive a canonical form (which we believe might be of independent interest) of these generalised logical unitaries in section~\ref{sec:canonicalform} and show theorem~\ref{thm:BKbeyondinformal}:
  
  \begin{theorem}\label{thm:BKbeyondinformal}[Bravyi-K\"onig theorem for generalised logical unitaries (informal)]
  	Let $\Acurvy_0 \rightarrow \Acurvy_1 \rightarrow \ldots \rightarrow \Acurvy_{{\tau}_1}$ be a dynamical code, where each $\Acurvy_t$ is a $D$-dimensional topological stabiliser code. At each time step one applies a constant-depth, constant-range generalised logical unitary as defined in definition~\ref{def:genuninformal}. If $\tau = \mathcal{O}(1)$, then the effective logical operation of all unitaries and measurements amounts to an element in the $D$-th level of the Clifford hierarchy.
  \end{theorem}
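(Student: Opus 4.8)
The plan is to reduce the entire dynamical process---generalised unitaries interleaved with measurements across the sequence $\Acurvy_0 \to \Acurvy_1 \to \cdots \to \Acurvy_{\tau}$---to a single constant-depth, constant-range logical unitary acting on one fixed reference code, and then to invoke the original Bravyi--K\"onig theorem a single time. The crucial observation that makes this strategy sound is that one must \emph{not} apply Bravyi--K\"onig step-by-step and then multiply the resulting Clifford-hierarchy elements, since the levels $\mathcal{C}_D$ are not closed under multiplication. Instead, because $\tau = \mathcal{O}(1)$, the composition of all the physical circuits is itself a constant-depth, constant-range circuit, to which Bravyi--K\"onig may be applied only once.

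First I would use the canonical form of generalised logical unitaries derived in section~\ref{sec:canonicalform} to replace, at each time step $t$, the generalised unitary $U_t$ by an equivalent genuinely code-preserving logical unitary $U'_t$ that fixes the codespace of $\Acurvy_t$ and induces the same effective logical action once the subsequent measurement is performed. Concretely, the canonical form isolates the part of $U_t$ that is ``condensed''---that is, absorbed by the measurement of $\Acurvy_{t+1}$ and contributing only an outcome-dependent Pauli byproduct---from a residual piece that honestly preserves $\Acurvy_t$. The requirement in definition~\ref{def:genuninformal} that the effective logical action be identical for all measurement outcomes guarantees that these byproducts are Pauli corrections that do not alter the logical map, so the replacement $U_t \mapsto U'_t$ is faithful at the level of the logical operation while preserving constant depth and range.

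Next I would exploit the local conjugacy of consecutive ISGs. By the defining property of locally conjugate stabiliser groups, the transition $\Acurvy_t \to \Acurvy_{t+1}$ (measurement plus Pauli correction) acts on the codespace as a constant-depth, constant-range Clifford unitary $V_t$ with $V_t \Acurvy_t V_t^{\dagger} = \Acurvy_{t+1}$ up to signs. Composing, $W_t := V_{t-1}\cdots V_0$ maps the reference code $\Acurvy_0$ to $\Acurvy_t$, and since $\tau = \mathcal{O}(1)$ each $W_t$ has constant depth. Conjugating each step back to the reference frame, $\hat{U}_t := W_t^{\dagger} U'_t W_t$ preserves $\Acurvy_0$ and---because conjugation by a constant-depth circuit enlarges support only within a bounded light cone---remains constant-range. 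The effective logical operation of the full process is therefore carried by a product of $\mathcal{O}(\tau) = \mathcal{O}(1)$ constant-depth, constant-range unitaries---the $\hat{U}_t$ together with the transition Cliffords---which compose into a single constant-depth, constant-range unitary $U_{\mathrm{tot}}$; after mapping the final frame back to $\Acurvy_0$ it preserves the $D$-dimensional topological stabiliser code $\Acurvy_0$. Applying the Bravyi--K\"onig theorem once then yields that its logical action lies in $\mathcal{C}_D$.

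I expect the main obstacle to be the measurement handling underlying the first two steps, rather than the final composition. Establishing that the combined action of $U_t$ and the projective measurement can be represented, on the logical subspace, by an honest unitary---and that this unitary is precisely the constant-depth Clifford $V_t$ furnished by local conjugacy---requires careful tracking of the outcome-dependent Pauli frame and of the error-detectability and self-correction constraints in definition~\ref{def:genuninformal}. In particular one must verify that replacing measurements by their unitary counterparts does not secretly introduce non-local corrections, and that the decomposition supplied by the canonical form is compatible with these corrections uniformly over all measurement outcomes. Once this is in place, the reliance on $\tau = \mathcal{O}(1)$ to keep all light cones bounded makes the reduction to a single application of Bravyi--K\"onig immediate.
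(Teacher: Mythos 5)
Your high-level reduction---compose all the physical circuits into a single constant-depth circuit and invoke Bravyi--K\"onig exactly once---is the same strategy as the paper's, and your treatment of the measurements via the constant-depth transition unitaries furnished by local conjugacy matches corollary~\ref{cor:Vablocal}. However, there is a genuine gap at the step you treat as automatic: you assert that the canonical form lets you replace each $U_t$ by a code-preserving $U'_t$ ``while preserving constant depth and range.'' That does not follow from the canonical form alone. Theorem~\ref{thm:canonicalform} gives $U_t\Pi_{\Acurvy_t}=e^{i\vec{\phi}\vec{b}}\,U_{\Acurvy_t}\Pi_{\Acurvy_t}$, and a priori both factors could be highly non-local, with their non-localities cancelling in the product; nothing in the derivation of the canonical form controls the geometry of either factor separately. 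Closing exactly this gap is the central technical result of the paper (theorem~\ref{thm:explocal}): if the condensed factor $e^{i\vec{\phi}\vec{b}}$ admits no constant-depth implementation, then $U_t$ itself cannot be constant-depth. The paper proves this by contradiction with a connected-correlation argument: in a $D$-dimensional topological stabiliser code, far-separated observables of bounded support have vanishing connected correlation (lemma~\ref{lem:correlationstab}), and this property is inherited by states in $U_t\Hil_{\Acurvy_t}U_t^{\dag}$ whenever $U_t$ is constant-depth; but a large-support $b\in\Bcurvy_e$ anti-commutes with two far-apart conjugate-basis stabilisers $a,a'$ of $\Acurvy_t$, and a direct computation shows their connected correlation in a state $e^{i\vec{\phi}\vec{b}}\ket{\Phi}$ with $\ket{\Phi}\in\Hil_{\Acurvy_t}$ equals a nonzero product of $\cos(2\phi_b)$ factors times $\left|1-\prod_{b}\cos^{2}(2\phi_b)\right|\neq 0$. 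Only after this theorem does $U_{a,t}:=e^{-i\vec{\phi}\vec{b}}U_t$ become a code-preserving product of two constant-depth circuits, at which point your composition argument (which is the paper's corollary) goes through.

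Relatedly, your justification for the replacement mischaracterises the condensed factor: you describe the outcome-absorbed byproducts as ``Pauli corrections,'' but the non-code-preserving part is $\exp(\sum_{b\in\Bcurvy_e}i\phi_b b)$ with arbitrary angles $\phi_b\in[0,2\pi)$, which reduces to a transversal Pauli only at special angles. This is precisely why the depth of the code-preserving factor is a non-trivial question. Consequently, the obstacle you flag at the end (tracking outcome-dependent Pauli frames and the unitarisation of measurements) is not where the real difficulty lies---that part is dispatched by the logical-equivalence condition together with corollary~\ref{cor:Vablocal}---whereas the locality of the factorisation, which you assume, is where the paper must do its hardest work and where the topological (macroscopic-distance) assumption on each $\Acurvy_t$ is used a second time, beyond the final application of theorem~\ref{thm:BKog}.
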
  
  
  Theorem~\ref{thm:BKbeyondinformal} is stated formally in section~\ref{sec:BKbeyondcodepres}.
 
The paper is structured as follows. In section~\ref{sec:presinfo} we introduce conjugate stabiliser groups, formerly defined in~\cite{aasenMeasurementQuantumCellular2023a}, and elaborate on how they enable the preservation of logical information despite non-trivial projective measurements. We explain why the projection induced by the measurement can be replaced by a unitary transition operator and state one if its representatives. To our knowledge,~\ref{sec:presinfo} constitutes a formal discussion of why logical information can be preserved in dynamical codes that has remained somewhat implicit in previous work. Thus this may be of independent interest for comprehension purposes. In section~\ref{sec:localitytransHH} we introduce the notion of geometric locality in Floquet codes, based on the \emph{locally} conjugate stabiliser groups introduced in~\cite{aasenMeasurementQuantumCellular2023a}. This geometric locality ensures bounded growth of errors within the Floquet sequence and also leads to a unitary transition operator that is of finite depth. The latter is the reason why the BK theorem immediately holds for the previously discussed problem set-up for code-preserving constant-depth unitaries at every time step, which we elaborate on in section~\ref{sec:BKforCodePreserving}. \\
In section~\ref{sec:beyondcodepres} we derive the two general conditions for unitary operations at a specific time step and derive a canonical representative for any unitary that fulfils this condition. Finally, in section~\ref{sec:BKbeyondcodepres} we prove the BK theorem for these generalised unitaries. In the end, in section~\ref{sec:concl} we give a brief discussion and outlook.

 \section{Preservation of information}\label{sec:presinfo}
 
 Here we explain the underlying working principle of many previously introduced stabiliser Floquet codes: Reversible pairs of stabiliser groups. These are pairs of stabiliser groups fulfilling certain anti-commutation relations. A Floquet code is then defined as a sequence of consecutive reversible pairs, where one transitions from one to the next via projective measurements. \newline Reversible pairs of stabiliser groups were formally introduced in in~\cite{aasenMeasurementQuantumCellular2023a}.  
 We define them and discuss how they allow for information preservation in section~\ref{sec:revpairs} and then introduce a unitary operator correctly implementing the dynamics induced by the projective measurements in section~\ref{sec:logicaleffectHH}. Additionally, we discuss how logical operators transform due to the aforementioned projective measurements.

  \subsection{Reversible pairs of stabiliser groups}\label{sec:revpairs}
  We state the definition of a reversible pair of stabiliser groups (taken from~\cite{aasenMeasurementQuantumCellular2023a}) in definition~\ref{def:reversiblepairHH}. For two stabiliser groups $\Acurvy$ and $\Bcurvy$, we use the notation $\Scurvy:=\Acurvy \cap \Bcurvy$ for their intersection (as a set), and $\Acurvybar:=\Acurvy \setminus \Scurvy$ and $\Bcurvybar:=\Bcurvy \setminus \Bcurvybar$. We write $|\Gcurvy|$ for the rank of a stabiliser group $\Gcurvy$.
  \begin{definition}[Reversible pair/Conjugate stabiliser groups]\label{def:reversiblepairHH}
  	Two stabiliser groups $\Acurvy$ and $\Bcurvy$ are called a reversible pair or conjugate if there exist bases (conjugate bases) $\{a_j\} \in \Acurvy \setminus \Scurvy$ and $\{b_j \in \Bcurvy \setminus \Scurvy\}$, such that $a_i$ commutes with all elements of $\{b_j\}$ except for exactly one $b_i\in\{b_j\}$ with which it anti-commutes and vice versa. 
  \end{definition}
  We write $\Acurvy \leftrightarrow \Bcurvy$ for a reversible pair. We specify the signs of the $k$ generators of a stabiliser group $\Gcurvy$ with the help of a vector $\vec{m}_{\Gcurvy} \in \{0,1\}^k$, where the $i$-th generator $g_i$ takes on the eigenvalue $(-1)^{m_{\Gcurvy}^{(i)}}$, with respect to some enumeration of the generators. We denote the corresponding Hilbert space by $\Hil_{\Gcurvy}(\vec{m}_{\Gcurvy})$. We will use the following notation for the projection operator onto the joint eigenspace of a stabiliser group $\Gcurvy$, belonging to the eigenvalues given by $\vec{m}_{\Gcurvy} \in \{0,1\}^k$:
 \begin{equation}\label{eq:PiG}
 	\Pi_{\Gcurvy}(\vec{m}_{\Gcurvy})=	\prod_{j=0}^{k-1} \frac{\mathbb{1}+(-1)^{m^{(j)}_{\Gcurvy}} g_j}{2}.
 \end{equation}
A Floquet code essentially consists of a (possibly periodic) sequence of reversible pairs, i.e.:
\begin{equation}
	\Acurvy_1 \rightarrow \Acurvy_2 \rightarrow \Acurvy_3 \rightarrow \ldots \Acurvy_{\tau},
\end{equation}
where $\Acurvy_{t}\leftrightarrow\Acurvy_{t+1}$. (The sequence may be infinite, as for example is the case for some codes introduced in~\cite{davydovaFloquetCodesParent2023}, but we will assume $\tau$ is finite.) To transition from one stabiliser group to the next, one projects onto the eigenspace of $\Acurvy_{t+1} \setminus (\Acurvy_{t} \cap \Acurvy_{t+1})$. In the following, we will elaborate on why this works and why we purposefully leave out the specification of measurement results. \\

 Take a reversible pair $\Acurvy \leftrightarrow \Bcurvy$ and assume the system is in the joint eigenspace of $\Acurvy$. We now measure/project onto the conjugate basis $\{b_j\}$. We can do so one by one or simultaneously as the measurements commute. We start with some basis element $b$. The update rules for Pauli stabiliser groups demand that all stabilisers in $\Acurvy$ that commute with $b$ remain, while the ones that do not commute with $b$ are removed and replaced by $b$~\cite{hastingsDynamicallyGeneratedLogical2021}. As $\Acurvy$ and $\Bcurvy$ are conjugate, this means that only the one conjugate element $a$ is removed and $b$ added. After repeating this for all operators in $\{b_j\}$, the qubits are in the joint eigenspace of $\Scurvy \cup \Bcurvybar = \Bcurvy$, where $\Bcurvybar := \Bcurvy \setminus \Scurvy$. It is determined by $\vec{m}_{\Bcurvy}$. One thus moves through the stabiliser group sequence up to signs by measuring the respective conjugate bases. 
 
 Note that the projection operator onto the $(-1)^{m^{(i)}_{\Bcurvybar}}$ eigenspace of $\Bcurvybar$, i.e. the one given by~Eqn.~\eqref{eq:PiG} is the same independent of the choice of conjugate basis for $\Bcurvybar$. Thus, the following discussion applies to all conjugate basis choices of $\Acurvybar$ and $\Bcurvybar$ and analogously to all conjugate basis choices for all reversible pairs. We will therefore  always choose the most straight-forward one.

\subsection{Preservation of information}

We will now elaborate on how information is preserved despite the seemingly destructive projective measurements in the Floquet sequence. This is implicitly discussed in the majority of previous work on Floquet codes, e.g.~\cite{hastingsDynamicallyGeneratedLogical2021,davydovaFloquetCodesParent2023,davydovaQuantumComputationDynamic2024}, but this explicit discussion may still be useful for readers unaccustomed to the topic. \\
 
Take any reversible pair $\Acurvy \leftrightarrow \Bcurvy$.   Due to the anti-commutation relations between $\Acurvybar$ and $\Bcurvybar$, the following proposition holds: 

\begin{proposition}\label{prop:pbpapbidHH}
	Let $\Acurvy$ and $\Bcurvy$ be two conjugate stabiliser groups. Let $|\Acurvybar|=|\Bcurvybar|=n_m$. Then, for any $\vec{m}_{\Acurvybar} \in \mathbb{F}_2^{n_m}$ and any $\vec{m}_{\Bcurvybar} \in \mathbb{F}_2^{n_m}$:
	
	\begin{align}\label{eq:pbpapbidHH}
		\Pi_{\Bcurvybar}(\vec{m}_{\Bcurvybar}) \Pi_{\overline{\Acurvy}}(\vec{m}_{\overline{\Acurvy}}) \Pi_{\Bcurvybar}(\vec{m}_{\Bcurvybar}) = \frac{1}{2^{n_m}} \Pi_{\overline{\Bcurvy}}(\vec{m}_{\overline{\Bcurvy}}),
	\end{align}
	and:
	\begin{align}\label{eq:papbpaidHH}
		\Pi_{\Acurvybar}(\vec{m}_{\Acurvybar}) \Pi_{\Bcurvybar}(\vec{m}_{\Bcurvybar}) \Pi_{\Acurvybar}(\vec{m}_{\Acurvybar}) = \frac{1}{2^{n_m}} \Pi_{\Acurvybar}(\vec{m}_{\Acurvybar}).
	\end{align}
	
\end{proposition}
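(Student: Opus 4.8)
The plan is to reduce both operator identities to a single computation on one anticommuting pair, exploiting the fact that the conjugate bases completely decouple the anticommutation structure. First I would fix a pair of conjugate bases $\{a_i\}$ for $\Acurvybar$ and $\{b_i\}$ for $\Bcurvybar$ (legitimate, since the text notes that the projector in Eqn.~\eqref{eq:PiG} is independent of the chosen conjugate basis), and write each projector as a product of mutually commuting single-generator projectors,
\begin{equation*}
	\Pi_{\Acurvybar}(\vec{m}_{\Acurvybar}) = \prod_{i=1}^{n_m} P_i^a, \qquad \Pi_{\Bcurvybar}(\vec{m}_{\Bcurvybar}) = \prod_{i=1}^{n_m} P_i^b,
\end{equation*}
where $P_i^a = \tfrac{1}{2}\bigl(\mathbb{1} + (-1)^{m^{(i)}_{\Acurvybar}} a_i\bigr)$ and analogously for $P_i^b$.

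The crux is a single-pair lemma: for two anticommuting Hermitian involutions $a,b$ and any sign choices, $P^b P^a P^b = \tfrac{1}{2} P^b$. I would establish this from the observation that $a$ intertwines the two eigenprojectors of $b$, namely $a P^b = (\mathbb{1}-P^b)\,a$, which follows directly from $ab=-ba$. Sandwiching then gives $P^b a P^b = P^b(\mathbb{1}-P^b)\,a = 0$, so the only surviving term of $P^b P^a P^b = \tfrac{1}{2}\bigl((P^b)^2 + (-1)^{m}\,P^b a P^b\bigr)$ is $\tfrac{1}{2}(P^b)^2 = \tfrac{1}{2} P^b$; note that this is independent of the signs, confirming that the result holds for every $\vec{m}_{\Acurvybar},\vec{m}_{\Bcurvybar}$.

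Next I would lift this to the full product. Because $a_i$ anticommutes only with $b_i$ and commutes with every $b_j$ for $j\neq i$, and all generators within a single stabiliser group commute, the subalgebras $\mathcal{M}_i := \langle a_i, b_i\rangle$ mutually commute. Hence in
\begin{equation*}
	\Pi_{\Bcurvybar}\,\Pi_{\Acurvybar}\,\Pi_{\Bcurvybar} = \Bigl(\prod_i P_i^b\Bigr)\Bigl(\prod_j P_j^a\Bigr)\Bigl(\prod_k P_k^b\Bigr)
\end{equation*}
I can regroup the factors by index without changing their relative order inside each $\mathcal{M}_i$, collecting exactly $P_i^b P_i^a P_i^b$ for each $i$. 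Applying the single-pair lemma to each factor then yields $\prod_i \tfrac{1}{2} P_i^b = 2^{-n_m}\,\Pi_{\Bcurvybar}$, which is Eqn.~\eqref{eq:pbpapbidHH}; the second identity Eqn.~\eqref{eq:papbpaidHH} follows identically after exchanging the roles of $\Acurvybar$ and $\Bcurvybar$.

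I expect the only delicate point to be the regrouping step: one must justify that the mutual commutativity of the $\mathcal{M}_i$ allows the three interleaved products to be reorganised into a single product over $i$ of the blocks $P_i^b P_i^a P_i^b$, which is precisely where the full strength of the conjugate-basis anticommutation pattern is invoked. Everything else reduces to the two-line single-pair calculation.
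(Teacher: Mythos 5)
Your proposal is correct and follows essentially the same route as the paper's own proof: both factorise the sandwiched projectors into mutually commuting conjugate-pair blocks and then evaluate each block using the anticommutation of $a_i$ with $b_i$. The only difference is stylistic — you kill the cross term via the intertwining identity $a P^b = (\mathbb{1}-P^b)a$, whereas the paper expands the per-pair product directly and cancels terms — so the two arguments are interchangeable.
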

We prove proposition~\ref{prop:pbpapbidHH} in section~\ref{sec:remainingproofs}. From this it straightforwardly follows that all measurement results $\vec{m}_{\overline{\Bcurvy}} \in \mathbb{F}_2^{n_m}$ are equally as likely:

\begin{corollary}\label{cor:probsameHH}
	Let $\Acurvy \leftrightarrow \Bcurvy$ be a reversible pair, as defined in definition~\ref{def:reversiblepairHH}. Let $n_m:=|\Acurvybar|=|\Bcurvybar|$. Then, for any $\vec{m}_{\Acurvy} \in \mathbb{F}_2^{n_m}$ and any $\vec{m}_{\Bcurvybar} \in \mathbb{F}_2^{n_m}$:
	\begin{equation}\label{eq:HHprobsame}
		\bra{\Psi} \Pi_{\Bcurvybar}(\vec{m}_{\Bcurvybar}) \ket{\Psi}=1/2^{n_m},
	\end{equation}
for all $\ket{\Psi} \in \Hil_{\Acurvy}(\vec{m}_{\Acurvy})$.
 \end{corollary}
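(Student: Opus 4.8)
The plan is to rewrite the expectation value as a matrix element of the triple product $\Pi_{\Acurvybar}\Pi_{\Bcurvybar}\Pi_{\Acurvybar}$ and then invoke Eqn.~\eqref{eq:papbpaidHH} directly. The key observation is that any $\ket{\Psi} \in \Hil_{\Acurvy}(\vec{m}_{\Acurvy})$ is a joint eigenstate of the full group $\Acurvy = \Scurvy \cup \Acurvybar$, hence in particular of the subset $\Acurvybar$. Writing $\vec{m}_{\Acurvybar}$ for the restriction of $\vec{m}_{\Acurvy}$ to the generators lying in $\Acurvybar$, this means $\Pi_{\Acurvybar}(\vec{m}_{\Acurvybar})\ket{\Psi} = \ket{\Psi}$, since $\ket{\Psi}$ lies inside the corresponding $\Acurvybar$-eigenspace.

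First I would insert these projectors on both sides of $\Pi_{\Bcurvybar}(\vec{m}_{\Bcurvybar})$, which is harmless because they act as the identity on $\ket{\Psi}$, obtaining
\[
	\bra{\Psi}\Pi_{\Bcurvybar}(\vec{m}_{\Bcurvybar})\ket{\Psi}
	= \bra{\Psi}\Pi_{\Acurvybar}(\vec{m}_{\Acurvybar})\,\Pi_{\Bcurvybar}(\vec{m}_{\Bcurvybar})\,\Pi_{\Acurvybar}(\vec{m}_{\Acurvybar})\ket{\Psi}.
\]
Next I would apply Eqn.~\eqref{eq:papbpaidHH} to collapse the sandwiched product to $\tfrac{1}{2^{n_m}}\Pi_{\Acurvybar}(\vec{m}_{\Acurvybar})$, and then use $\Pi_{\Acurvybar}(\vec{m}_{\Acurvybar})\ket{\Psi}=\ket{\Psi}$ once more, together with the normalisation $\braket{\Psi}{\Psi}=1$, to arrive at $\tfrac{1}{2^{n_m}}$. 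The decisive feature is that the right-hand side of Eqn.~\eqref{eq:papbpaidHH} does not depend on $\vec{m}_{\Bcurvybar}$, so the same value is produced for every measurement outcome; this is precisely the equiprobability asserted by the corollary.

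Since the analytic work is already carried out in Proposition~\ref{prop:pbpapbidHH}, I expect no genuine obstacle. The only point demanding care is the bookkeeping of sign vectors: one must confirm that $\ket{\Psi}$ is stabilised exactly by $\Pi_{\Acurvybar}(\vec{m}_{\Acurvybar})$ for the sign pattern inherited from $\vec{m}_{\Acurvy}$, so that the inserted projectors indeed fix $\ket{\Psi}$ rather than annihilate it. Once this identification is in place, the claim follows in two lines. One could alternatively route the argument through Eqn.~\eqref{eq:pbpapbidHH}, but the $\Acurvybar$-sandwiched identity \eqref{eq:papbpaidHH} fits the present setting more naturally, as $\ket{\Psi}$ is an $\Acurvy$-eigenstate.
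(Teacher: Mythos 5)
Your proposal is correct and matches the paper's intended argument: the paper states that Corollary~\ref{cor:probsameHH} follows straightforwardly from Proposition~\ref{prop:pbpapbidHH}, and the route you spell out --- inserting $\Pi_{\Acurvybar}(\vec{m}_{\Acurvybar})$ on both sides of $\Pi_{\Bcurvybar}(\vec{m}_{\Bcurvybar})$ (valid since $\ket{\Psi}$ is fixed by that projector) and collapsing the triple product via Eqn.~\eqref{eq:papbpaidHH} --- is exactly the same insertion trick the paper itself uses later when computing expectation values of logical operators (Eqns.~\eqref{eq:expectHH2}--\eqref{eq:expectHH3}). Your care about the sign bookkeeping, i.e.\ that $\vec{m}_{\Acurvybar}$ must be the restriction of $\vec{m}_{\Acurvy}$ so the inserted projectors fix $\ket{\Psi}$ rather than annihilate it, is the right point to check and is handled correctly.
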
 
 Corollary~\ref{cor:probsameHH} is the reason for why logical information is preserved. To be more precise, we show lemma~\ref{lem:orthpres} below, in which we already leave out the specification of $\vec{m}_{\Acurvybar}$ and $\vec{m}_{\Bcurvybar}$ from our notation.
 \begin{lemma}\label{lem:orthpres}
 	For any $\ket{\Psi},\ket{\Phi} \in \Hil_{\Acurvy}$, the projection $\Pi_{\Bcurvybar}$ preserves the scalar product up to the normalisation factor $\frac{1}{2^{n_m}}$:
 \begin{equation}\label{eq:scalarpres}
 	\bra{\Phi} \ket{\Psi} = 2^{n_m}\bra{\Phi}\Pi_{\Bcurvybar} \Pi_{\Bcurvybar} \ket{\Psi}
 \end{equation}
 and can thus (up to the normalisation factor) be replaced by a unitary $K_{\Acurvy,\Bcurvy}: \Hil_{\Acurvy} \rightarrow \Hil_{\Bcurvy}$, i.e.:
 \begin{equation}
 	2^{\frac{n_m}{2}}\Pi_{\Bcurvy}\Pi_{\Acurvy}=K_{\Acurvy,\Bcurvy}\Pi_{\Acurvy}.\label{eq:Kabfirst}
 \end{equation}
 \end{lemma}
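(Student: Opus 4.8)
The plan is to derive both claims directly from Proposition~\ref{prop:pbpapbidHH}. The structural fact I would exploit throughout is that, since $\Acurvy$ is abelian and $\Scurvy=\Acurvy\cap\Bcurvy\subseteq\Acurvy$, the projector factors as $\Pi_{\Acurvy}=\Pi_{\Acurvybar}\Pi_{\Scurvy}$ with commuting factors, and likewise $\Pi_{\Bcurvy}=\Pi_{\Bcurvybar}\Pi_{\Scurvy}$. Because every element of $\Scurvy$ also lies in the abelian group $\Bcurvy$, $\Pi_{\Scurvy}$ commutes with $\Pi_{\Bcurvybar}$ as well. These commutation relations let me strip off the shared part $\Scurvy$ and reduce every product to the barred projectors that appear in Proposition~\ref{prop:pbpapbidHH}.

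For the scalar-product identity~\eqref{eq:scalarpres} I would first use idempotency, $\Pi_{\Bcurvybar}\Pi_{\Bcurvybar}=\Pi_{\Bcurvybar}$, so that it suffices to evaluate $\bra{\Phi}\Pi_{\Bcurvybar}\ket{\Psi}$. Since $\ket{\Psi},\ket{\Phi}\in\Hil_{\Acurvy}$ are fixed by $\Pi_{\Acurvy}$, I would insert a copy of $\Pi_{\Acurvy}$ on each side and then factor $\Pi_{\Acurvy}=\Pi_{\Acurvybar}\Pi_{\Scurvy}$; commuting the two copies of $\Pi_{\Scurvy}$ inward and using $\Pi_{\Scurvy}^2=\Pi_{\Scurvy}$ collapses the sandwich to $\Pi_{\Acurvybar}\Pi_{\Bcurvybar}\Pi_{\Acurvybar}\,\Pi_{\Scurvy}$. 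Equation~\eqref{eq:papbpaidHH} then gives $\Pi_{\Acurvybar}\Pi_{\Bcurvybar}\Pi_{\Acurvybar}=2^{-n_m}\Pi_{\Acurvybar}$, so the operator equals $2^{-n_m}\Pi_{\Acurvybar}\Pi_{\Scurvy}=2^{-n_m}\Pi_{\Acurvy}$. Taking the matrix element and using $\Pi_{\Acurvy}\ket{\Psi}=\ket{\Psi}$ yields $\bra{\Phi}\Pi_{\Bcurvybar}\ket{\Psi}=2^{-n_m}\bra{\Phi}\ket{\Psi}$, which is exactly~\eqref{eq:scalarpres}.

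For the unitary I would set $V:=2^{n_m/2}\Pi_{\Bcurvy}\Pi_{\Acurvy}$ and view it as a map $\Hil_{\Acurvy}\to\Hil_{\Bcurvy}$. On $\Hil_{\Acurvy}$ it reduces to $2^{n_m/2}\Pi_{\Bcurvybar}$, using $\Pi_{\Bcurvy}\Pi_{\Acurvy}=\Pi_{\Bcurvybar}\Pi_{\Scurvy}\Pi_{\Acurvy}=\Pi_{\Bcurvybar}\Pi_{\Acurvy}$, and its image lies in the range of $\Pi_{\Bcurvy}$, i.e. in $\Hil_{\Bcurvy}$. The computation $\bra{\Phi}V^{\dagger}V\ket{\Psi}=2^{n_m}\bra{\Phi}\Pi_{\Bcurvybar}\Pi_{\Bcurvybar}\ket{\Psi}=\bra{\Phi}\ket{\Psi}$, via the identity just proved, shows that $V$ is an isometry from $\Hil_{\Acurvy}$ into $\Hil_{\Bcurvy}$. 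Finally, since $|\Acurvy|=|\Scurvy|+n_m=|\Bcurvy|$, the two codespaces have equal dimension, so an inner-product-preserving (hence injective) map between them is automatically surjective; $V$ is therefore unitary, and I would define $K_{\Acurvy,\Bcurvy}:=V$, for which~\eqref{eq:Kabfirst} holds by construction.

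The genuinely load-bearing step, as opposed to the algebraic bookkeeping, is the last one: upgrading the isometry to a \emph{unitary} rather than a mere norm-preserving embedding. This rests entirely on the dimension count $|\Acurvy|=|\Bcurvy|$, which is exactly where the reversible-pair structure (equal numbers of conjugate generators $|\Acurvybar|=|\Bcurvybar|=n_m$ over a common $\Scurvy$) enters; without matched ranks one would obtain an isometry onto a proper subspace only. I would also take care that every reduction from $\Pi_{\Acurvy},\Pi_{\Bcurvy}$ to $\Pi_{\Acurvybar},\Pi_{\Bcurvybar}$ uses only commutativity within the abelian groups and the inclusion $\Scurvy\subseteq\Acurvy\cap\Bcurvy$, so that the construction is manifestly independent of the conjugate-basis choice, as already noted after Eq.~\eqref{eq:PiG}.
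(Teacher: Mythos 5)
Your proposal is correct and follows essentially the same route as the paper's own (very terse) proof: Eq.~\eqref{eq:scalarpres} is derived from Proposition~\ref{prop:pbpapbidHH}, and the unitary is obtained by observing that a scalar-product-preserving linear map between finite-dimensional Hilbert spaces of equal dimension is unitary. You merely fill in the algebraic details (the $\Pi_{\Acurvy}=\Pi_{\Acurvybar}\Pi_{\Scurvy}$ factorisation and the rank count $|\Acurvy|=|\Scurvy|+n_m=|\Bcurvy|$) that the paper leaves implicit, which is a faithful expansion rather than a different argument.
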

 \begin{proof}
 Eqn.\eqref{eq:scalarpres} follows from proposition~\ref{prop:pbpapbidHH}. A linear map from a finite Hilbert space to another finite Hilbert space of the same dimension that preserves the scalar product is a unitary. \footnote{Technically, the left and right hand side of Eqn.~\eqref{eq:Kabfirst} are defined on different domains and co-domains. There, strictly speaking, is a difference between an operator with domain $\Hil_{\Acurvy}$ and co-domain $\Hil_{\Bcurvy}$ and one that acts on the entire Hilbert space $\Hil^{\otimes n}$, but maps the linear subspace $\Hil_{\Acurvy}$ to the linear subspace $\Hil_{\Bcurvy}$, while mapping the orthogonal complement $\Hil_{\Acurvy}^{\perp}$ to zero. However, we will treat these two cases as interchangeable in the following, as it will be clear what we mean.}
 \end{proof}
 This implies that, due to the anti-commutation relations, the measurement does not induce an irreversible action. It is precisely because we do not gain any actual information (as the measurement results are completely random) that the projection is proportional to a unitary. One can view these transitions as a specific form of code switching. Within the framework of stabiliser tableaus~\cite{aaronsonImprovedSimulationStabilizer2004}, the measurements amount to measuring elements of the destabiliser. We would like to point out that transitioning between stabiliser groups in this way has previously been discussed in~\cite{colladayRewiringStabilizerCodes2018} without calling it ``dynamical codes" or ``Floquet codes". \\
 
 As all measurement results $\vec{m}_{\Bcurvybar}$ are equally as likely and independent from $\vec{m}_{\Acurvybar}$, we will leave out the specification of either from now on, assume both are fixed and often not speak of ``measurements" but ``projections". All our results will hold for all possible measurement results.

Another way to see that logical information is preserved is by looking at the logical Pauli operators of each individual code in the Floquet sequence. Recall that the logical Pauli operators of a stabiliser code are determined by the non-trivial elements of its normaliser $\mathcal{N}(\Acurvy)$ in the $n$-qubit Pauli group $\mathcal{P}(n)$ which is defined as:
\begin{equation}\label{eq:normaliser}
	\mathcal{N}(\mathcal{A})=\{P \in \mathcal{P}(n): Ps =sP \hspace{1em} \forall\, s \in \Acurvy\},
\end{equation}
i.e. the Pauli operators that commute element-wise with $\Acurvy$. For any $[n,k]$-stabiliser code the quotient group $\mathcal{N}(\Acurvy) / \Acurvy$ is isomorphic as an algebra to $\mathcal{P}(k)$~\cite{gottesmanIntroductionQuantumError2009}. An element of $\mathcal{N}(\Acurvy)$ may commute with all elements of $\Bcurvybar$. It would thus be unaffected by the measurements. It may, however, also anti-commute with some operators $b \in \Bcurvybar$. Thus, one might think we obtain some information about its value through the measurement. However, that is not the case. To explain why, we restate proposition 2.1(d) from~\cite{aasenMeasurementQuantumCellular2023a}:

\begin{proposition}[Shared logical operators]\label{prop:LsharedHH}
	Let $\Acurvy \leftrightarrow \Bcurvy$ be a reversible pair. Then, there exists a group of Pauli operators $L$, such that $L\Acurvy$ is the set of all logical Pauli operators for $\Bcurvy$ and $L\Bcurvy$ is the set of all logical Pauli operators for $\Bcurvy$.
\end{proposition}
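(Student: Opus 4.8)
The plan is to exhibit $L$ explicitly as the set of Pauli operators commuting with \emph{both} stabiliser groups, i.e. $L := \mathcal{N}(\Acurvy) \cap \mathcal{N}(\Bcurvy)$, and then verify the two equalities $L\Acurvy = \mathcal{N}(\Acurvy)$ and $L\Bcurvy = \mathcal{N}(\Bcurvy)$, reading ``the set of all logical Pauli operators for $\Gcurvy$'' as the full normaliser $\mathcal{N}(\Gcurvy)$ (consistent with the isomorphism $\mathcal{N}(\Gcurvy)/\Gcurvy \cong \mathcal{P}(k)$ recalled above). Note that $L$ is a group as an intersection of two groups, and it contains $\Scurvy$. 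Because the defining anti-commutation relations in Definition~\ref{def:reversiblepairHH} are symmetric under exchanging $\Acurvy$ and $\Bcurvy$, it suffices to prove the first equality; the second then follows by relabelling.

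One inclusion is immediate: since $\Acurvy$ is abelian we have $\Acurvy \subseteq \mathcal{N}(\Acurvy)$, and $L \subseteq \mathcal{N}(\Acurvy)$ by construction, so $L\Acurvy \subseteq \mathcal{N}(\Acurvy)$ because $\mathcal{N}(\Acurvy)$ is a group. The content is in the reverse inclusion $\mathcal{N}(\Acurvy) \subseteq L\Acurvy$, and this is where I would use the conjugate-basis structure. Fix conjugate bases $\{a_j\}$ of $\Acurvybar$ and $\{b_j\}$ of $\Bcurvybar$ as in Definition~\ref{def:reversiblepairHH}, so that $a_i$ anti-commutes with $b_i$ and commutes with every $b_j$ for $j \neq i$. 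Given $P \in \mathcal{N}(\Acurvy)$, it already commutes with all of $\Acurvy$, hence with $\Scurvy$ and with each $a_j$; its only obstruction to lying in $\mathcal{N}(\Bcurvy)$ is that it may anti-commute with some of the $b_i$. I would then set $I := \{\, i : P \text{ anti-commutes with } b_i \,\}$ and define the ``fix-up'' $P' := P \prod_{i \in I} a_i$.

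The key step is to check that $P'$ is the desired element of $L$. Multiplying $P$ by elements of $\Acurvy$ keeps it in $\mathcal{N}(\Acurvy)$, so $P' \in \mathcal{N}(\Acurvy)$ and in particular $P'$ still commutes with $\Scurvy$. For the $\Bcurvybar$ generators one uses that each $a_i$ flips the commutation with $b_i$ alone: for $j \in I$ the single factor $a_j$ supplies exactly one anti-commutation, cancelling that of $P$ with $b_j$, while for $j \notin I$ every factor $a_i$ commutes with $b_j$ and $P$ already did. Hence $P'$ commutes with all of $\Bcurvybar$ and with $\Scurvy$, i.e. with $\Bcurvy$, giving $P' \in \mathcal{N}(\Acurvy) \cap \mathcal{N}(\Bcurvy) = L$. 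Since the $a_i$ are commuting Hermitian Pauli stabilisers that square to the identity, $P = P' \prod_{i \in I} a_i$ with $\prod_{i \in I} a_i \in \Acurvy$, so $P \in L\Acurvy$, completing the reverse inclusion.

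I expect the only delicate point to be the bookkeeping of commutation relations in the fix-up step, namely verifying that $\prod_{i \in I} a_i$ corrects precisely the unwanted anti-commutations with $\Bcurvybar$ without disturbing commutation with $\Acurvy$ or $\Scurvy$ — which is exactly what the one-to-one anti-commutation pairing of the conjugate bases guarantees, and which is the whole reason for adopting Definition~\ref{def:reversiblepairHH}. Phases are harmless throughout, as $L$, $\Acurvy$, $\Bcurvy$ and the normalisers are taken as groups of Pauli operators closed under signs, so I would not track them beyond noting $(\prod_{i \in I} a_i)^{-1} \in \Acurvy$.
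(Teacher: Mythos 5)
Your proof is correct, but note that the paper itself contains no proof of this proposition to compare against: it is quoted (as Proposition 2.1(d)) from the Aasen et al. reference, and the appendix of remaining proofs does not treat it. Judged on its own merits, your argument goes through. Taking $L := \mathcal{N}(\Acurvy)\cap\mathcal{N}(\Bcurvy)$ and repairing a given $P\in\mathcal{N}(\Acurvy)$ by $P' := P\prod_{i\in I}a_i$, where $I$ indexes the $b_i$ anti-commuting with $P$, is exactly the right use of the one-to-one anti-commutation pairing: each $a_i$ flips the sign against $b_i$ only, so $P'$ commutes with $\Scurvy$ and every $b_j$, hence with all of $\Bcurvy$, while multiplication by elements of the abelian group $\Acurvy$ keeps $P'$ in $\mathcal{N}(\Acurvy)$; since $(\prod_{i\in I}a_i)^2=\mathbb{1}$, one recovers $P\in L\Acurvy$, and the symmetric argument gives $L\Bcurvy=\mathcal{N}(\Bcurvy)$. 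This is precisely the non-local analogue of the fix-up argument the paper \emph{does} write out for Proposition~\ref{prop:Psurvives}, and it simultaneously establishes the claim the paper states immediately after the proposition (every $Q\in\mathcal{N}(\Acurvy)$ has a representative $Q'=sQ\in\mathcal{N}(\Bcurvy)$ with $s\in\Acurvy$). Two small caveats, neither fatal: first, your step ``commutes with $\Scurvy$ and with each $b_j$ implies commutes with $\Bcurvy$'' tacitly uses that $\{b_j\}$ together with $\Scurvy$ generates $\Bcurvy$, which is the intended reading of ``bases'' in Definition~\ref{def:reversiblepairHH} (made explicit only in the local version, Definition~\ref{def:locallygensubgroups}) and is worth flagging; second, the proposition as printed has a typo (``$L\Acurvy$ is the set of all logical Pauli operators for $\Bcurvy$'' should read ``for $\Acurvy$''), which your reading $L\Acurvy=\mathcal{N}(\Acurvy)$, $L\Bcurvy=\mathcal{N}(\Bcurvy)$ silently and correctly repairs; indeed the full-normaliser reading is the only one consistent with $L$ being a group containing the identity.
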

Proposition~\ref{prop:LsharedHH} tells us that for any $Q \in \mathcal{N}(\Acurvy)$, there exists an equivalent $Q'=sQ$, where $s$ is some element of $\Acurvy$, such that $Q' \in \mathcal{N}(\Bcurvy)$.
For any $\ket{\Psi} \in \Hil_{\Acurvy}$ the expectation value of $Q$ will be the same as the one of $Q'=sQ$, because $s$ acts trivially on $\ket{\Psi}$. The expectation value of $Q'$ (or equivalently of $Q$) after the measurement is $2^{n_m}	\bra{\Psi} \Pi_{\Bcurvybar}Q'\Pi_{\Bcurvybar}\ket{\Psi}$. (We already normalised the resulting state after the measurement.)
However, due to proposition~\ref{prop:pbpapbidHH} and the fact that $Q'$ commutes with all of $\Bcurvy$ (and thus also $\Pi_{\Bcurvybar}$) and $\Acurvy$ we can write:
\begin{align}
	2^{n_m}	\bra{\Psi} \Pi_{\Bcurvybar}Q'\Pi_{\Bcurvybar}\ket{\Psi}&=	2^{n_m}	\bra{\Psi} \Pi_{\Bcurvybar} Q'\ket{\Psi}\\
	&=2^{n_m}	\bra{\Psi} \Pi_{\Acurvybar} \Pi_{\Bcurvybar}  Q'\Pi_{\Acurvybar} \ket{\Psi} \label{eq:expectHH2} \\
		&= 2^{n_m}	\bra{\Psi} \Pi_{\Acurvybar} \Pi_{\Bcurvybar}    \Pi_{\Acurvybar}Q'\ket{\Psi} \label{eq:expectHH3} \\
	&= 	\bra{\Psi} \Pi_{\Acurvybar}  Q'\ket{\Psi} \\
 	&=\bra{\Psi}   Q'\ket{\Psi}\\
	&=\bra{\Psi}   Q\ket{\Psi},
\end{align}
where we inserted the projection operator $\Pi_{\Acurvybar}$ in Eqn.~\eqref{eq:expectHH2} on both sides and used proposition~\ref{prop:pbpapbidHH} in Eqn.~\eqref{eq:expectHH3}.
In summary, for any logical Pauli operator $Q$ in $\mathcal{N}(\Acurvy)$, there exists a representative $Q'$ that is also an element of $\mathcal{N}(\Bcurvy)$ and whose expectation value remains the same after the projection onto the eigenspace of $\Bcurvybar$. Thus, no information is gained about any of the logical Pauli operators.

  \subsection{Logical effect of the Floquet projections} \label{sec:logicaleffectHH}
  
In this section we formalise the unitary evolution induced by the projection operator and discuss how it transforms the logical operators of the codespace. We formally define the unitary operator that, up to the normalisation factor, acts equivalently on a $\ket{\Psi} \in \Hil_{\Acurvy}$ as the projection operator $\Pi_{\Bcurvybar}$, where $\Acurvy \leftrightarrow \Bcurvy$. We call this operator the ``Floquet transition operator" or just the ``transition operator". We then discuss how this operator acts on the logical operators of $\Hil_{\Acurvy}$.

 \subsubsection{Floquet transition operator}  \label{sec:floquettransitionop}
 
 As shown in lemma~\ref{lem:orthpres}, the projection $\Pi_{\Bcurvybar}$ can be replaced by the Floquet transition operator:
 
 \begin{definition}[The transitions operator $K_{\Acurvy,\Bcurvy}$]\label{def:KdefHH}
 	Let $\Acurvy \leftrightarrow \Bcurvy$ be a reversible pair and let the system be in the state $\ket{\Psi} \in \Hil_{\Acurvy}$. Now project via $\Pi_{\Bcurvybar}$. The unitary operator $K_{\Acurvy,\Bcurvy}$ describes the effect of $\Pi_{\Bcurvybar}$:  
 	\begin{equation}\label{eq:HHK}
 		K_{\Acurvy,\Bcurvy}: \Hil_{\Acurvy} \rightarrow \Hil_{\Bcurvy}: \ket{\Psi} \mapsto 2^{n_m/2} \Pi_{\Bcurvybar} \ket{\Psi}.
 	\end{equation}
 \end{definition}
The transition operator generally has a non-trivial effect on the codespace that depends on the encoding. For a periodic Floquet codes, e.g. $\Acurvy \rightarrow \Bcurvy \rightarrow \Ccurvy \rightarrow \Acurvy$, one can obtain the collective effect of the transition operators by following the evolution of fixed representatives of logical operators in $\Acurvy$. In the case of Floquet codes based on ayon condensation, the logical effect is given by the automorphism of the anyon theory that the sequence of measurements (condensations) induces~\cite{kesselringAnyonCondensationColor2024,davydovaQuantumComputationDynamic2024}.
 There is an easy expression for the inverse of $K_{\Acurvy,\Bcurvy}$, which we prove in appendix (section~\ref{sec:remainingproofs}).
 \begin{lemma} \label{lem:KunitaryHH}
 	$K_{\Bcurvy,\Acurvy}: \Hil_{\Bcurvy} \rightarrow \Hil_{\Acurvy}$ is the inverse of $K_{\Acurvy,\Bcurvy}$, i.e. $K_{\Bcurvy,\Acurvy}K_{\Acurvy,\Bcurvy}= \mathbb{1}_{\Acurvy}$, where $\mathbb{1}_{\Acurvy}$ denotes the identity on $\Hil_{\Acurvy}$.
 \end{lemma}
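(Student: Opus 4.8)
The plan is to show that $K_{\Bcurvy,\Acurvy}$ is a genuine left inverse of $K_{\Acurvy,\Bcurvy}$ by composing the two maps and reducing the product of projectors to the identity on $\Hil_\Acurvy$ using proposition~\ref{prop:pbpapbidHH}. Concretely, I would start from the definitions in Eqn.~\eqref{eq:HHK}: for any $\ket{\Psi}\in\Hil_\Acurvy$ we have $K_{\Acurvy,\Bcurvy}\ket{\Psi}=2^{n_m/2}\Pi_{\Bcurvybar}\ket{\Psi}$, which lands in $\Hil_\Bcurvy$, and then applying $K_{\Bcurvy,\Acurvy}$ gives $2^{n_m/2}\Pi_{\Acurvybar}$ acting on that vector. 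So the composite is $K_{\Bcurvy,\Acurvy}K_{\Acurvy,\Bcurvy}\ket{\Psi}=2^{n_m}\Pi_{\Acurvybar}\Pi_{\Bcurvybar}\ket{\Psi}$.

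The key step is then to insert a third projector on the left so that proposition~\ref{prop:pbpapbidHH} applies directly. Since $\ket{\Psi}\in\Hil_\Acurvy$, it already satisfies $\Pi_{\Acurvybar}\ket{\Psi}=\ket{\Psi}$, so I can freely prepend $\Pi_{\Acurvybar}$ and write
\begin{equation}
K_{\Bcurvy,\Acurvy}K_{\Acurvy,\Bcurvy}\ket{\Psi}=2^{n_m}\Pi_{\Acurvybar}\Pi_{\Bcurvybar}\Pi_{\Acurvybar}\ket{\Psi}.
\end{equation}
Now Eqn.~\eqref{eq:papbpaidHH} of proposition~\ref{prop:pbpapbidHH} says $\Pi_{\Acurvybar}\Pi_{\Bcurvybar}\Pi_{\Acurvybar}=\tfrac{1}{2^{n_m}}\Pi_{\Acurvybar}$, so the factors of $2^{n_m}$ cancel and the expression collapses to $\Pi_{\Acurvybar}\ket{\Psi}=\ket{\Psi}$. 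Hence $K_{\Bcurvy,\Acurvy}K_{\Acurvy,\Bcurvy}=\mathbb{1}_\Acurvy$.

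The only subtlety I would flag, rather than a genuine obstacle, is the domain/codomain bookkeeping that the paper already acknowledged in the footnote to lemma~\ref{lem:orthpres}: $K_{\Acurvy,\Bcurvy}$ is defined as a map $\Hil_\Acurvy\to\Hil_\Bcurvy$, and one must check that $2^{n_m/2}\Pi_{\Bcurvybar}\ket{\Psi}$ genuinely lies in $\Hil_\Bcurvy$ before feeding it into $K_{\Bcurvy,\Acurvy}$. This is immediate from the transition discussion in section~\ref{sec:revpairs}, where projecting onto the conjugate basis $\{b_j\}$ lands one in the joint eigenspace $\Hil_\Bcurvy$. Given that both $K$'s are unitaries between spaces of equal dimension (lemma~\ref{lem:orthpres}), a one-sided inverse is automatically two-sided, so proving $K_{\Bcurvy,\Acurvy}K_{\Acurvy,\Bcurvy}=\mathbb{1}_\Acurvy$ suffices and the symmetric computation using Eqn.~\eqref{eq:pbpapbidHH} confirms $K_{\Acurvy,\Bcurvy}K_{\Bcurvy,\Acurvy}=\mathbb{1}_\Bcurvy$. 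The whole argument is a two-line projector manipulation; the main thing to get right is simply applying the correct one of the two identities in proposition~\ref{prop:pbpapbidHH} on the correct subspace.
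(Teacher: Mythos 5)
Your proof is correct and takes essentially the same route as the paper's: compose the two transition operators, absorb an extra $\Pi_{\Acurvybar}$ using $\ket{\Psi}\in\Hil_{\Acurvy}$, and collapse $\Pi_{\Acurvybar}\Pi_{\Bcurvybar}\Pi_{\Acurvybar}=2^{-n_m}\Pi_{\Acurvybar}$ via proposition~\ref{prop:pbpapbidHH}. The paper's version differs only in making the domain bookkeeping explicit by carrying $\Pi_{\Bcurvy}=\Pi_{\Scurvy}\Pi_{\Bcurvybar}$ through the computation, which is exactly the subtlety you flagged and resolved.
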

 As discussed in~\cite{aasenMeasurementQuantumCellular2023a} and~\cite{davydovaFloquetCodesParent2023}, we can find a unitary $V_{\Acurvy,\Bcurvy}: \Hil^{\otimes n} \rightarrow \Hil^{\otimes n}$ acting on the \emph{entire} Hilbert space that implements this transformation.
 \begin{proposition}\label{prop:VabimplementsKab}
 	Let $\Acurvy \leftrightarrow \Bcurvy$ be a reversible pair, $\Scurvy= \Acurvy \cap \Bcurvy$ and $\{a_i\}$ and $\{b_j\}$ be the conjugate basis elements, where with respect to some arbitrary enumeration $a_i$ anti-commutes with $b_i$ and commutes with all other $b_j$. The unitary operator $V_{\Acurvy,\Bcurvy}: \Hil^{\otimes n} \rightarrow \Hil^{\otimes n}$ given by:
 	\begin{equation}
 		V_{\Acurvy,\Bcurvy} = \prod_{i} \left(\frac{a_i+b_i}{\sqrt{2}}\right)
 	\end{equation}
 	implements $K_{\Acurvy,\Bcurvy}$ defined in definition~\ref{def:KdefHH}, i.e.:
 	\begin{equation}\label{eq:VabimplementsK}
 			V_{\Acurvy,\Bcurvy} \Pi_{\Acurvy}= K_{\Acurvy,\Bcurvy} \Pi_{\Acurvy}
 	\end{equation}
 	and 
 	\begin{equation}\label{eq:VabimplementsKdag}
 	\Pi_{\Acurvy}	V^{\dag}_{\Acurvy,\Bcurvy} = \Pi_{\Acurvy}K^{\dag}_{\Acurvy,\Bcurvy}. 
 	\end{equation}
 	Thus, $V_{\Acurvy,\Bcurvy}$ maps $\Hil_{\Acurvy}$ to $\Hil_{\Bcurvy}$ and also maps the orthogonal complement $\Hil_{\Acurvy}^{\perp}$ to $\Hil_{\Bcurvy}^{\perp}$.
 \end{proposition}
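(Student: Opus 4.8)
The plan is to reduce everything to the individual conjugate pairs $(a_i,b_i)$ and then reassemble the full product. First I would record the algebra of a single factor $V_i:=(a_i+b_i)/\sqrt2$. Since $a_i,b_i$ are Hermitian with $a_i^2=b_i^2=\mathbb{1}$ and $a_ib_i=-b_ia_i$, the factor $V_i$ is Hermitian and $V_i^2=\tfrac{1}{2}(a_i^2+a_ib_i+b_ia_i+b_i^2)=\mathbb{1}$, so each $V_i$ is a Hermitian involution, hence unitary. Using the conjugate-basis relations (each $a_i$ commutes with every $a_j$ and with every $b_j$ except $b_i$, and symmetrically for $b_i$) together with the fact that $\Acurvy$ and $\Bcurvy$ are abelian, I would check that distinct factors $V_i,V_j$ commute and that each $V_i$ commutes with every element of $\Scurvy$. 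Consequently $V_{\Acurvy,\Bcurvy}=\prod_i V_i$ is a well-defined, order-independent Hermitian unitary on $\Hil^{\otimes n}$.

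Next I would isolate the key single-pair identity. Writing the single-generator projectors $\Pi_{a_i}=(\mathbb{1}+a_i)/2$ and $\Pi_{b_i}=(\mathbb{1}+b_i)/2$ (I take the $+1$ eigenvalue convention for the fixed sign vectors; general signs are absorbed by replacing $a_i\mapsto\pm a_i$), a direct expansion using $a_ib_i=-b_ia_i$ gives
\begin{equation}
V_i\,\Pi_{a_i}=\frac{a_i+b_i}{\sqrt2}\,\frac{\mathbb{1}+a_i}{2}=\sqrt2\,\Pi_{b_i}\Pi_{a_i}.
\end{equation}
This is the crux of the proposition: on the $a_i$-eigenspace the reflection $V_i$ acts exactly as the rescaled projection onto the matching $b_i$-eigenspace. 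Equivalently one checks $V_i a_i V_i=b_i$, so that $V_i$ swaps $a_i\leftrightarrow b_i$ while fixing everything commuting with both; this is the cleanest way to see that the $b_i$-eigenvalue produced equals the $a_i$-eigenvalue, which is what makes the sign sectors line up.

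I would then assemble the full statement. Factor $\Pi_{\Acurvy}=\Pi_{\Scurvy}\prod_i\Pi_{a_i}$ and $\Pi_{\Bcurvybar}=\prod_i\Pi_{b_i}$, all of whose factors mutually commute. Because $V_i$ commutes with $\Pi_{\Scurvy}$ and with every $\Pi_{a_j},\Pi_{b_j}$ for $j\neq i$, I can slide each $V_i$ next to its own $\Pi_{a_i}$ to write $\bigl(\prod_i V_i\bigr)\bigl(\prod_j\Pi_{a_j}\bigr)=\prod_i\bigl(V_i\Pi_{a_i}\bigr)$, apply the single-pair identity pair by pair picking up one factor of $\sqrt2$ each of the $n_m$ times, and regroup the leftover $\Pi_{b_i}$ (which commute with every $\Pi_{a_j}$, $i\neq j$) into $\Pi_{\Bcurvybar}$. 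This yields
\begin{equation}
V_{\Acurvy,\Bcurvy}\,\Pi_{\Acurvy}=2^{n_m/2}\,\Pi_{\Bcurvybar}\Pi_{\Acurvy}=K_{\Acurvy,\Bcurvy}\Pi_{\Acurvy},
\end{equation}
where the last equality is Definition~\ref{def:KdefHH} together with $\Pi_{\Bcurvy}\Pi_{\Acurvy}=\Pi_{\Bcurvybar}\Pi_{\Acurvy}$ (using $\Pi_{\Scurvy}\Pi_{\Acurvy}=\Pi_{\Acurvy}$). Equation~\eqref{eq:VabimplementsKdag} then follows immediately by taking the adjoint of this relation and using $\Pi_{\Acurvy}^{\dag}=\Pi_{\Acurvy}$. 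Finally, since $V_{\Acurvy,\Bcurvy}$ is a global unitary that maps $\Hil_{\Acurvy}$ isometrically \emph{onto} $\Hil_{\Bcurvy}$ (as $K_{\Acurvy,\Bcurvy}$ is unitary between them), it must carry the orthogonal complement $\Hil_{\Acurvy}^{\perp}$ onto $\Hil_{\Bcurvy}^{\perp}$.

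I would expect the main obstacle to be organizational rather than conceptual: the bookkeeping needed to justify the pairwise factorization of the product and the subsequent regrouping of projectors, all of which rests on carefully tracking which operators commute and which (only the matched $a_i,b_i$) anticommute. The one genuine subtlety to handle with care is the sign convention, since $V_i$ transfers the $a_i$-eigenvalue to the $b_i$-eigenvalue; one must take the fixed sign vectors $\vec m_{\Acurvy}$ and $\vec m_{\Bcurvybar}$ consistently (equivalently, absorb them into the choice of $a_i,b_i$) so that no stray per-pair signs survive in the final identity.
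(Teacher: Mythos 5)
Your proof is correct and follows essentially the same route as the paper's: both expand $\Pi_{\Acurvy}=\Pi_{\Scurvy}\prod_i\Pi_{a_i}$, apply the per-pair absorption identity $\frac{a_i+b_i}{\sqrt2}\,\frac{\mathbb{1}+a_i}{2}=\frac{\mathbb{1}+b_i}{\sqrt2}\,\frac{\mathbb{1}+a_i}{2}$, and regroup via the commutation relations to obtain $2^{n_m/2}\Pi_{\Bcurvybar}\Pi_{\Acurvy}=K_{\Acurvy,\Bcurvy}\Pi_{\Acurvy}$. Your version merely makes the bookkeeping (unitarity of each factor, pairwise commutation, sign conventions) more explicit than the paper does, and obtains Eqn.~\eqref{eq:VabimplementsKdag} by taking adjoints rather than "analogously"; these are presentational differences, not a different argument.
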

 
 \begin{proof}
 	Inserting $V_{\Acurvy,\Bcurvy}$ into Eqn.~\eqref{eq:VabimplementsK} results in:
 	\begin{align}
 		V_{\Acurvy,\Bcurvy} \Pi_{\Acurvy}&=	V_{\Acurvy,\Bcurvy} \Pi_{\Scurvy}\Pi_{\Acurvybar}\\
&=\prod_{i} \left(\frac{a_i+b_i}{\sqrt{2}}\right) \prod_{j}\left(\frac{\mathbb{1}+a_j}{2}\right)\Pi_{\Scurvy}\\
&=\prod_{i} \left(\frac{\mathbb{1}+b_i}{\sqrt{2}}\right)\left(\prod_{j}\frac{\mathbb{1}+a_j}{2}\right)\Pi_{\Scurvy}\label{eq:Vababsorb}\\
&= 2^{\frac{n_m}{2}}\Pi_{\Bcurvybar}\Pi_{\Acurvy}\label{eq:VabKdef}\\
&=K_{\Acurvy,\Bcurvy} \Pi_{\Acurvy},
 	\end{align}
 	where we absorbed the stabilisers in $\Acurvybar$ into the projection operators in Eqn.~\eqref{eq:Vababsorb}. This is possible as each $b_i$ only anti-commutes with exactly one $a_i$ and thus the operator containing $b_i$ can be pulled through all projection operators containing $a_j \neq a_i$. We inserted the definition of $K_{\Acurvy,\Bcurvy}$ in Eqn.~\eqref{eq:VabKdef}. Eqn.~\eqref{eq:VabimplementsKdag} follows analogously.
 \end{proof}

\subsubsection{Transformation of logical operations}
The BK theorem is a no-go theorem for logical operators on quantum error correcting codes. In this section we define logical operations on stabiliser codes and discuss the transformation they undergo induced by the projection. 

\begin{definition}[Logical operations on stabiliser codes]
	Let $\Acurvy \subseteq \mathcal{P}_n$ define a stabiliser code on $n$ physical qubits. Then $\oalg{\Acurvy}$ are the unitary operators in $U(\Hil^{\otimes n})$ that preserve the codespace:
	\begin{equation}
		\oalg{\Acurvy} := \{U \in U(\Hil^{\otimes n}): \Pi_{\Acurvy} U \Pi_{\Acurvy}=U\Pi_{\Acurvy}\}.
	\end{equation}.
\end{definition}
The set $\oalg{\Acurvy}$ is the set of logical operations on a QEC that is implicitly or explicitly referenced in most of the work considering the dynamics of QEC, such as~\cite{eastinRestrictionsTransversalEncoded2009,bravyiClassificationTopologicallyProtected2013}. More explicitly, every $U \in \oalg{\Acurvy}$ can be written as the orthogonal decomposition:
\begin{equation} \label{eq:Udecomp}
	U=U_{\Acurvy} \oplus U^{\perp}_{\Acurvy},
\end{equation}
where $U_{\Acurvy}=\Pi_{\Acurvy} U_{\Acurvy}\Pi_{\Acurvy}$ is the part of $U$ acting on the codespace and $U^{\perp}$ is its orthogonal complement. 
For each $U  \in \oalg{\Acurvy}$, the projection $\Pi_{\Bcurvybar}$ induces the transformation of the part $U_{\Acurvy}$ in Eqn.~\eqref{eq:Udecomp}, i.e. the map:
\begin{equation}
 U_{\Acurvy} \mapsto	K_{\Acurvy,\Bcurvy} U_{\Acurvy} K_{\Acurvy,\Bcurvy}^{\dag}. 
\end{equation}
 One can now obtain all representatives of this logical operation by adding some orthogonal complement $U_{\Bcurvy}^{\perp} \in U(\Hil_{\Bcurvy}^{\perp})$. One such representative can be obtained via $V_{\Acurvy,\Bcurvy}$:
 \begin{align}
 V_{\Acurvy,\Bcurvy} U V^{\dag}_{\Acurvy,\Bcurvy} &= V_{\Acurvy,\Bcurvy} (U_{\Acurvy}+U^{\perp}_{\Acurvy}) V^{\dag}_{\Acurvy,\Bcurvy} \\
 &=K_{\Acurvy,\Bcurvy} U_{\Acurvy}K^{\dag}_{\Acurvy,\Bcurvy} +U_{\Bcurvy}^{\perp},
 \end{align}
 where $U^{\perp}_{\Bcurvy}$ is some operator acting on $\Hil_{\Bcurvy}^{\perp}$ only and we used proposition~\ref{prop:VabimplementsKab}.\\
 
  \paragraph*{A note on error correction:}
 One has to differentiate between the measurements inducing the transition $\Acurvy \rightarrow \Bcurvy$ and the measurements actually collecting syndrome information. The results of the conjugate basis measurements do not reveal any error information. Thus, in previously introduced Floquet codes, they do not only measure the conjugate basis, but additional stabilisers in $\Acurvy \cap \Bcurvy$. This measurement is redundant in the absence of errors, as it is already a shared stabiliser of $\Acurvy$ and $\Bcurvy$. Its measurement is what allows for clever read-out of error syndromes in spacetime, as elaborated in~\cite{hastingsDynamicallyGeneratedLogical2021,davydovaFloquetCodesParent2023}. However, as we here assume error-free dynamics, we will leave out the redundant measurement. We consider only the minimal set of measurements necessary to project onto the eigenspace of $\Bcurvybar$.

 \section{Floquet codes defined by locally conjugate stabiliser groups}\label{sec:localitytransHH}

 In this section we add the notion of locality to conjugate stabiliser groups, following the definitions introduced in~\cite{aasenMeasurementQuantumCellular2023a}. We define Floquet codes based on \emph{locally conjugate stabiliser groups}, briefly comment on how this definition ensures a locally bounded growth of errors and explain why for locally conjugate stabiliser groups $\Acurvy$ and $\Bcurvy$ the operator $V_{\Acurvy,\Bcurvy}$ defined in proposition~\ref{prop:VabimplementsKab} can be implemented by a constant-depth circuit. \\
 
 For simplicity's sake we will discuss Floquet codes on $D$-dimensional hypercubes. However, for other types of lattices $\Lambda'$, as long as there exists some transformation from $\Lambda'$ to the hypercube that does not let distances between points grow at more than constant rate, all arguments, analogously to~\cite{bravyiClassificationTopologicallyProtected2013}, still hold. We need to specify the notion of a (connected) region $R$ on a lattice $\Lambda$, which will just be a set of (connected) points.
 As a distance measure $d_{\Lambda}$ we utilise the normal Euclidean distance and by the diameter $\mathrm{diam}(R)$ of a region $R$ we mean the maximum distance between any two points in the region. 
 
 We say a region $R$ is $l$-local for an integer $l$ if its diameter is less or equal to $l$. Moreover, we formally introduce the $\rho$-neighbourhood of a region $R$:
 
 \begin{definition}[$\rho$-neighbourhood of a region $R$: $\mathcal{B}_{\rho}(R)$]
 	The $\rho$-neighbourhood $\mathcal{B}_{\rho}(R)$ of a region $R$ on a lattice $\Lambda$ is the union of $R$ and all points $p \in \Lambda$ that fulfil:
 \begin{equation}
 		\min_{q \in R}d_{\Lambda}(p,q) \leq \rho.
 	\end{equation} 
 \end{definition}
 Roughly speaking, it is the collection of points $\rho$ ``away" from $R$. Note that the $\rho$-neighbourhood of $R$ includes $R$. \newline
 Having defined these two notions, we define $l$-locally generated subgroups (taken from~\cite{aasenMeasurementQuantumCellular2023a}) in definition~\ref{def:locallygensubgroups}. Recall that for a reversible pair $\Acurvy \leftrightarrow \Bcurvy$, we use the notation $\Scurvy:=\Acurvy \cap \Bcurvy$ and $\Acurvybar:=\Acurvy \setminus \Scurvy$ and $\Bcurvybar:= \Bcurvy \setminus \Scurvy$. We additionally define some new notation. The \emph{support} of a Pauli operator $P$, denoted by $\mathrm{supp}(P)$, is the set of vertices on the lattice $\Lambda$ on which $P$ acts non-trivially.   
 
 \begin{definition} [$l$-locally generated subgroups]\label{def:locallygensubgroups}
 	An operator $s \in \mathcal{P}$ acting on qubits positioned at the vertices of a lattice $\Lambda$ is $l$-local, for an integer $l >0$, if the diameter of its support fulfils $\mathrm{diam}(\mathrm{supp}(s))\leq l$  A stabiliser group $\Acurvy$ is $l$-locally generated above a subgroup $\Gcurvy \subset \Acurvy$ of $\Acurvy$, if there exists a set $\{a_i \in \Acurvy \setminus \Gcurvy\}$ of $l$-local operators such that $\Acurvy$ is generated by $\{a_i\in \Acurvy \setminus \Gcurvy\}$ and $\Gcurvy$.
 \end{definition}
Note that there is a subtle difference between the size of the support of $s$ and its diameter. We here require geometric locality, i.e. the support of $s$ really must be contained within a connected region of small diameter. We will often say ``$s$ is supported on a region $R$" to specify that its support is contained entirely within $R$. \newline
 We now paraphrase definition \Rnum{2}.3 from~\cite{aasenMeasurementQuantumCellular2023a}.
 \begin{definition} [$l$-local reversibility] \label{def:conjlocal}
 	Let $\Acurvy$ and $\Bcurvy$ be two stabiliser groups acting on qubits positioned at the vertices of a lattice $\Lambda$. Let $\Acurvy \leftrightarrow \Bcurvy$ be a reversible pair according to definition~\ref{def:reversiblepairHH}. $\Acurvy \leftrightarrow \Bcurvy$ is a \textbf{locally reversible pair} if $\Acurvy$ and $\Bcurvy$ are $l$-locally generated above $\Scurvy$ by the $l$-local conjugate bases $\{a_i \in \Acurvybar\}$ and $\{b_i \in \Bcurvy \setminus \Bcurvybar\}$.
 \end{definition}
 Local reversibility leads to preservation of locality for Pauli operators (and thus Pauli errors). We summarise this in a slightly generalised version of corollary \Rnum{2}.5 from~\cite{aasenMeasurementQuantumCellular2023a}:
 \begin{proposition}\label{prop:Psurvives}
 	Let $\Acurvy \subset \mathcal{P}_n$ and $\Bcurvy\subset \mathcal{P}_n$ be $l$-conjugate stabiliser groups defined on a lattice $\Lambda$ with conjugate bases $\{a_i\}$ and $\{b_i\}$. For any $P \in \mathcal{P}_n$ supported only on a a finite region $R$, there exists an $a \in \Acurvybar$ supported only on $\mathcal{B}_{2l}(R)$, such that $aP$ commutes with $\Bcurvybar$ element-wise.
 \end{proposition}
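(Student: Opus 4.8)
The plan is to build the correcting stabiliser $a$ by hand from the conjugate basis and then bound its support using only the diameter constraints of $l$-local reversibility (definition~\ref{def:conjlocal}). The one elementary fact I would use repeatedly is that two Pauli operators commute whenever their supports are disjoint; contrapositively, anticommutation forces overlapping supports. Set $I := \{\, i : b_i \text{ anticommutes with } P \,\}$ and define $a := \prod_{i \in I} a_i$, which lies in $\Acurvybar$. The commutation claim is then immediate from the defining pattern of a conjugate basis: $a_i$ anticommutes with $b_i$ and commutes with every $b_j$ for $j \neq i$. Hence for $j \in I$ both $a$ and $P$ anticommute with $b_j$, so $aP$ commutes with $b_j$; and for $j \notin I$ both $a$ and $P$ commute with $b_j$, so again $aP$ commutes with $b_j$. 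Thus $aP$ commutes with every element of the conjugate basis, which is exactly the sense in which it commutes with $\Bcurvybar$ element-wise.

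The substance of the proposition is the support bound, which I would obtain by chaining two triangle-inequality estimates. Fix $i \in I$. Since $b_i$ anticommutes with $P$, their supports overlap, so there is a point $x \in \mathrm{supp}(b_i) \cap R$; because $b_i$ is $l$-local, every point of $\mathrm{supp}(b_i)$ lies within distance $l$ of $x \in R$, giving $\mathrm{supp}(b_i) \subseteq \mathcal{B}_{l}(R)$. Next, since $a_i$ anticommutes with $b_i$, their supports overlap at some $y \in \mathrm{supp}(a_i) \cap \mathrm{supp}(b_i) \subseteq \mathcal{B}_{l}(R)$, so $y$ lies within distance $l$ of some $r \in R$; because $a_i$ is also $l$-local, any $z \in \mathrm{supp}(a_i)$ satisfies $d_{\Lambda}(z,r) \le d_{\Lambda}(z,y) + d_{\Lambda}(y,r) \le 2l$. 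Therefore $\mathrm{supp}(a_i) \subseteq \mathcal{B}_{2l}(R)$, and taking the union over $i \in I$ yields $\mathrm{supp}(a) \subseteq \mathcal{B}_{2l}(R)$.

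I expect the main obstacle to be conceptual rather than computational. First, one must check that $I$ is finite so that $a$ is a genuine (finite) product: this follows from $R$ being finite together with the local finiteness of the lattice, since the argument above already confines every relevant $b_i$ to $\mathcal{B}_{l}(R)$. Second, I would be careful about the precise reading of ``commutes with $\Bcurvybar$''; the construction only controls commutation against the measured conjugate basis $\{b_j\}$, and to upgrade this to commutation with all of $\Bcurvy$ one would additionally need $P$ to commute with $\Scurvy$. For the purpose of surviving the transition $\Acurvy \to \Bcurvy$, in which only $\Bcurvybar$ is measured, commutation with the conjugate basis is the relevant and sufficient statement, so the genuine work is simply the factor-of-two support bound.
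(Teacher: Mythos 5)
Your proof takes essentially the same route as the paper's: identify the conjugate basis elements $b_i$ that anticommute with $P$, multiply $P$ by their partners $a_i$, and obtain the $2l$ bound by chaining the $l$-locality of $b_i$ and of $a_i$ through the support overlaps forced by anticommutation. You merely spell out the triangle-inequality steps and the parity check that the paper leaves implicit, and your closing caveat about $\Scurvy$ is apt --- the paper's proof likewise only controls commutation against the conjugate basis $\{b_j\}$, which suffices for its later application (proposition~\ref{cor:undetect}) where the error is assumed to commute with $\Acurvy \cap \Bcurvy$ element-wise.
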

 \begin{proof}
 	Take a conjugate basis for $\Acurvy \leftrightarrow \Bcurvy$. Check which basis elements $b_{j} \in \{b_i\}$ $P$ anti-commutes with. They are supported only on $\mathcal{B}_{l}(R)$, as $\{b_i\}$ is $l$-local. Multiply $P$ with the respective conjugate basis elements $a_{i}$. These are supported on $\mathcal{B}_{2l}(R)$.  The resulting product commutes with $\Bcurvy \setminus \Scurvy$ element-wise. 
 \end{proof}
 Local conjugation ensures that the projective measurement maps local errors at time step $t$ to local errors at time step $t+1$ (as stated in proposition~\ref{prop:Psurvives}). 
 This seems to be the working principle of error correction in most Floquet codes introduced so far, such as~\cite{hastingsDynamicallyGeneratedLogical2021,davydovaFloquetCodesParent2023,davydovaQuantumComputationDynamic2024,townsend-teagueFloquetifyingColourCode2023a,ellisonFloquetCodesTwist2023a}. The code introduced in~\cite{vuillotPlanarFloquetCodes2021} seems to not have $l$-local conjugate bases and contains small local errors that turn into logical operators after the measurement, possibly caused by the lack of $l$-conjugate bases. We conjecture that the conjugation property in itself enables information preservation, while $l$-local conjugation enables delayed error correction. We do not aim here at formally discussing error correction. (See~\cite{kesselringAnyonCondensationColor2024,townsend-teagueFloquetifyingColourCode2023,davydovaQuantumComputationDynamic2024} for a discussion of error correction for Floquet codes within the framework of error detectors, or~\cite{fuErrorCorrectionDynamical2025} for the discussion of distance for dynamical codes generated by arbitrary measurement sequences. A general discussion of fault tolerance in time for Clifford circuits is given in~\cite{delfosseSpacetimeCodesClifford2023}.) The line of thought above justifies defining Floquet codes as a sequence of $l$-local reversible pairs:
 \begin{definition}[Floquet code $\mathcal{F}$]\label{def:floquet}
 	Let $\Acurvy_0$, $\Acurvy_1$, $\ldots$ be a possibly infinite/possibly periodic sequence of $k$-dimensional stabiliser groups $\subseteq \mathcal{P}(n)$, where each consecutive pair is an $l$-locally conjugate reversible pair. Let $\mathcal{H}_t$ be the joint eigenspace of all elements in $\Acurvy_t$. A Floquet code $\mathcal{F}$ is the sequence of Hilbert spaces $\mathcal{H}_t$, where one arrives at the next Hilbert space $\mathcal{H}_{t+1}$ through a projective measurement of the respective conjugate basis elements. We write $\Hil_0 \rightarrow \Hil_1 \rightarrow \ldots$. We call the Floquet code an $[n,k]$-Floquet code.
 \end{definition}
 We note again that this definition does not cover everything called a ``Floquet code" in the literature, but we believe it is sufficiently expansive to be of use. 
 Sometimes we will simply call the sequence $\Acurvy_0$, $\Acurvy_1$, $\ldots$ the Floquet code $\mathcal{F}$. Note that a Floquet code is not simply the sequence of Hilbert space $\Hil_t$, but that we include the \emph{dynamics} by moving through the Hilbert spaces via projective measurements. We now chose all bases to be $l$-conjugate with $l$ being the same for all pairs, but one could have also chosen a different $l_{t,t+1}$ for each pair. However, to upper bound the locality of transformed errors or logical Paulis, one could then simply use the largest $l$. We will thus, for sake of simplicity, work with one $l$, as all of the $l$s will be assumed to be constant with respect to the qubit number $n$ anyway.
 Note that due to the definition of reversible pairs (see definition~\ref{def:reversiblepairHH}), every stabiliser group in the sequence is of the same size. As pointed out in~\cite{davydovaFloquetCodesParent2023,aasenMeasurementQuantumCellular2023a}, the projection operation in Floquet codes defined by locally conjugate stabiliser groups can be replaced by short-depth circuit.
 
 \begin{corollary}\label{cor:Vablocal}
 	Let $\Acurvy \leftrightarrow \Bcurvy$ be a $l$-locally conjugate pair of stabiliser groups and $\{a_i\}$ and $\{b_j\}$ the respective locally conjugate bases. Then, $V_{\Acurvy,\Bcurvy}$ defined in proposition~\ref{prop:VabimplementsKab} implements the transition induced by the projection onto $\Pi_{\Bcurvybar}$ and is a constant-depth, finite range circuit.
 \end{corollary}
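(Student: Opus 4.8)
The plan is to dispose of the two claims separately. That $V_{\Acurvy,\Bcurvy}$ correctly implements the projection-induced transition is already established by proposition~\ref{prop:VabimplementsKab} (equations~\eqref{eq:VabimplementsK} and~\eqref{eq:VabimplementsKdag}), so this half requires nothing new beyond a citation. All the genuine work lies in showing that $V_{\Acurvy,\Bcurvy}=\prod_i\left(\tfrac{a_i+b_i}{\sqrt 2}\right)$ admits a realisation as a constant-depth, finite-range circuit, and this is where I would concentrate.

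First I would check that each factor $g_i:=\tfrac{a_i+b_i}{\sqrt 2}$ is a legitimate local gate. Since $a_i$ and $b_i$ anti-commute and each squares to the identity, $(a_i+b_i)^2=2\,\mathbb{1}$, so $g_i$ is a Hermitian unitary (an involution). Because $a_i$ and $b_i$ anti-commute their supports must share at least one vertex, and as both are $l$-local the diameter of $\mathrm{supp}(g_i)$ is at most $2l$; hence $g_i$ acts nontrivially on only $O((2l)^D)$ qubits, a number independent of $n$. Next I would verify that the factors pairwise commute: for $i\neq j$ we have $a_ia_j=a_ja_i$ and $b_ib_j=b_jb_i$ because $\Acurvy$ and $\Bcurvy$ are abelian, while the conjugate-basis relations of definition~\ref{def:conjlocal} give $a_ib_j=b_ja_i$ and $a_jb_i=b_ia_j$; expanding $(a_i+b_i)(a_j+b_j)$ and $(a_j+b_j)(a_i+b_i)$ term by term then yields $[g_i,g_j]=0$. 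Thus $V_{\Acurvy,\Bcurvy}$ is an order-independent product of commuting, geometrically local involutions.

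The main obstacle is the final step: converting this commuting product into a genuinely constant-\emph{depth} circuit, since commutativity alone does not parallelise overlapping gates. I would form the overlap graph whose vertices are the factors $g_i$ and whose edges join factors with intersecting support, and bound its maximum degree. A factor $g_j$ can meet $g_i$ only if $\mathrm{supp}(a_j)\cup\mathrm{supp}(b_j)$ lies in the $O(l)$-neighbourhood $\mathcal{B}_{O(l)}(\mathrm{supp}(g_i))$, a fixed region containing $O(l^D)$ qubits. The crucial input is the linear-algebra fact that at most $2m$ phase-independent Pauli operators can be supported on $m$ qubits; since the $b_j$ form part of an independent generating set, at most $O(l^D)$ of them can be supported in that neighbourhood, so the overlap graph has maximum degree $\Delta=O(l^D)=O(1)$. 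A greedy colouring then splits the factors into $\Delta+1=O(1)$ classes, each consisting of gates with pairwise disjoint support, and because all factors commute $V_{\Acurvy,\Bcurvy}$ equals the product over these classes. This exhibits $V_{\Acurvy,\Bcurvy}$ as a circuit of depth $O(l^D)$ in which every gate has range $O(l)$, both constant in $n$, which is what the corollary asserts.
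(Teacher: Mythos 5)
Your proposal is correct and starts from the same place as the paper's proof: both observe that $V_{\Acurvy,\Bcurvy}=\prod_i\left(\tfrac{a_i+b_i}{\sqrt{2}}\right)$ is a product of pairwise commuting, geometrically local operators (the implementation claim itself being just a citation of proposition~\ref{prop:VabimplementsKab} in both cases). The difference lies in how the decisive step --- that a commuting product of local operators admits a constant-depth, finite-range realisation --- is discharged. The paper simply cites~\cite{davydovaFloquetCodesParent2023} for this fact and stops there; you instead prove it: you verify each factor is a Hermitian unitary of support diameter at most $2l$ (using that anti-commuting Paulis must overlap), check commutativity of the factors from the conjugate-basis relations, and then parallelise by bounding the maximum degree of the overlap graph --- via the fact that at most $2m$ independent Pauli operators can be supported on $m$ qubits, so only $O(l^D)$ of the $b_j$ fit inside the relevant $O(l)$-neighbourhood --- and greedily colouring it into $O(l^D)$ layers of disjointly supported gates. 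Your intermediate steps are all sound, and your version buys a self-contained argument with an explicit depth bound $O(l^D)$ and gate range $O(l)$, at the cost of length; the paper's version is shorter but leaves the essential parallelisation argument to an external reference.
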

 \begin{proof}
	$V_{\Acurvy,\Bcurvy}$ is given by:
	\begin{equation}
	 V_{\Acurvy,\Bcurvy} = \prod_{i} \left(\frac{a_i+b_i}{\sqrt{2}}\right).
	 \end{equation}
	By definition, this is a product of commuting operators. As the support of each $a_i$ and $b_i$ and thus the support of $\frac{a_i+b_i}{\sqrt{2}}$ is locally bounded, $V_{\Acurvy,\Bcurvy}$ is a product of geometrically local commuting operators, which can be implemented by a constant-depth finite-range circuit~\cite{davydovaFloquetCodesParent2023}. 
\end{proof}
 
 \section{The BK theorem for code-preserving unitaries}\label{sec:BKforCodePreserving}
 
 Here, we briefly elaborate on the validity of the BK theorem for code-preserving unitaries. The BK theorem is  a no-got theorem for the dynamics of $D$-dimensional topological stabiliser codes (TSCs). The code-space of a topological stabiliser code is the ground space of a Hamiltonian whose eigenspaces define topological phases. We utilise the same rough definition as in~\cite{bravyiClassificationTopologicallyProtected2013}, namely that a stabiliser code is topological if its stabilisers are defined with respect to qubits placed on a $D$-dimensional lattice of length $\lambda$ and its distance is $\mathcal{O}(\lambda^{\frac{1}{D}})$. \newline
 Let us first restate the original BK theorem (in its general form) from~\cite{bravyiClassificationTopologicallyProtected2013}. Given two stabiliser groups $\Acurvy$ and $\Bcurvy$, the authors of~\cite{bravyiClassificationTopologicallyProtected2013} call a unitary $U \in U(\Hil^{\otimes n})$ a \emph{morphism} between $\Acurvy$ and $\Bcurvy$ if it maps the codespace of $\Acurvy$ to the codespace of $\Bcurvy$:
 
 \begin{equation}
 	U\Pi_{\Acurvy}U^{\dag}=\Pi_{\Bcurvy}.
 \end{equation}
 \begin{theorem}[Original BK theorem]\label{thm:BKog}
 If $U$ is a morphism between $D$-dimensional topological stabiliser codes $\Acurvy$ and $\Bcurvy$ and can be implemented by a constanct-depth, finite-range circuit, then $U$ implements an element of the $D$-th level of the Clifford hierarchy.
 \end{theorem}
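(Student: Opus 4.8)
The statement is the original Bravyi--König theorem, so the plan is to reproduce its proof strategy: an induction on the spatial dimension $D$, with the topological structure of $\Acurvy$ and $\Bcurvy$ entering through a ``cleaning'' property and the finite depth of the circuit entering through a bounded light cone. Recall that the Clifford hierarchy is defined recursively by $\mathcal{C}^{(1)} = \mathcal{P}_n$ and $\mathcal{C}^{(k)} = \{U : U P U^{\dagger} \in \mathcal{C}^{(k-1)} \text{ for all } P \in \mathcal{P}_n\}$. Since $U$ is a morphism it sends logical Paulis of $\Acurvy$ to logical Paulis of $\Bcurvy$ and induces a well-defined logical unitary $\bar U$; to show $\bar U \in \mathcal{C}^{(D)}$ it then suffices, by this recursion, to show that $\bar U \bar P \bar U^{\dagger} \in \mathcal{C}^{(D-1)}$ for every logical Pauli $\bar P$.

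First I would record the two structural inputs. Finite depth: because $U$ is a constant-depth, finite-range circuit, conjugation by $U$ enlarges supports by at most a constant margin, so an operator supported in a region $R$ is mapped by $U(\cdot)U^{\dagger}$ to one supported in $\mathcal{B}_{\rho}(R)$ for some constant $\rho$. Cleaning lemma for topological codes: if a region $M$ is correctable (supports no nontrivial logical operator) then every logical operator admits a representative with support disjoint from $M$. For a $D$-dimensional TSC one can then partition the lattice into $\mathcal{O}(1)$ overlapping slabs, each of codimension one and extended in the remaining $D-1$ directions, such that the complement of any single slab is correctable. This geometric decomposition is the heart of the argument and is exactly where the $\mathcal{O}(\lambda^{1/D})$ distance scaling is used.

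The inductive step then proceeds as follows. Given a logical Pauli $\bar P$, use the cleaning lemma to choose a representative supported on a single codimension-one slab $A$. By the light-cone bound, $U \bar P U^{\dagger}$ is supported on the thickened slab $\mathcal{B}_{\rho}(A)$, which is still codimension one. I would argue that the restriction of $U$ to this slab, together with the stabilisers supported there, defines a constant-depth logical gate of an effectively $(D-1)$-dimensional topological code; the induction hypothesis then places the logical action of $U \bar P U^{\dagger}$ in $\mathcal{C}^{(D-1)}$. Running this over all $\bar P$ and invoking the recursive definition yields $\bar U \in \mathcal{C}^{(D)}$. The base case $D=1$ asserts that a constant-depth morphism acts logically as a Pauli: a cleaned logical operator sits in a bounded region, only constantly many logical degrees of freedom lie inside one light cone, and a finite-depth circuit cannot mix them beyond a Pauli.

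The hard part will be making the dimensional reduction rigorous, and there are two intertwined obstacles. The cleaner one is that the levels $\mathcal{C}^{(k)}$ are \emph{not} groups for $k \geq 3$, so one cannot freely multiply the contributions coming from the different slabs of the overlapping cover; the argument must instead recombine the restricted logical actions by a careful telescoping over the cover, using correctability of the pairwise (and higher) intersections to guarantee that the representatives chosen on different slabs are mutually consistent modulo stabilisers. The subtler obstacle is justifying the claim that ``the restriction of $U$ to a codimension-one slab is a logical gate of a $(D-1)$-dimensional topological code'': one must exhibit on the slab a genuine topological stabiliser code of one lower dimension with the correct distance scaling, and verify that $U$ restricted to it is again a morphism between such codes so that the induction hypothesis truly applies. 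I expect these two points --- the non-group recombination and the precise lower-dimensional restriction --- to absorb the bulk of the technical work.
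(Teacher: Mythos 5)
The first thing to note is that the paper does not prove this statement at all: Theorem~\ref{thm:BKog} is quoted verbatim from the cited work of Bravyi and K\"onig and is used downstream purely as a black box (in Theorem~\ref{thm:BKcodepreserving} and in Corollary~\ref{thm:main}). So there is no in-paper proof to compare your attempt against; the only fair comparison is with the original Bravyi--K\"onig argument itself.

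Measured against that, your sketch reconstructs the correct high-level strategy --- induction on $D$, cleaning a logical Pauli onto a codimension-one region, the constant light cone of a finite-depth circuit, and the recursion ``$\bar U \bar P \bar U^{\dagger} \in \mathcal{C}^{(D-1)}$ for all logical Paulis $\bar P$ implies $\bar U \in \mathcal{C}^{(D)}$'' --- and you correctly locate where the difficulty sits. But as it stands it is a plan, not a proof: the two steps you explicitly defer \emph{are} the theorem. In particular, the assertion that ``the restriction of $U$ to a thickened slab, together with the stabilisers supported there, defines a constant-depth logical gate of an effectively $(D-1)$-dimensional topological code'' is exactly what requires the machinery of correctable regions and the structure theory of topological stabiliser codes in the original paper; without it the induction hypothesis cannot be invoked, since a priori the code induced on the slab need not be a topological stabiliser code with the required distance scaling, and $U\bar P U^{\dagger}$ need not be a morphism between two such codes. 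Likewise, your ``telescoping'' recombination over the overlapping cover is where one must genuinely confront the fact that $\mathcal{C}^{(k)}$ fails to be a group for $k \geq 3$; naming the obstacle is not the same as circumventing it. So the proposal has genuine gaps --- though, to your credit, they are the right gaps: the skeleton you describe is faithful to the actual Bravyi--K\"onig proof, and filling in precisely those two points is what their paper spends its technical effort on.
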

 Corollary~\ref{cor:Vablocal} implies that theorem~\ref{thm:BKog} holds for Floquet codes where at each time step one applies a code-preserving constant-depth unitary.
\begin{theorem}[BK theorem for Floquet codes defined by locally conjugate stabiliser codes] \label{thm:BKcodepreserving}
 	Let $\Acurvy_1 \rightarrow \Acurvy_2 \rightarrow \ldots \rightarrow \Acurvy_{\tau}$ be a Floquet code defined in definition~\ref{def:floquet}, where $\tau$ is $\mathcal{O}(1)$ with respect to the number of physical qubits $n$. Let $\{U_t\}$ be an indexed set of size $\tau$ containing constant depth, constant range circuits.
 	Let each individual stabiliser code in the Floquet code be a $D$-dimensional TSC. 
 	Apply $U_t$ at time step $t$ before the projective measurement moving the state to $\mathcal{H}_{t+1}$. The resulting logical operation will be in the $D$-th level of the Clifford hierarchy, as long as $\tau$, $l$ and each $h_t$ and $r_t$ is $\mathcal{O}(1)$.
 \end{theorem}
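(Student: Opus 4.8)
The plan is to reduce Theorem~\ref{thm:BKcodepreserving} directly to the original Bravyi--K\"onig theorem (Theorem~\ref{thm:BKog}) by repackaging the entire Floquet evolution --- the code-preserving unitaries together with the projective transitions --- as a single constant-depth, finite-range morphism between two $D$-dimensional TSCs. First I would replace each projection transitioning $\Hil_t \to \Hil_{t+1}$ by its unitary transition operator $V_{\Acurvy_t,\Acurvy_{t+1}}$ from Proposition~\ref{prop:VabimplementsKab}. This substitution is legitimate because, as established in Section~\ref{sec:presinfo}, all measurement outcomes are equiprobable and induce the same effective logical action, so fixing the outcomes and composing the $V_{\Acurvy_t,\Acurvy_{t+1}}$ faithfully captures the logical operation. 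The total operation then becomes the composite
\begin{equation}
	W := U_{\tau}\, V_{\Acurvy_{\tau-1},\Acurvy_{\tau}}\, U_{\tau-1} \cdots V_{\Acurvy_1,\Acurvy_2}\, U_1,
\end{equation}
read right-to-left and interpreted as a map $\Hil_{\Acurvy_1} \to \Hil_{\Acurvy_{\tau}}$, consisting of $\tau$ unitaries $U_t$ and $\tau-1$ transition operators, i.e. $2\tau-1$ factors in total.

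Next I would verify that each factor is simultaneously a morphism and a constant-depth, finite-range circuit. Each $U_t$ is code-preserving, so $U_t \in \oalg{\Acurvy_t}$; being unitary it maps $\Hil_{\Acurvy_t}$ onto itself and hence its orthogonal complement onto itself, giving $U_t \Pi_{\Acurvy_t} U_t^{\dag} = \Pi_{\Acurvy_t}$ --- a morphism from $\Acurvy_t$ to $\Acurvy_t$ --- and it is constant-depth and constant-range by hypothesis. Each $V_{\Acurvy_t,\Acurvy_{t+1}}$ is a morphism from $\Acurvy_t$ to $\Acurvy_{t+1}$ by Proposition~\ref{prop:VabimplementsKab} (it maps $\Hil_{\Acurvy_t}$ to $\Hil_{\Acurvy_{t+1}}$ and complement to complement) and is constant-depth, finite-range by Corollary~\ref{cor:Vablocal}. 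Chaining these morphism identities through all $2\tau-1$ factors yields $W \Pi_{\Acurvy_1} W^{\dag} = \Pi_{\Acurvy_{\tau}}$, so $W$ is a morphism between the $D$-dimensional TSCs $\Acurvy_1$ and $\Acurvy_{\tau}$.

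The main quantitative point --- and the only place where the assumption $\tau = \mathcal{O}(1)$ is essential --- is that the composite $W$ itself remains constant-depth and finite-range. Here I would argue that the depth of $W$ is at most the sum of the depths of its $2\tau-1$ factors, which is $\mathcal{O}(1)$ since $\tau$, $l$, and each $h_t$ and $r_t$ are $\mathcal{O}(1)$; and that the effective range (the light cone of a local operator conjugated through $W$) grows by at most a constant additive amount per factor, so after the constant number $2\tau-1$ of factors it is still $\mathcal{O}(1)$. This is the step to handle with care: were $\tau$ allowed to scale with $n$, neither the depth nor the range of $W$ would stay bounded and the reduction would break. With $W$ established as a constant-depth, finite-range morphism between two $D$-dimensional TSCs, Theorem~\ref{thm:BKog} applies and gives that $W$, and therefore the effective logical operation of the unitaries and measurements combined, implements an element of the $D$-th level of the Clifford hierarchy.
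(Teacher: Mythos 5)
Your proposal is correct and follows essentially the same route as the paper: replace each projective transition by the unitary $V_{\Acurvy_t,\Acurvy_{t+1}}$ of Proposition~\ref{prop:VabimplementsKab}, observe that the alternating composite of code-preserving unitaries and transition operators is a morphism from $\Acurvy_1$ to $\Acurvy_\tau$ that is constant-depth and finite-range by Corollary~\ref{cor:Vablocal} and $\tau=\mathcal{O}(1)$, and then invoke Theorem~\ref{thm:BKog}. The only difference is that you spell out the morphism-chaining and depth-accumulation steps that the paper's proof asserts without detail.
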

 \begin{proof}
 	Let $V_{\Acurvy_{t},\Acurvy_{t+1}}$ be the transition unitary defined in proposition~\ref{prop:VabimplementsKab} for two consecutive stabiliser groups $\Acurvy_t$ and $\Acurvy_{t+1}$. We define  	$V_{\Acurvy_{-1},\Acurvy_{0}} := \mathbb{1}$. The resulting operation given by the unitaries and the projections is:
 	\begin{equation}\label{eq:UtVab}
 	U_{\mathrm{total}}	 = \prod_{i=0}^{\tau} U_{\tau-i}V_{\Acurvy_{\tau-i-1},\Acurvy_{\tau-i}}.
 	\end{equation}
  The operator $U_{\mathrm{total}}$ is a morphism between $\Acurvy_1$ and $\Acurvy_{\tau}$. Because of corollary~\ref{cor:Vablocal} it can be implemented by a constant-depth, finite-range circuit. Hence, the original BK theorem applies.
 \end{proof}
 
 In summary, the BK theorem is fulfilled for our definition of Floquet codes, which also encompasses most previously introduced dynamical codes.

 \section{Logical operations beyond code preservation}\label{sec:beyondcodepres}
 
As mentioned in section~\ref{sec:intro}, there is additional freedom when it comes to logical unitaries in dynamical codes. Consider an arbitrary reversible pair $\Acurvy \leftrightarrow \Bcurvy$ and let $\{a_i\}$ and $\{b_j\}$ be the conjugate bases. If $U_{\Acurvy}$ is a code preserving logical unitary with respect to $\Acurvy$ and one applies the operation $bU_{\Acurvy}$, where $b \in \{b_j\}$, right before the projection $\Pi_{\Bcurvybar}$, one has essentially performed $U_{\Acurvy}$. This leads to the question if there are more general unitary operations that are compatible with quantum error correction and the preservation of information, with which one could circumvent the BK theorem for Floquet codes. Certainly, in this case, our earlier argument fails, as one cannot construct the resulting logical operation as in Eqn.~\eqref{eq:UtVab}, i.e. as a product of the constant-depth unitaries $U_t$ and the transition operators, if the unitaries $U_t$ are not code-preserving. That is because then $\Pi_{\Bcurvybar}  U\Pi_{\Acurvy} \neq \Pi_{\Bcurvybar}\Pi_{\Acurvy} U \Pi_{\Acurvy}$ and we cannot immediately replace $\Pi_{\Bcurvybar} \Pi_{\Acurvy}$ by a unitary (up to the normalisation factor). 

 Here we derive what we consider sensible conditions for generalised (i.e. non-codespace preserving) logical unitary operations in dynamical codes. In definition~\ref{def:genunitaries}, we will define a generalised logical unitary for a Floquet transition as a unitary fulfilling specific conditions. We elaborate on these  conditions within sections~\ref{sec:errordetect} and~\ref{sec:logicalpres}. Then, in section~\ref{sec:canonicalform}, we will derive a canonical form for these operations. Using this canonical form we show that the BK theorem holds for generalised logical unitaries in section~\ref{sec:BKbeyondcodepres}.
 
 \begin{definition}\label{def:genunitaries}
 	Let $\Acurvy \leftrightarrow \Bcurvy$ be a reversible pair of stabiliser groups with the conjugate bases $\{a_i\}$ and $\{b_i\}$ spanning $\Acurvybar$ and $\Bcurvybar$ respectively. Let $\Acurvy \rightarrow \Bcurvy$ be their associated Floquet transition. We define a \textbf{generalised logical unitary} between $\Acurvy$ and $\Bcurvy$ as a $U \in U(\Hil^{\otimes n})$ fulfilling the following conditions for all $\vec{m}_b$, $\vec{m}'_b$ $\in \mathbb{F}_2^{n_m}$:
 	
 	\begin{enumerate}
 			\item \textbf{Error-detectability and self correction}
 		\begin{equation}\label{eq:detectorspace}
 			\Pi_{\Acurvy \cap \Bcurvy} U \Pi_{\Acurvy}=U\Pi_{\Acurvy}
 		\end{equation}
 		\begin{equation}\label{eq:selfcorrectiondef}
 			\Pi_{\Bcurvybar}(\vec{m}_b) a U \Pi_{\Acurvy}=e^{i\alpha_b} 	\Pi_{\Bcurvybar}(\vec{m}_b)  U \Pi_{\Acurvy}\hspace{1em} \forall\, a \in \Acurvybar,
 		\end{equation}
 		\item \textbf{Logical preservation}
 			\begin{equation}\label{eq:logicalpresdef}
 		 \Pi_{\Bcurvybar}(\vec{m}_b) U \Pi_{\Acurvy} \propto   U_{\Acurvy,\Bcurvy}(\vecmb) \Pi_{\Acurvy} \, \forall \vecmb,
 		\end{equation}
 		where $ U_{\Acurvy,\Bcurvy}(\vecmb): \Hil_{\Acurvy}\rightarrow \Hil_{\Bcurvy}(\vecmb)$ is unitary.
 		\item \textbf{Logical equivalence}.  
 		\begin{equation}\label{eq:logicalequivdef}
 			\Pi_{\Acurvybar}\Pi_{\Bcurvybar}(\vec{m}_b) U \Pi_{\Acurvy} = e^{i (\phi_b-\phi'_b)}	\Pi_{\Acurvybar}\Pi_{\Bcurvybar}(\vec{m}'_b)U \Pi_{\Acurvy}.
 		\end{equation}
 	\end{enumerate}
 	 Here, $e^{i\alpha_b}$, $e^{i\phi_b}$ and $e^{i\phi'_b}$ in Eqn.~\eqref{eq:selfcorrectiondef} and Eqn.~\eqref{eq:logicalequivdef} are phases dependent on $\vecmb$ and $\vecmb$, $\vecmbp$ respectively.
 \end{definition}

 
 
 \subsection{Error detect-ability and self correcting errors}\label{sec:errordetect}
 We begin our discussion of definition \ref{def:genunitaries} by showing how one arrives at conditions~\eqref{eq:detectorspace} and~\eqref{eq:selfcorrectiondef}. We first show that it is necessary to preserve the intersections of two subsequent ISGs in order to be able to correct errors. Then, we additionally require that certain errors must be self-corrected. 
 We do not aim here at providing a full overview on error detection and correction, nor do we aim at stating necessary and sufficient conditions for error detectability in Floquet codes, as the actual error detection and correction procedure is tightly linked to the chosen measurement sequences (which are not always fully captured by the specification of the ISGs). See~\cite{davydovaQuantumComputationDynamic2024,fuErrorCorrectionDynamical2025,rodatzFloquetifyingStabiliserCodes2024a} for a more detailed analysis of error correction in Floquet codes.
 \subsubsection{Error detectability}
We begin by arguing that if one requires a sequence $\Acurvy \rightarrow \Bcurvy$ of two conjugate ISGs to be able to detect errors, then all information about non-trivial errors must be contained within their intersection $\Acurvy \cap \Bcurvy$.

Take for example a reversible pair $\Acurvy \leftrightarrow \Bcurvy$ where each stabiliser group defines a QEC. At time step $t=1$, consider the error $b \in \Bcurvybar$ acting on $\ket{\Psi} \in \Hil_{\Acurvy}$. This error is subsequently absorbed into the measurement of the conjugate basis elements:
\begin{equation}
	\Pi_{\Bcurvybar} b \ket{\Psi} = \pm \Pi_{\Bcurvybar} \ket{\Psi},
\end{equation}
where the sign depends on the measurement result for this particular $b$, but it merely constitutes a global phase. We see that an error generated by $\Bcurvybar$ does not change any logical information. As it commutes with $\Acurvy \cap \Bcurvy$ element-wise by definition and does not change the probability distribution of the measurement results $\vec{m}_b$, it also is \emph{undetectable}. We will see that if we want to be able to correct errors in a dynamical code, these must stay the only undetectable Pauli errors. 
To be precise, we consider an error $E$ at time step $t$ in a dynamical code to be \emph{undetectable} if its syndrome stabilisers are fully contained within $\Acurvy_{t}\setminus (\Acurvy_t \cap \Acurvy_{t+1})$. After measuring $\Acurvy_{t+1} \setminus (\Acurvy_t \cap \Acurvy_{t+1})$ one fully removes all information about the signs of  $\Acurvy_t \setminus (\Acurvy_t \cap \Acurvy_{t+1})$. Furthermore, we say that a Pauli $P$ has trivial logical effect on a stabiliser code $\Acurvy$ if it acts as the identity on $\mathcal{N}(\Acurvy)/\Acurvy$.

 \begin{proposition}\label{cor:undetect}
 	Let $\Acurvy \rightarrow \Bcurvy$ be a Floquet transition between the time steps $t=0$ and $t=1$ induced by measuring the conjugate basis elements $\{b_j\} \in \Bcurvy \setminus (\Acurvy \cap \Bcurvy)$. 
 	The entire information needed to correct errors must be contained within $\Acurvy \cap \Bcurvy$. That is, if there exists an error $E$ at time step $t=0$ that anti-commutes with at least one element of $\Acurvy \setminus (\Acurvy \cap \Bcurvy)$ but commutes with all of $\Acurvy \cap \Bcurvy$ element-wise and that is \emph{not} a product of elements in $\Acurvy$ and $\Bcurvybar$, then $E$ becomes an undetectable error with non-trivial logical effect at time step $t=1$.
 \end{proposition}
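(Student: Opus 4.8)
The plan is to track the error $E$ through the conjugate-basis measurement, exhibit an equivalent representative lying in $\mathcal{N}(\Bcurvy)$, and then show that this representative is a genuine (non-stabiliser) logical operator of $\Bcurvy$. First I would construct the transformed error. Let $J$ be the set of indices for which $E$ anticommutes with the conjugate-basis element $b_j \in \Bcurvybar$, and set $E' := \left(\prod_{j\in J} a_j\right) E$. Using that in a conjugate basis each $a_j$ anticommutes with $b_j$ and commutes with every other $b_i$, a one-line commutation count shows $E'$ commutes with all of $\Bcurvybar$. Since $E$ commutes with $\Scurvy$ by hypothesis and each $a_j \in \Acurvy$ commutes with $\Scurvy \subseteq \Acurvy$ (as $\Acurvy$ is abelian), $E'$ also commutes with $\Scurvy$. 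Hence $E' \in \mathcal{N}(\Bcurvy)$. This step is essentially Proposition~\ref{prop:Psurvives} specialised to the pair $\Acurvy \leftrightarrow \Bcurvy$.

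Next I would show that $E'$ is the effective error after the transition, up to a global phase. Using $E = \left(\prod_{j\in J} a_j\right)E'$, the fact that $E'$ commutes with $\Bcurvybar$, and that each $a_j$ is an $\Acurvy$-stabiliser acting as a fixed sign on $\ket{\Psi} \in \Hil_{\Acurvy}$, a short computation that commutes the $a_j$ past $\Pi_{\Bcurvybar}(\vecmb)$ (flipping the signs indexed by $J$) and back yields $\Pi_{\Bcurvybar}(\vecmb) E \ket{\Psi} \propto E'\, \Pi_{\Bcurvybar}(\vecmb) \ket{\Psi}$, with the proportionality constant a global phase. Thus, relative to the error-free post-measurement state, the erroneous state has been acted on by $E' \in \mathcal{N}(\Bcurvy)$, and since all measurement outcomes are equally likely and independent of the error (Corollary~\ref{cor:probsameHH}) this relabelling of outcomes carries no usable information. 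Because $E'$ commutes with every element of $\Bcurvy$, it triggers no syndrome at $t=1$, so the error is undetectable in the sense defined above.

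Finally I would establish the non-trivial logical effect by contradiction. Suppose $E'$ were a stabiliser, $E' \in \Bcurvy = \langle \Scurvy, \Bcurvybar \rangle$. Writing $E = \left(\prod_{j\in J} a_j\right)E'$ and using $\prod_{j\in J} a_j \in \Acurvy$ together with $\Scurvy \subseteq \Acurvy$, we would obtain $E \in \langle \Acurvy, \Bcurvybar \rangle$, i.e. $E$ is a product of an element of $\Acurvy$ and a product of elements of $\Bcurvybar$ — contradicting the hypothesis on $E$. Hence $E' \in \mathcal{N}(\Bcurvy) \setminus \Bcurvy$ is a non-trivial logical Pauli of $\Bcurvy$, so the effective error at $t=1$ has non-trivial logical effect. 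The hypothesis that $E$ anticommutes with at least one element of $\Acurvybar$ is what guarantees $E \notin \mathcal{N}(\Acurvy)$, i.e. that $E$ genuinely carries a $t=0$ syndrome living entirely in the erased part $\Acurvybar$ rather than being a pre-existing logical of $\Acurvy$; this is what makes the statement a claim about \emph{errors}. I expect the main obstacle to be the bookkeeping in the middle step: correctly arguing that the outcome shift is physically undetectable so that $E'$ — and not $E$ or some other representative — is the genuine effective error, while keeping straight that $E' \in \mathcal{N}(\Bcurvy)$ even though the multiplying factor $\prod_{j\in J} a_j$ lies in $\Acurvybar$ and anticommutes with $\Bcurvybar$.
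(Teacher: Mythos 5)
Your proposal is correct and follows essentially the same route as the paper's proof: construct $E'=\bigl(\prod_{j\in J}a_j\bigr)E$ commuting with all of $\Bcurvy$ (the paper cites proposition~\ref{prop:Psurvives} for this), then argue that trivial logical effect would force $E'\in\Bcurvy$ and hence $E$ to be a product of elements of $\Acurvy$ and $\Bcurvybar$, contradicting the hypothesis. Your middle step, explicitly tracking $E$ through $\Pi_{\Bcurvybar}(\vecmb)$ to show $E'$ is the effective post-measurement error up to phase and outcome relabelling, is a welcome extra verification that the paper leaves implicit in its definition of undetectability, but it does not change the structure of the argument.
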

 \begin{proof}
 	Let $E$ be a Pauli that anti-commutes with an $a \in \Acurvybar$, but commutes with all $s \in \Acurvy \cap \Bcurvy$, i.e. it is undetectable at $t=0$. There are two cases. Either $E$ commutes with all of $\Bcurvybar$ element-wise or there exists at least one $b \in \Bcurvybar$ with which $P$ anti-commutes with. In both cases, due to proposition~\ref{prop:Psurvives} one can find a low weight $E'=Ea$, for some $a \in \Acurvybar$, that commutes with all of $\Bcurvybar$ and thus with all of $\Bcurvy$ element-wise. This means $E'$ is now an element of $\mathcal{N}(\Bcurvy)$. In order for $E'$ to have trivial logical effect, it must be an element of $\Bcurvy$, which means $E$ must have been a product of elements in $\Acurvybar$ and $\Bcurvy$, or, equivalently, a product of elements in $\Acurvy$ and $\Bcurvybar$. Thus, if $E$ is not such a product, it cannot have trivial logical effect, meaning $E'$ is an element of $\mathcal{N}(\Bcurvy)\setminus \Bcurvy$.
  
 \end{proof}
 As error-correction is handled by $\Acurvy \cap \Bcurvy$, our first condition for our generalised logical unitary $U$ is \textbf{error detectability} (Eqn.~\eqref{eq:detectorspace} in definition~\ref{def:genunitaries}):
 \begin{equation}
 \Pi_{\Acurvy \cap \Bcurvy}	U\Pi_{\Acurvy}=U\Pi_{\Acurvy}.
 \end{equation}
%
%
  \subsubsection{Self-correction of errors}\label{sec:selfcorrecterror}
  
  After leaving the codespace $\Hil_{\Acurvy}$  (under the action of $U$) but remaining in $\Hil_{\Acurvy \cap \Bcurvy}$, elements in $\Acurvybar$ can become errors. As we subsequently measure $\Bcurvybar$ and fully randomise the signs of $a \in \Acurvybar$, these errors can no longer be detected. We therefore require them to self-correct, similar to how errors that are elements of  $\Bcurvybar$ occurring right  before the measurement self-correct in the standard Floquet transition. This means we require for all $a \in \Acurvybar$: 
  
  \begin{equation}\label{eq:selfcorrection}
  	\Pi_{\Bcurvybar} a U \Pi_{\Acurvy} = e^{i\alpha_b} \Pi_{\Bcurvybar} U \Pi_{\Acurvy},
  \end{equation}
  and $e^{i\alpha_b}$ is a phase generally dependent on $\vecmb$.
  This is Eqn.~\eqref{eq:selfcorrectiondef} in definition~\ref{def:genunitaries}.
  The condition of \textbf{self-correction} in turn leads to the following: For any $\vec{m}_b, \vec{m'}_b \in \mathbb{F}_2^{n_m}$ and any $\ket{\Phi}=U\ket{\Psi}_{\Acurvy}$:

  \begin{align}
   \bra{\Phi} \Pi_{\Bcurvybar}(\vec{m}'_b) \ket{\Phi} &= \bra{\Phi} a\Pi_{\Bcurvybar}(\vec{m}_b)a \ket{\Phi} \\
   &=\bra{\Phi} e^{-i\alpha_b}\Pi_{\Bcurvybar}(\vec{m}_b)e^{i\alpha_b}\ket{\Phi}\\
   &=\bra{\Phi} \Pi_{\Bcurvybar}(\vec{m}_b) \ket{\Phi} \label{eq:probsame},
  \end{align}
  where $a$ is some $a \in \Acurvybar$. Thus, the self-correction condition also demands that the  $\vec{m}_b$ are measured with uniform probability, i.e.:
  \begin{equation}\label{eq:genUequalprob}
  	\bra{\Psi} U^{\dag} \Pi_{\Bcurvybar}(\vecmb)U\ket{\Psi} =\frac{1}{2^{n_m}},
  \end{equation}
  for all $\ket{\Psi}\!\in \! \Hil_{\Acurvy}$ and $ \vecmb \! \in \! \mathbb{F}_2^{n_m}$.
  This is a sensible requirement to make for a Floquet even outside of the context of logical operators, as the probability to measure $\vec{m}_b$ influences the form of the effective error channel at a later time.  
  
  \subsection{Preservation of logical information and logical equivalence}\label{sec:logicalpres}
  
  Arguably, the main characteristic of Floquet codes are projective measurements, that - while not necessarily being logically trivial - preserve logical information. As already explained in section~\ref{sec:presinfo}, the actual results of the conjugate basis measurements are irrelevant for the encoded logical information. In this section require a generalised logical unitary that leaves the codespace $\Acurvy_t$ to still be compatible with this fundamental principle. That will lead to \eqref{eq:logicalequivdef}.
  
  For a reversible pair $\Acurvy \leftrightarrow \Bcurvy$, let $U$ be a unitary operation on the entire Hilbert space $\Hil^{\otimes n}$. It is followed by the projection $\Pi_{\Bcurvybar}$, where we did not specify the measurement results. 
  As discussed in section~\ref{sec:presinfo}, a fundamental principle of Floquet codes is the preservation of logical information. Thus, in order to preserve logical information, we require the combined action of $\Pi_{\Bcurvybar}$ and $U$ to be proportional to be a purity preserving channel, i.e. a unitary $U_{\Acurvy,\Bcurvy}$: 
  \begin{equation}\label{eq:propU}
  		\Pi_{\Bcurvybar}(\vec{m}_b)U\Pi_{\Acurvy}\propto U_{\Acurvy,\Bcurvy}(\vec{m}_b)\Pi_{\Acurvy}.
  \end{equation}  
 Note that $U_{\Acurvy,\Bcurvy}(\vec{m}_b)$ does not act on the entire space $\Hil^{\otimes n}$, but maps an element of $\Hil_{\Acurvy}$ to an element of $\Hil_{\Bcurvy}(\vec{m}_b)$. We summarise condition~\eqref{eq:propU} as the principle of \textbf{logical preservation} (Eqn.~\eqref{eq:logicalpresdef} in definition~\ref{def:genunitaries}). 
  
  Condition~\ref{eq:propU} is very general and the logical evolution of a state $\ket{\Psi} \in \Hil_{\Acurvy}$ can in principle be dependent on the measurement result $\vec{m}_b$. We believe that for the actual practical implementation of dynamical codes it would be inconvenient and intractable to keep up with the measurement results to compute the correct logical operation. Thus, we will further require that the resulting logical operation $U_{\Acurvy,\Bcurvy}$ is the same for all measurement results. To precisely state what that means, we briefly introduce the concept of logically equivalent paths. \\
 We would like to express that the state $\Pi_{\Bcurvybar}(\vec{m}_b) \ket{\Psi}$ is logically equivalent to $\Pi_{\Bcurvybar}(\vec{m}_b)\ket{\Psi}$ for some $\Psi \in \Hil_{\Acurvy}$. For this note that, as $\{a_i\}$ and $\{b_i\}$ are conjugate bases, for any $\vec{m}'_b \neq \vec{m}_b$, there exists some $a \in \Acurvybar$, such that:
  
  \begin{equation}
  	a \Pi_{\Bcurvybar}(\vec{m}_b) a = \Pi_{\Bcurvybar}(\vec{m}'_b),
  \end{equation}
  which means that:
  \begin{equation}\label{eq:logequiva}
  	\Pi_{\Bcurvybar}(\vec{m}'_b) \Pi_{\Acurvy} = a 	\Pi_{\Bcurvybar}(\vec{m}_b) \Pi_{\Acurvy}.
  \end{equation}
  We thus introduce the following definition for \emph{logically equivalent states}:
 \begin{definition}\label{def:logicalequiv}
  	Let $\Acurvy \leftrightarrow \Bcurvy$ be a reversible pair of stabiliser groups. We say that a state $\ket{\Psi} \in \Hil_{\Bcurvy}(\vec{m}_b)$ is \emph{logically equivalent} to a $\ket{\Phi} \in \Hil_{\Bcurvy}(\vec{m}'_b)$ if
  	\begin{equation}
  		\Pi_{\Acurvy}\ket{\Psi}=	e^{i\phi}\Pi_{\Acurvy}\ket{\Phi},
  	\end{equation}
  	where $e^{i\phi}$ is some phase with $\phi \in [0,2\pi]$.
  \end{definition}
  Per definition~\ref{def:logicalequiv} and due to Eqn.~\eqref{eq:logequiva} $\Pi_{\Bcurvybar}(\vec{m}_b) \ket{\Psi}$ and $\Pi_{\Bcurvybar}(\vec{m}'_b)\ket{\Psi}$ are then logically equivalent for some $\ket{\Psi} \in \Hil_{\Acurvy}$. With this, we can introduce the condition of  
 \textbf{logical equivalence.} For all $\vec{m}_b, \vec{m'}_b \in \mathbb{F}_2^{m_n}$, we require:
\begin{equation}
  \Pi_{\Acurvy}\Pi_{\Bcurvybar}(\vec{m}_b)U\Pi_{\Acurvy}\propto e^{i\phi_{b,b'}}\Pi_{\Acurvy}\Pi_{\Bcurvybar}(\vec{m}'_b)U\Pi_{\Acurvy}, \label{eq:logicalequiv}
 \end{equation}
where $e^{i \phi_{b,b'}}$ is a phase dependent on $\vecmb$ and $\vecmb'$. \newline 
To obtain the exact proportionality constant and summarise logical equivalence and logical preservation in Eqn.~\eqref{eq:logicalequivdef}, we show the following lemma:
  \begin{lemma}\label{lem:Uasame}
 	Let $\Acurvy \leftrightarrow \Bcurvy$ be a reversible pair of stabiliser groups and let $U$ be a unitary fulfilling conditions~\eqref{eq:detectorspace},~\eqref{eq:selfcorrectiondef},~\eqref{eq:propU} and Eqn.~\eqref{eq:logicalequiv}. Then, for all $\vecmb \in \mathbb{F}_2^{n_m}$ and a fixed $\Acurvybar$:
 	\begin{equation}
 	\Pi_{\Acurvy}	\Pi_{\Bcurvybar}(\vecmb)U\Pi_{\Acurvy}=\frac{1}{2^{n_m}}e^{i\phi_b} U_{\Acurvy} \Pi_{\Acurvy} \label{eq:PaPbUprop},
 	\end{equation}
 	where $U_{\Acurvy}: \Hil_{\Acurvy} \rightarrow \Hil_{\Acurvy}$ is the same for all $\vecmb$ and $e^{i\phi_b}$ is a phase dependent on $\vecmb$. Moreover:
 	\begin{equation}
 		\Pi_{\Bcurvybar}(\vecmb)U\Pi_{\Acurvy}=\frac{1}{2^{(n_m/2)}} e^{i\phi_b} K_{\Acurvy,\Bcurvy}(\vecmb)U_{\Acurvy}\Pi_{\Acurvy}
 	\end{equation}
 	for all $\vecmb \in \mathbb{F}_2^{n_m}$. $U_{\Acurvy}$ is a unitary that preserves $\Hil_{\Acurvy}$ and is independent of $\vecmb$ and $e^{i\phi_b}$ is the same phase as in Eqn.~\eqref{eq:PaPbUprop}.
 \end{lemma}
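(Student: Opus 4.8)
The plan is to first show that the whole family of operators $\Pi_{\Acurvy}\Pi_{\Bcurvybar}(\vecmb)U\Pi_{\Acurvy}$, indexed by $\vecmb$, are scalar multiples of a single unitary on $\Ha$, and then to pin down the scalar. I would begin by using error detectability (Eqn.~\eqref{eq:detectorspace}): it forces $U\Pi_{\Acurvy}$ into the relevant $\Scurvy$-eigenspace, and since $\Bcurvybar$ commutes with $\Scurvy$, the state $\Pi_{\Bcurvybar}(\vecmb)U\ket{\Psi}$ lies in $\Hil_{\Bcurvy}(\vecmb)$ for every $\ket{\Psi}\in\Ha$. Combined with logical preservation (Eqn.~\eqref{eq:propU}), $\Pi_{\Bcurvybar}(\vecmb)U\Pi_{\Acurvy}$ is then proportional to an isometry $U_{\Acurvy,\Bcurvy}(\vecmb)$ onto $\Hil_{\Bcurvy}(\vecmb)$. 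Applying $\Pi_{\Acurvy}$ from the left and invoking lemma~\ref{lem:orthpres} with the roles of $\Acurvy$ and $\Bcurvy$ exchanged—so that $\Pi_{\Acurvy}$ restricted to $\Hil_{\Bcurvy}(\vecmb)$ equals $2^{-n_m/2}K_{\Bcurvy,\Acurvy}(\vecmb)$, a scalar multiple of a unitary—shows that $\Pi_{\Acurvy}\Pi_{\Bcurvybar}(\vecmb)U\Pi_{\Acurvy}$ is itself proportional to a unitary on $\Ha$. Logical equivalence (Eqn.~\eqref{eq:logicalequiv}) then forces all these unitaries to coincide up to a scalar, giving a single $U_{\Acurvy}$ and scalars $c(\vecmb)$ with $\Pi_{\Acurvy}\Pi_{\Bcurvybar}(\vecmb)U\Pi_{\Acurvy}=c(\vecmb)\,U_{\Acurvy}\Pi_{\Acurvy}$.

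Next I would fix $|c(\vecmb)|$. The self-correction condition (Eqn.~\eqref{eq:selfcorrectiondef}) yields uniform measurement probabilities (Eqn.~\eqref{eq:genUequalprob}), i.e. $\|\Pi_{\Bcurvybar}(\vecmb)U\ket{\Psi}\|^2=2^{-n_m}$ for normalised $\ket{\Psi}\in\Ha$. Since $\Pi_{\Bcurvybar}(\vecmb)U\ket{\Psi}\in\Hil_{\Bcurvy}(\vecmb)$, a second application of the swapped lemma~\ref{lem:orthpres} contracts the \emph{squared} norm by a further factor $2^{-n_m}$, so $\|\Pi_{\Acurvy}\Pi_{\Bcurvybar}(\vecmb)U\ket{\Psi}\|^2=2^{-2n_m}$. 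Comparing with $\|c(\vecmb)U_{\Acurvy}\Pi_{\Acurvy}\ket{\Psi}\|^2=|c(\vecmb)|^2$ gives $|c(\vecmb)|=2^{-n_m}$, hence $c(\vecmb)=2^{-n_m}e^{i\phi_b}$, which is the first displayed identity.

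For the second identity I would invert the left projection. Because $\Pi_{\Bcurvybar}(\vecmb)U\Pi_{\Acurvy}$ maps into $\Hil_{\Bcurvy}(\vecmb)$, and $\Pi_{\Acurvy}$ acts there as $2^{-n_m/2}K_{\Bcurvy,\Acurvy}(\vecmb)=2^{-n_m/2}K_{\Acurvy,\Bcurvy}(\vecmb)^{-1}$ (lemma~\ref{lem:orthpres} together with lemma~\ref{lem:KunitaryHH}), one recovers the full image by multiplying with $2^{n_m/2}K_{\Acurvy,\Bcurvy}(\vecmb)$, giving $\Pi_{\Bcurvybar}(\vecmb)U\Pi_{\Acurvy}=2^{n_m/2}K_{\Acurvy,\Bcurvy}(\vecmb)\Pi_{\Acurvy}\Pi_{\Bcurvybar}(\vecmb)U\Pi_{\Acurvy}$. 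Substituting the first identity immediately yields $2^{-n_m/2}e^{i\phi_b}K_{\Acurvy,\Bcurvy}(\vecmb)U_{\Acurvy}\Pi_{\Acurvy}$, as claimed.

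The main obstacle is conceptual rather than computational: establishing that the common operator is genuinely a unitary and not merely a fixed contraction. This rests on two facts that must be combined carefully—logical preservation makes $\Pi_{\Bcurvybar}(\vecmb)U\Pi_{\Acurvy}$ an isometry (up to scale) whose image is \emph{all} of $\Hil_{\Bcurvy}(\vecmb)$, while lemma~\ref{lem:orthpres} makes $\Pi_{\Acurvy}$ a unitary (up to scale) precisely on that image; only their composition is guaranteed unitary. A secondary nuisance is the bookkeeping of the two independent factors of $2^{-n_m/2}$, one from projecting $U\ket{\Psi}$ onto $\Hil_{\Bcurvy}(\vecmb)$ and one from projecting back onto $\Ha$; keeping these straight is what produces the exact constants $2^{-n_m}$ and $2^{-n_m/2}$ in the two equations. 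Consistency of the phase $e^{i\phi_b}$ with the phases $e^{i(\phi_b-\phi'_b)}$ of the logical-equivalence condition is then automatic once $|c(\vecmb)|$ has been fixed.
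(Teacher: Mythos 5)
Your proposal is correct and follows essentially the same route as the paper's proof: logical preservation plus the factorisation through the transition operator (your $\Pi_{\Acurvy}\vert_{\Hil_{\Bcurvy}(\vecmb)}=2^{-n_m/2}K_{\Bcurvy,\Acurvy}(\vecmb)$ is exactly the paper's decomposition $U_{\Acurvy,\Bcurvy}(\vecmb)=K_{\Acurvy,\Bcurvy}U_{\Acurvy}(\vecmb)$, via lemma~\ref{lem:KunitaryHH}), logical equivalence to make $U_{\Acurvy}$ independent of $\vecmb$, and the uniform measurement probabilities from self-correction to pin down the modulus of the constant. The only difference is bookkeeping order — you fix the constant on the doubly projected identity $\Pi_{\Acurvy}\Pi_{\Bcurvybar}(\vecmb)U\Pi_{\Acurvy}$ first and then invert $\Pi_{\Acurvy}$ with $2^{n_m/2}K_{\Acurvy,\Bcurvy}(\vecmb)$ to obtain the second identity, whereas the paper fixes $c_b=2^{-n_m/2}e^{i\phi_b}$ in $\Pi_{\Bcurvybar}(\vecmb)U\Pi_{\Acurvy}=c_bK_{\Acurvy,\Bcurvy}(\vecmb)U_{\Acurvy}\Pi_{\Acurvy}$ first and then projects with $\Pi_{\Acurvy}$ — and the two orderings are equivalent.
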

 \begin{proof}
 	Take any two $\vecmb$, $\vecmbp$. Then, Eqn.~\eqref{eq:propU} implies:
 	\begin{equation}
 	\Pi_{\Acurvy}	\Pi_{\Bcurvybar}(\vecmb) U \Pi_{\Acurvy} \propto \Pi_{\Acurvy} U_{\Acurvy,\Bcurvybar}(\vecmb) \Pi_{\Acurvy}.
 	\end{equation}
 	As $U_{\Acurvy,\Bcurvybar}(\vecmb): \Hil_{\Acurvy} \rightarrow \Hil_{\Bcurvy}$ is bijective by definition, we can always find a $U_A (\vecmb) : \Hil_{\Acurvy} \rightarrow \Hil_{\Acurvy}$ such that:
 	\begin{equation}
 		U_{\Acurvy,\Bcurvy}(\vecmb) = K_{\Acurvy,\Bcurvy} U_{\Acurvy}(\vecmb),
 	\end{equation}
 	where $K_{\Acurvy,\Bcurvy}$ is the Floquet transition operator from definition~\ref{def:KdefHH}.
 	Thus, we obtain:
 	\begin{align}
 		\Pi_{\Acurvy}	\Pi_{\Bcurvybar}(\vecmb) U \Pi_{\Acurvy} &\propto \Pi_{\Acurvy}  K_{\Acurvy,\Bcurvy} U_{\Acurvy}(\vecmb) \Pi_{\Acurvy}\\
 		 &\propto U_{\Acurvy}(\vecmb) \Pi_{\Acurvy},\label{eq:propUunspec}
 	\end{align}
 	where we utilised Eqn.~\eqref{eq:papbpaid} and inserted the definition of $K_{\Acurvy,\Bcurvy}$. Due to condition~\eqref{eq:logicalequiv}, we conclude that:
 \begin{equation}\label{eq:Uasame}
 	U_{\Acurvy}(\vecmb)=U_{\Acurvy}(\vecmbp):= U_{\Acurvy},
 \end{equation}
 for all $\vecmb, \, \vecmbp \in \mathbb{F}_2^{n_m}$.
Therefore we have
 \begin{equation}\label{eq:actionsame}
 	\Pi_{\Acurvybar}\Pi_{\Bcurvybar}(\vecmb)U\Pi_{\Acurvy}\propto 	\Pi_{\Acurvybar}\Pi_{\Bcurvybar}(\vecmbp)U\Pi_{\Acurvy}
 \end{equation}
 and:
  \begin{equation}\label{eq:actionsameKab}
  \Pi_{\Bcurvybar}(\vecmb)U\Pi_{\Acurvy}\propto 	 K_{\Acurvy,\Bcurvy}(\vecmb)U_{\Acurvy}\Pi_{\Acurvy}
 \end{equation}
  for all $\vecmb$, $\vecmbp$ $\in \mathbb{F}_2^{n_m}$. \newline
It remains to show the exact proportionality constants in Eqn.~\eqref{eq:propUunspec} and Eqn.~\eqref{eq:actionsameKab}. Let $c_b \in \mathbb{C}$ denote this constant in Eqn.~\eqref{eq:actionsameKab}, i.e.:
\begin{equation}\label{eq:PiBUPiAconst}
		\Pi_{\Bcurvybar}(\vecmb)U \Pi_{\Acurvy}=c_b K_{\Acurvy,\Bcurvy}(\vecmb)U_{\Acurvy}\Pi_{\Acurvy}.
\end{equation}
In the remainder of the proof we will leave out the dependency of $\Pi_{\Bcurvybar}(\vecmb)$ on $\vecmb$ and simply write $\Pi_{\Bcurvybar}$. \newline 
As discussed in section~\ref{sec:errordetect}, error detectability and self correction, i.e. conditions~\eqref{eq:detectorspace} and~\eqref{eq:selfcorrectiondef}, imply that for any $\ket{\Psi} \in \Hil_{\Acurvy}$ the probability to measure $\vecmb$ is uniform (See Eqn.~\eqref{eq:genUequalprob}.). Thus:
\begin{align}
	\bra{\Psi} \! U^{\dag}	\Pi_{\Bcurvybar} U  \ket{\Psi}&=\! 2^{n_m}\!	\bra{\Psi}\! U_{\Acurvy}^{\dag} K^{\dag}_{\Acurvy,\Bcurvy}c^*_b \! \cdot \! c_b K_{\Acurvy,\Bcurvy} U_{\Acurvy} \!\ket{\Psi} \\
	&=|c_b|^2 \bra{\Psi}    \ket{\Psi} \\
	&=\frac{1}{2^{n_m}}
\end{align}
where we inserted Eqn.~\eqref{eq:PiBUPiAconst} in the first and Eqn.~\eqref{eq:genUequalprob} in the last line. We conclude that $c_b$ in Eqn.~\eqref{eq:PiBUPiAconst} is $c_b = 2^{-(n_m/2)}e^{i\phi_b}$, where $\phi_b$ generally depends on $\vecmb$. \newline 
Now we multiply the left and right hand side of Eqn.~\eqref{eq:PiBUPiAconst} by $\Pi_{\Acurvy}$ and insert $c_b = 2^{-(n_m/2)}e^{i\phi_b}$  to obtain (We leave out the dependency of $\Pi_{\Bcurvybar}$ on $\vecmb$): 
 \begin{align}
 \Pi_{\Acurvy}\Pi_{\Bcurvybar}U \Pi_{\Acurvy}&=c_b \Pi_{\Acurvy}K_{\Acurvy,\Bcurvy}U_{\Acurvy}\Pi_{\Acurvy}\\
 &=\frac{2^{(n_m/2)}e^{i\phi_b}}{2^{(n_m/2)}}\Pi_{\Acurvy} \Pi_{\Bcurvy}\Pi_{\Acurvy}U_{\Acurvy}\Pi_{\Acurvy}\label{eq:papbupaconst2}\\
 &=\frac{e^{i\phi_b}}{2^{n_m}}\Pi_{\Acurvy}  U_{\Acurvy}\Pi_{\Acurvy} \label{eq:papbupaconst3}\\
 &=\frac{e^{i\phi_b}}{2^{n_m}}  U_{\Acurvy}\Pi_{\Acurvy},
  \end{align}	
  whereby we inserted the definition of $K_{\Acurvy,\Bcurvy}$ (see definition~\ref{def:KdefHH}) in Eqn.~\eqref{eq:papbupaconst2} and utilised proposition~\ref{prop:pbpapbidHH} in Eqn.~\eqref{eq:papbupaconst3}.
 	\end{proof}
Lemma~\ref{lem:Uasame} lets us insert the exact proportionality constant into Eqn.~\eqref{eq:logicalequiv} and derive Eqn.~\eqref{eq:logicalequivdef} in definition~\ref{def:genunitaries}:
\begin{align}
	\Pi_{\Acurvy}\Pi_{\Bcurvybar}(\vecmb)U\Pi_{\Acurvy}&=\frac{1}{2^{n_m}}e^{i\phi_b}U_{\Acurvy}\Pi_{\Acurvy}\\
	&= e^{i(\phi_b-\phi'_b)}\frac{1}{2^{n_m}} e^{i\phi'_b}U_{\Acurvy}\Pi_{\Acurvy}\\
	&=e^{i(\phi_b-\phi'_b)} \Pi_{\Acurvy}\Pi_{\Bcurvybar}(\vecmbp)U\Pi_{\Acurvy},
\end{align}
where we used Eqn.~\eqref{eq:PaPbUprop} in the first and last line and inserted identity in the second.

\section{Canonical form of generalised logical unitaries in Floquet transitions}\label{sec:canonicalform}

In this section we derive a canonical form for generalised logical unitaries:
 
 \begin{theorem}\label{thm:canonicalform}
 	Let $\Bcurvy_e \subseteq \Bcurvybar$ and $\{\phi_b\}$ be a set of angles $\phi_b \in [0,2 \pi)$. For any generalised logical unitary as defined in definition~\ref{def:genunitaries}, one can find the following representation:
 		
 		\begin{equation}
 			U \Pi_{\Acurvy} = \exp(\sum_{b \in \Bcurvy_e} i\phi_b b) U_{\Acurvy} \Pi_{\Acurvy} \label{eq:UequalexpUa}
 		\end{equation}
 		where $U_{\Acurvy}\Pi_{\Acurvy}=\Pi_{\Acurvy}U_{\Acurvy}=\Pi_{\Acurvy}U_{\Acurvy}\Pi_{\Acurvy}$, and $U_{\Acurvy}$ acts as a unitary on $\Hil_{\Acurvy}$. 
 	\end{theorem}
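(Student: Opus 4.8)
The plan is to use Lemma~\ref{lem:Uasame} together with the completeness of the conjugate-basis projectors to collapse the entire branch structure of the measurement into a single operator that is diagonal in the joint eigenbasis of $\{b_i\}$, and then to recognise that operator as a Pauli exponential. Since the commuting generators $\{b_i\}$ of $\Bcurvybar$ resolve the identity, $\sum_{\vecmb\in\mathbb{F}_2^{n_m}}\Pi_{\Bcurvybar}(\vecmb)=\mathbb{1}$, I would begin by inserting this resolution to write $U\Pi_{\Acurvy}=\sum_{\vecmb}\Pi_{\Bcurvybar}(\vecmb)U\Pi_{\Acurvy}$ and then substitute, branch by branch, the expression from Lemma~\ref{lem:Uasame}, $\Pi_{\Bcurvybar}(\vecmb)U\Pi_{\Acurvy}=2^{-n_m/2}e^{i\varphi_{\vecmb}}K_{\Acurvy,\Bcurvy}(\vecmb)U_{\Acurvy}\Pi_{\Acurvy}$, where crucially $U_{\Acurvy}$ is the same code-preserving unitary for every outcome $\vecmb$ and $e^{i\varphi_{\vecmb}}$ is the outcome-dependent phase (the $e^{i\phi_b}$ of Lemma~\ref{lem:Uasame}).

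Next I would trade the transition operator for the projector. Definition~\ref{def:KdefHH} gives $2^{-n_m/2}K_{\Acurvy,\Bcurvy}(\vecmb)\Pi_{\Acurvy}=\Pi_{\Bcurvybar}(\vecmb)\Pi_{\Acurvy}$, so invoking the code-preservation identity $U_{\Acurvy}\Pi_{\Acurvy}=\Pi_{\Acurvy}U_{\Acurvy}\Pi_{\Acurvy}$ of Lemma~\ref{lem:Uasame} each summand reduces to $e^{i\varphi_{\vecmb}}\Pi_{\Bcurvybar}(\vecmb)U_{\Acurvy}\Pi_{\Acurvy}$. Summing over outcomes gives
\begin{equation}
U\Pi_{\Acurvy}=D\,U_{\Acurvy}\Pi_{\Acurvy},\qquad D:=\sum_{\vecmb\in\mathbb{F}_2^{n_m}}e^{i\varphi_{\vecmb}}\Pi_{\Bcurvybar}(\vecmb).
\end{equation}
Because the $\Pi_{\Bcurvybar}(\vecmb)$ are mutually orthogonal and sum to $\mathbb{1}$, the operator $D$ is a unitary on $\Hil^{\otimes n}$ that is diagonal in the joint eigenbasis of $\{b_i\}$, with eigenvalue $e^{i\varphi_{\vecmb}}$ on the $\vecmb$-eigenspace.

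It then remains to cast $D$ in the advertised form. I would write $D=\exp(iH)$ with Hermitian generator $H=\sum_{\vecmb}\varphi_{\vecmb}\,\Pi_{\Bcurvybar}(\vecmb)$. As a function of the mutually commuting order-two operators $\{b_i\}$, $H$ admits a finite multilinear expansion $H=\sum_{S\subseteq\{1,\dots,n_m\}}c_S\,b_S$ with $b_S:=\prod_{i\in S}b_i$ and real coefficients obtained by Walsh--Hadamard inversion, $c_S=2^{-n_m}\sum_{\vecmb}\varphi_{\vecmb}\,(-1)^{\langle S,\vecmb\rangle}$. The $S=\varnothing$ term is a multiple of the identity and contributes only a global phase $e^{ic_\varnothing}$, which I would absorb into $U_{\Acurvy}$ (leaving it a code-preserving unitary). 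Every remaining $b_S$ with $S\neq\varnothing$ is a product of conjugate-basis generators representing a nontrivial class of $\Bcurvy/\Scurvy$, hence lies in $\Bcurvybar$; since all the $b_S$ commute the exponential factorises, and setting $\Bcurvy_e:=\{b_S:S\neq\varnothing\}\subseteq\Bcurvybar$ and $\phi_{b_S}:=c_S$ yields $D=e^{ic_\varnothing}\exp\!\big(\sum_{b\in\Bcurvy_e}i\phi_b b\big)$, which after absorbing the phase is exactly Eqn.~\eqref{eq:UequalexpUa}. The algebraic properties of $U_{\Acurvy}$ carry over directly from Lemma~\ref{lem:Uasame}.

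The step I expect to be the main obstacle is this last one: justifying cleanly that a unitary diagonal in the $\{b_i\}$-eigenbasis is precisely the exponential of a real combination of the commuting products $b_S\in\Bcurvybar$, and bookkeeping the identity-component phase so that $U_{\Acurvy}$ remains code-preserving. Everything preceding it is a mechanical assembly of Lemma~\ref{lem:Uasame}, the completeness relation for $\Pi_{\Bcurvybar}(\vecmb)$, and the definition of $K_{\Acurvy,\Bcurvy}$.
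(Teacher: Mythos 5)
Your proof is correct, but it takes a genuinely shorter route than the paper's. The paper does \emph{not} obtain the factorisation from Lemma~\ref{lem:Uasame}: it expands $U$ in the Pauli basis, uses error detectability (Eqn.~\eqref{eq:detectorspace}) to discard all terms outside $\mathcal{N}(\Acurvy\cap\Bcurvy)$, invokes proposition~\ref{prop:Psurvives} to write every surviving Pauli as an element of $\Bcurvybar$ times an element of $\mathcal{N}(\Acurvy)$, groups the sum by logical classes $q\in\mathcal{N}(\Acurvy)/\Acurvy$, and then combines linear independence of the class representatives with the logical-equivalence condition to show all measurement branches carry the same operator up to a phase; Lemma~\ref{lem:Uasame} enters only at the very end, to certify that the code-preserving factor $L_A$ is unitary on $\Hil_{\Acurvy}$. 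You instead let Lemma~\ref{lem:Uasame} do all of that work at once: since it already supplies a single $\vecmb$-independent unitary $U_{\Acurvy}$ with $\Pi_{\Bcurvybar}(\vecmb)U\Pi_{\Acurvy}=2^{-n_m/2}e^{i\phi_b}K_{\Acurvy,\Bcurvy}(\vecmb)U_{\Acurvy}\Pi_{\Acurvy}$, summing over the resolution of identity and unpacking $K_{\Acurvy,\Bcurvy}$ via definition~\ref{def:KdefHH} immediately yields $U\Pi_{\Acurvy}=D\,U_{\Acurvy}\Pi_{\Acurvy}$ with $D=\sum_{\vecmb}e^{i\phi_b}\Pi_{\Bcurvybar}(\vecmb)$, which matches the factorisation the paper reaches in Eqn.~\eqref{eq:UPiALaLbPiA} up to how the global phase is apportioned. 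This is legitimate and non-circular (the lemma is proved before, and independently of, the theorem), and it buys brevity: the Pauli-expansion, class-grouping and linear-independence bookkeeping disappear. What the paper's longer route buys is an explicit construction of the code-preserving factor out of the Pauli coefficients $\alpha_P$ of $U$, i.e.\ structural information about how the conditions of definition~\ref{def:genunitaries} constrain $U$'s expansion, which your argument hides inside the lemma. The step you flagged as the likely obstacle is actually the one place where you and the paper coincide: the paper also converts the diagonal unitary into a Pauli exponential by noting that the $2^{n_m}$ products of the $b_i$ span the same algebra as the projectors $\Pi_{\Bcurvybar}(\vecmb)$; your Walsh--Hadamard inversion just makes the coefficients explicit, and absorbing the $S=\varnothing$ phase into $U_{\Acurvy}$ is sound. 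One small cleanup: to see that $b_S\in\Bcurvybar$ for $S\neq\varnothing$, the quickest argument is that $b_S$ anti-commutes with each $a_i$, $i\in S$, while every element of $\Acurvy\cap\Bcurvy$ commutes with all of $\Acurvy$; this replaces your appeal to nontriviality of the class of $b_S$ in $\Bcurvy/\Scurvy$.
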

 	
 	\begin{proof}
 		We start by decomposing $U$ into the Pauli basis:
 		\begin{equation}
 			U = \sum_P \alpha_P P.
 		\end{equation}
 		We abbreviate $\Acurvy \cap \Bcurvy$ by $\Scurvy$.
 		Error detectability (Eqn.~\eqref{eq:detectorspace}) implies:
 		\begin{align}
 		U\Pi_{\Acurvy}	&=\left(\sum_{P \in \mathcal{N}(\Scurvy)} \alpha_P P +\sum_{P \notin \mathcal{N}(\Scurvy)} \alpha_P P \right) \Pi_{\Acurvy}\\
 	&=	\Pi_{\Scurvy}\left(\sum_{P \in \mathcal{N}(\Scurvy)} \alpha_P P +\sum_{P \notin \mathcal{N}(\Scurvy)} \alpha_P P \right) \Pi_{\Acurvy}. \label{eq:splitupU}
 	\end{align}
 	Note that for any $P \notin \mathcal{N}(\Scurvy)$ we have $\Pi_{\Scurvy}P \Pi_{\Acurvy}=0$. Therefore, the second sum in Eqn.~\eqref{eq:splitupU} is zero. Moreover, we can commute $\Pi_{\Scurvy}$ through the sum over elements in $\mathcal{N}(\Scurvy)$ and re-absorb it into $\Pi_{\Acurvy}=\Pi_{\Scurvy}\Pi_{\Acurvybar}$. We arrive at:
 	\begin{align}
 		U\Pi_{\Acurvy}	&=\Pi_{\Scurvy}\left(\sum_{P \in \mathcal{N}(\Scurvy)} \alpha_P P\right) \Pi_{\Acurvy}\\
 		&= \left(\sum_{P \in \mathcal{N}(\Scurvy)} \alpha_P P\right) \Pi_{\Acurvy}.\label{eq:sumovernormS}
 		\end{align}
 	 We have obtained a representative for $U$ that commutes with $\Pi_{\Scurvy}$. \newline
 	We will now show that $U\Pi_{\Acurvy}$ must factorise into a product of an operator that preserves $\Hil_{\Acurvy}$ and an operator that is spanned by the elements of $\Bcurvybar$. According to proposition~\ref{prop:Psurvives}, we can find a $b \in \Bcurvybar$ for each $P$ in the sum in Eqn.~\eqref{eq:sumovernormS} such that $Q:=b P \in \mathcal{N}(\Acurvy)$. We thus rewrite the sum as:
 	\begin{align}
 		U\Pi_{\Acurvy}&=\sum_{Q\in \mathcal{N}(\Acurvy)} \left(\sum_{b \in \Bcurvybar} \alpha_{b Q} b\right)Q\Pi_{\Acurvy}.\label{eq:splitBequiv}
 	\end{align}
 	The summands in Eqn.~\eqref{eq:splitBequiv} generally contain multiple $Q$ with the same action on $\Pi_{\Acurvy}$, i.e. there can be a $Q'$ and a $Q$, such that $Q \Pi_{\Acurvy}= Q'\Pi_{\Acurvy}$. More precisely, the logical action of elements of $\mathcal{N}(\Acurvy)$ is determined by the quotient group $\sigma:=\mathcal{N}(\Acurvy)/\Acurvy$ and when $Q$ and $Q'$ are both representatives of the same $q \in \sigma$, then $Q'\Pi_{\Acurvy}=Q\Pi_{\Acurvy}$. 
 	For each $q \in \sigma$, we define:
 	\begin{equation}
 		L_q:= \sum_{Q \in q} \left(\sum_{b \in \Bcurvybar} \alpha_{bQ} b\right).
 	\end{equation}
 	Then, $U\Pi_{\Acurvy}$ becomes:
 	\begin{align}\label{eq:UequalLqQ}
 	U\Pi_{\Acurvy}&=  \sum_{q \in \sigma} L_q Q\Pi_{\Acurvy}, 
 	\end{align}
 	where $Q$ is any representative of $q$. Importantly, we have chosen a $Q \in q$ for each equivalence class $q \in \sigma$ and reduce each $Q' \Pi_{\Acurvy}$ to $Q \Pi_{\Acurvy}$ for every other $Q' \in q$. We will now show that due to the principle of logical equivalence, i.e. condition~\eqref{eq:logicalequivdef}, $U \Pi_{\Acurvy}$ in Eqn.~\eqref{eq:UequalLqQ} splits up into the product of an operator $L_A$ that preserves $\Hil_{\Acurvy}$ and an operator $L_b$ that preserves $\Hil_{\Bcurvy}(\vecmb)$. To do so, we first compute $\Pi_{\Bcurvybar}(\vecmb)U\Pi_{\Acurvy}$. It collapses each $L_Q$ onto a scalar $l_Q(\vecmb)$ dependent on the measurement results $\vecmb$ (which determine the eigenvalues of $\Bcurvybar$):
 	 \begin{align}
 	 	\Pi_{\Bcurvybar}(\vecmb)U\Pi_{\Acurvy}&=\Pi_{\Bcurvybar}(\vecmb)\left( \sum_{q \in \sigma} L_q Q\right)\Pi_{\Acurvy}\\
 	 	&=\Pi_{\Bcurvybar}(\vecmb)\left( \sum_{q \in \sigma} l_Q(\vecmb) Q\right)\Pi_{\Acurvy}.\label{eq:UdecomplqQ}
 	 \end{align}
 	 Condition~\eqref{eq:logicalequivdef} implies that for any two $\vecmb$, $\vecmbp$:
 	 \begin{align}
   \Pi_{\Acurvy}	\Pi_{\Bcurvybar}(\vecmb)U\Pi_{\Acurvy}
 	& =e^{i(\phi_{b}-\phi'_b)} \Pi_{\Acurvy}\Pi_{\Bcurvybar}(\vecmbp) U \!\Pi_{\Acurvy}.\label{eq:Usameephiphiprime}
 	 	 \end{align} 
 	 Each $Q$ individually commutes with $\Pi_{\Acurvy}$ and thus the entire sum in Eqn.~\eqref{eq:UdecomplqQ} commutes with $\Pi_{\Acurvy}$. Therefore, we can pull $\Pi_{\Acurvy}$ through and use Eqn.~\eqref{eq:Usameephiphiprime} and proposition~\ref{prop:pbpapbidHH} to conclude:
 	 	 \begin{align}\label{eq:operatorssame}
 	  \left( \sum_{q \in \sigma} l_Q(\vecmb) Q\right)  \Pi_{\Acurvy}	  =e^{i(\phi_b-\phi'_b)} \left( \sum_{q \in \sigma} l_Q(\vecmbp) Q\right).
 	 \end{align}
 	 The terms $Q\Pi_{\Acurvy}$ in the sum in Eqn.~\eqref{eq:operatorssame} are linearly independent, as there is only one term per equivalence class.
 	  Thus, Eqn.~\eqref{eq:operatorssame} implies that for all $q \in \sigma$ (and thus for the respective representative $Q$):
 	 \begin{equation}\label{eq:lblbprimeprop}
 	 	\frac{l_Q(\vecmb)}{l_Q(\vecmbp)}= e^{i(\phi_b-\phi'_b)}
 	 \end{equation}
 	 where $e^{i(\phi_b-\phi'_b)}$ is a phase independent of $Q$. Eqn.~\eqref{eq:lblbprimeprop} will serve as an important tool to show how $U\Pi_{\Acurvy}$ splits up. We will now insert identity as a decomposition into the projection operators $\Pi_{\Bcurvybar}(\vecmb)$, i.e.:
 	 \begin{equation}\label{eq:identitydecomp}
 	 	\mathbb{1} =\sum_{\vecmb}\Pi_{\Bcurvybar}(\vecmb).
 	 \end{equation}
 	 For notational convenience, we will denote the all positive measurement result $\vec{m}_b=(1,1,\ldots,1)$ by $\vec{m}$. The projection operator onto the joint \emph{positive} eigenspace of $\Bcurvybar$ is then given by $\Pi_{\Bcurvybar}(\vec{m})$. Then, we can write:
 	 \begin{equation}
 	 	\mathbb{1} =\Pi_{\Bcurvybar}(\vec{m})+\sum_{\vecmb \neq \vec{m}}\Pi_{\Bcurvybar}(\vecmb).
 	 \end{equation}
 	 For $\vec{m}$ and any $\vecmb \neq \vec{m}$, Eqn.~\eqref{eq:lblbprimeprop} holds, i.e.:
 	 	 \begin{equation}\label{eq:lblbprimeprop2}
 	 	\frac{l_P(\vecmb)}{l_P(\vec{m})}= e^{i(\phi_{b}-\phi)},
 	 \end{equation}
 	 where we simplified the notation for the phase $e^{i\phi}$ induced by the measurement result $\vec{m}$.
 	 We then use Eqn.~\eqref{eq:lblbprimeprop2} and insert identity as defined in Eqn.~\eqref{eq:identitydecomp} to derive the following form of $U \Pi_{\Acurvy}$:
 	\begin{align}
 		U \Pi_{\Acurvy} &= \left(\Pi_{\Bcurvybar}(\vec{m})+\sum_{\vecmb \neq \vec{m}}\Pi_{\Bcurvybar}(\vecmb)\right) U \Pi_{\Acurvy} \\
 			&= \Pi_{\Bcurvybar}(\vec{m}) \left(\sum_{q} l_Q(\vec{m}) Q \right)\Pi_{\Acurvy}\label{eq:insertPbsum} \\
 			 &\hspace{2.5em} + \!\sum_{\vecmb \neq \vec{m}}\! \Pi_{\Bcurvybar}(\vecmb) e^{i(\phi_{b}-\phi)} \left(\sum_{q} l_Q(\vec{m}) Q \right)\!\Pi_{\Acurvy}.\notag 
 			\end{align}
 	In line~\eqref{eq:insertPbsum}, we split up the set $\{\vecmb\}$ into the fixed $\vec{m}$ and all other $\vecmb \neq \vec{m}$. The second sum only runs over all $\vecmb \neq \vec{m}$. In the following, we will write $l_Q$ for $l_Q(\vec{m})$ and define:
 	\begin{equation}
 	L_A :=	\sum_{q} l_q Q. 
 	\end{equation}
 	Simplifying line~\eqref{eq:insertPbsum} leads to:		
 	\begin{align}
 			 U \Pi_{\Acurvy} &\!=\! \!\left[\Pi_{\Bcurvybar}(\vec{m})\!+\!e^{-i\phi}  \!\sum_{\vecmb \neq \vec{m}} e^{i\phi_{b}} \Pi_{\Bcurvybar}(\vecmb) \!\right]\! L_A \Pi_{\Acurvy} \label{eq:projdecomp}\\
 			&=L_b L_A\Pi_{\Acurvy},\label{eq:UPiALaLbPiA}
 			\end{align}
 			where we defined:
 		\begin{equation}
 			L_b := \Pi_{\Bcurvybar}(\vec{m})+ e^{-i\phi}\sum_{\vecmb \neq \vec{m}} e^{i\phi_b} \Pi_{\Bcurvybar}(\vecmbp). \label{eq:Lbdef}
 		\end{equation}  
 	 $L_b$, as defined in Eqn.~\eqref{eq:Lbdef} only acts as a phase on any $\ket{\Psi} \in \Hil_{\Bcurvy}(\vecmb)$ for all $\vecmb \in \mathbb{F}_2^{n_m}$ and thus preserves its norm and the orthogonality of two orthogonal vectors. As the subspaces $\Hil_{\Bcurvy}(\vecmb)$ span the entire Hilbert space, we can conclude that $L_b$ is unitary on the entire Hilbert space. For notational convenience, we will re-define:
 	 \begin{equation}
 	 	U_b := e^{i\phi}L_b,
 	 \end{equation}
 	 or written out:
 	 \begin{equation}
 	 	U_b:=  \sum_{\vecmb} e^{i\phi_b} \Pi_{\Bcurvybar}(\vecmbp), \label{eq:Lbdef2}
 	 \end{equation}
 	 where the phase for $\vecmb = \vec{m}$ is the  $e^{i\phi}$ from Eqn.~\eqref{eq:Lbdef}.
 	It remains to show that $L_A$ acts as a unitary on $\Hil_{\Acurvy}$. Lemma~\ref{lem:Uasame} requires that for any $\vecmb$:
 		\begin{align}
 	 \Pi_{\Acurvy}	\Pi_{\Bcurvybar}(\vec{m}_b)U\Pi_{\Acurvy}&=\frac{1}{2^{n_m}} e^{i\phi_b}U_{\Acurvy}\Pi_{\Acurvy},
 	\end{align}
 	 where $U_{\Acurvy}:\Hil_{\Acurvy} \rightarrow \Hil_{\Acurvy}$ is unitary on $\Hil_{\Acurvy}$.
 	Therefore, if we insert $U\Pi_{\Acurvy}=L_b L_A\Pi_{\Acurvy}$ and Eqn.~\eqref{eq:Lbdef2}, we obtain:
 	\begin{align}
 	\Pi_{\Acurvy}\Pi_{\Bcurvybar}(\vec{m}_b)L_b L_A\Pi_{\Acurvy}&=	\Pi_{\Acurvy}\Pi_{\Bcurvybar}(\vec{m}_b)e^{i\phi_b} L_A\Pi_{\Acurvy}\\
 	&=\Pi_{\Acurvy}\Pi_{\Bcurvybar}(\vec{m}_b)\Pi_{\Acurvy}e^{i\phi_b} L_A\Pi_{\Acurvy}\label{eq:movePiApastL_A}\\
 	&=\frac{1}{2^{n_m}}    e^{i\phi_b}L_A \Pi_{\Acurvy} \label{eq:insertprop}\\
 	& =\frac{1}{2^{n_m}} e^{i\phi_b}U_{\Acurvy}\Pi_{\Acurvy}.
 		\end{align}
 where we made use of the fact that $L_A$ commutes with $\Pi_{\Acurvy}$ in line~\eqref{eq:movePiApastL_A} and used proposition~\ref{prop:pbpapbidHH} in line~\eqref{eq:insertprop}. We conclude that $L_A$ acts as a unitary on $\Hil_{\Acurvy}$. 
We will now show that each $U_b$, as defined in Eqn.~\eqref{eq:Lbdef2} has a representative of the form:
\begin{equation}
	U_b = \exp(\sum_{b \in \Bcurvybar}i \phi_b b),
\end{equation}
where each $\phi_b \, \in \, [0,2\pi]$.
We already know $U_b$ is block-diagonal.
One can quickly show, as projection operators are idempotent and simplify the matrix exponential, that each $U_b$ can be written as:
\begin{equation}
	U_b = \exp(i\sum_{\vecmb}\phi_b \Pi_{\Bcurvybar}(\vecmb)).
\end{equation}
It remains to show that the elements of $\Bcurvybar$ span the same vector space as the one spanned by $\Pi_{\Bcurvybar}(\vecmb)$. There are $2^{n_m}$ linearly independent elements of $\Pi_{\Bcurvybar}(\vecmb)$. There are $2^{n_m}$ linearly independent elements in $\Bcurvybar$, i.e. $\mathbb{1}$, $b_0$, $b_1$, $b_0b_1$, etc. Each of them is block-diagonal, thus each of them is an element of $\langle \Pi_{\Bcurvybar}(\vecmb)\rangle$, where $\langle . \rangle$ denotes the span over $\mathbb{R}$. Hence, $\Bcurvybar$ is a basis for $\langle \Pi_{\Bcurvybar}(\vecmb)\rangle$, and one can equivalently write:
 \begin{equation}
 	U_b = \exp(i\sum_{b \in \Bcurvybar}\alpha_b b),
 \end{equation}
where $\alpha_b \in \mathbb{R}$. As all $b \in \Bcurvybar$ commute element-wise and each individual $\exp(i\alpha_b b)$ simplifies to $\cos(\phi_b)\mathbb{1}+i \sin(\phi_b)b$, we can choose $\alpha_b$ from $[0,2\pi]$, which finalises the proof.
 \end{proof}

 \section{The Bravyi-K\"onig theorem for logical operations beyond code preservation}\label{sec:BKbeyondcodepres}
 
 In this section we will show the main result of our work, namely that the Bravyi-K\"onig theorem also holds for generalised logical operations as defined in section~\ref{sec:canonicalform}. 
 
 We denote the system size by $n$. Strictly speaking, we are discussing a \emph{family} of circuits $\{U_i\}$ containing the implementation of the corresponding logical unitary $U_L: \Hil^{\otimes k} \rightarrow \Hil^{\otimes k}$, where $k$ is the number of logical qubits, for each system size $n$. We say that the family of circuits (or simply the circuit) is \emph{constant depth} if for each $n$ the range $r$ and depth $h$ of the circuit is constant with respect to $n$, i.e. $r, h = 
 \mathcal{O}(1)$. 
 
  Given two locally conjugate stabiliser groups $\Acurvy$ and $\Bcurvy$, recall the canonical form of a generalised logical operation in a transition $\Acurvy \rightarrow \Bcurvy$ (theorem~\ref{thm:canonicalform}.):
 \begin{equation}\label{eq:Ugenrep}
 	U = \exp(i \sum_{b \in \Bcurvy_{e}} \phi_b b)U_a = e^{i  \vec{\phi} \vec{b} }U_a
 \end{equation} 
 where $U_a$ preserves the codespace $\Hil_{\Acurvy}$, $\phi_b \in [0,2\pi)$ and $\Bcurvy_e \subseteq \Bcurvybar$.
 We remove all $e^{i \phi_b b}$ with $\phi_b \in \{2\pi k, (\pi/2) k \}$, $k \in \mathbb{Z}$, because those constitute transversal Pauli operators. Then, when we say that $e^{i\vec{\phi}\vec{b}}$ does not permit an implementation as a constant-depth circuit, it means that for any constant $C>0$, there exists a system size $n$ for which there exists a $b \in B_e$, such that $\mathrm{diam}(\mathrm{supp}(b)) > C$. In the rest of this section, we will prove theorem~\ref{thm:explocal}:~\begin{theorem}\label{thm:explocal}
	Let $\Acurvy$ and $\Bcurvy$ be topological stabiliser codes that are locally conjugate stabiliser groups with local generating sets $\{a_i\}$ and $\{b_i\}$. Let $\Bcurvy_e \subseteq \Bcurvybar$. Let
	\begin{equation}\label{eq:Uprod}
		U=  e^{i  \vec{\phi} \vec{b} }U_{\Acurvy},
	\end{equation}
	where $U_{\Acurvy}$ preserves the codespace $\mathcal{H}_{\Acurvy}$. If $e^{i\vec{\phi}\vec{b}}$ has no implementation as a constant-depth circuit, then $U$ cannot be constant-depth either.
\end{theorem}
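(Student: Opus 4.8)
The plan is to prove the contrapositive: assuming $U$ is a constant-depth, constant-range circuit with light-cone radius $\rho=\mathcal{O}(1)$, I will show that every $b\in\Bcurvy_e$ with $\phi_b\notin(\pi/2)\mathbb{Z}$ has $\mathrm{diam}(\mathrm{supp}(b))=\mathcal{O}(\rho)$, i.e.\ that $e^{i\vec{\phi}\vec{b}}$ is itself constant-depth. Two facts drive the argument. First, the code-preserving factor $U_{\Acurvy}$ washes out of any quantity evaluated on the codespace: for $\ket{\psi}\in\Hil_{\Acurvy}$ the vector $\ket{\psi''}:=U_{\Acurvy}\ket{\psi}$ is again a codestate, so $U\ket{\psi}=e^{i\vec{\phi}\vec{b}}\ket{\psi''}$. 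Second, conjugation by a constant-depth $U$ keeps supports inside a constant neighbourhood $\mathcal{B}_{\rho}(\cdot)$ (the standard Lieb--Robinson light-cone bound already used in the original BK theorem). The idea is to exhibit a pair of far-separated local detector operators whose connected correlation in the state $U\ket{\psi}$ is forced to vanish by the second fact, yet is computed to be nonzero by the first whenever some $b\in\Bcurvy_e$ is geometrically large.

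For the vanishing side I would invoke local indistinguishability of topological stabiliser codes. If $A,B$ are operators of constant support at distance $>2\rho$, their pull-backs $U^{\dagger}AU$ and $U^{\dagger}BU$ have disjoint supports, each of weight below the (growing) code distance $d$; by Knill--Laflamme each acts as a scalar on $\Hil_{\Acurvy}$, and a short argument (any low-weight Pauli whose two far components multiply into $\Acurvy$ must already have each component in $\Acurvy$) shows the connected correlation factorises, so $\mathrm{Cov}_{U\psi}(A,B)=\langle AB\rangle-\langle A\rangle\langle B\rangle=0$.

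For the nonvanishing side I take the detectors to be two dual generators $a_j,a_k\in\Acurvybar$ whose conjugate partners sit near the two far ends of a large $b^{*}\in\Bcurvy_e$. Writing $V:=e^{i\vec{\phi}\vec{b}}$ and using $V^{\dagger}a_jV=a_j\exp\!\big(2i\sum_{b\ni j}\phi_b b\big)$ together with $a_j\ket{\psi''}=\eta_j\ket{\psi''}$ and the fact that every nontrivial element of $\Bcurvybar$ has vanishing expectation in a codestate, one evaluates everything on $\ket{\psi''}$ and obtains, in the generic case, $\mathrm{Cov}_{U\psi}(a_j,a_k)=\eta_j\eta_k\,C_{j}C_{k}\,(1-C_{jk}^{2})$, where $C_{j},C_{k},C_{jk}$ are products of $\cos 2\phi_b$ over the $b$ touching only $j$, only $k$, and both, respectively. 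Since $b^{*}$ touches both $j$ and $k$ and has $\phi_{b^{*}}\notin(\pi/2)\mathbb{Z}$, the factor $1-C_{jk}^{2}$ is nonzero; for a suitable pair this covariance is nonzero, contradicting the vanishing side and forcing $\mathrm{diam}(\mathrm{supp}(b^{*}))\le 2\rho=\mathcal{O}(1)$.

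The main obstacle is guaranteeing that the signal $\eta_j\eta_k C_{j}C_{k}(1-C_{jk}^{2})$ can be made nonzero by a \emph{valid} choice of far pair $(j,k)$, because overlaps among the commuting Paulis of $\Bcurvy_e$ spoil the clean product formula in two ways: linear dependencies (a subset of the $b$'s multiplying to the identity) introduce extra identity components, and Clifford angles $\phi_b=\pi/4$ produce zeros of $\cos 2\phi_b$ that can annihilate $C_{j}$ or $C_{k}$ if such a $b$ separates the pair. I would address this by selecting $b^{*}$ of maximal diameter (or weight) among the elements with $\phi_b\notin(\pi/2)\mathbb{Z}$, choosing an extremal pair of sites deep inside its two far halves so that no cosine-vanishing generator straddles them, and, if necessary, dressing $a_j,a_k$ by local Paulis to restore a nonzero component; the $l$-locality of the conjugate bases keeps all this bookkeeping geometrically well-defined. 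Establishing this combinatorial non-degeneracy, rather than either correlation computation on its own, is where the real work lies.
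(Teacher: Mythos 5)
Your proposal follows essentially the same route as the paper's proof: a contradiction between (i) the vanishing of connected correlations of two far-separated detectors on states in $U\Hil_{\Acurvy}U^{\dag}$, obtained by pulling the detectors back through the constant-depth $U$ and invoking local indistinguishability of the topological stabiliser code (the paper's lemma~\ref{lem:correlationstab}), and (ii) a direct evaluation of the same correlator using $U=e^{i\vec{\phi}\vec{b}}U_{\Acurvy}$, in which $U_{\Acurvy}$ drops out on codestates and the conjugation formula yields exactly your expression --- the paper's Eqn.~\eqref{eq:Cfinalform} is precisely $\left|C_jC_k\right|\cdot\left|1-C_{jk}^2\right|$ in your notation, with the detectors being two distant $a,a'\in\Acurvybar$ conjugate to far-apart local factors of the large $b^{*}$.

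One comparative remark is worth making. The ``combinatorial non-degeneracy'' that you defer as the real work is handled in the paper only by fiat: after stripping the layers with $\phi_b\in(\pi/2)\mathbb{Z}$ (transversal Paulis), the paper asserts that all remaining factors $\cos(2\phi_b)$ lie strictly between $0$ and $1$. That assertion does not follow from the stated restriction --- angles $\phi_b\equiv\pi/4\pmod{\pi/2}$ survive the stripping and give $\cos(2\phi_b)=0$, which can annihilate $C_j$ or $C_k$ exactly as you worry, and the clean product formulas also implicitly assume no nonempty subset of $\Bcurvy_e$ multiplies into $\Acurvy$ (your linear-dependency point). Note, however, that half of your concern is unnecessary: the factor $1-C_{jk}^2$ is automatically nonzero, since every surviving angle has $\cos^2(2\phi_b)<1$ and the set of $b$'s anti-commuting with both detectors is nonempty (it contains $b^{*}$); only cosine zeros in the exclusive sets entering $C_j$ and $C_k$ are dangerous. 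So your proposal is correct exactly to the extent that the paper's own argument is, and it is in fact more candid about the degenerate-angle and dependency cases, which neither treatment fully resolves.
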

From theorem~\ref{thm:explocal} and the original BK theorem~\ref{thm:BKog}, the main statement of our work immediately follows as a corollary:~ \begin{corollary}\label{thm:main}
 	Let $\Acurvy_1 \rightarrow \Bcurvy_1 \rightarrow \ldots \Acurvy_{\tau}$ denote a sequence $D$-dimensional topological stabiliser codes, where $A_t \subseteq \mathcal{P}_n$ for each $t$ and $\tau =\mathcal{O}(1)$. Let each consecutive pair $A_t$, $A_{t+1}$ in the sequence be locally conjugate. Let $\{U_1, U_2, \ldots, U_{\tau}\}$ be a set of constant-depth unitaries that each fulfil the conditions listed in definition~\ref{def:genunitaries} for general logical unitaries in Floquet codes for the respective transitions $\Acurvy_t \rightarrow \Acurvy_{t+1}$. Then, the combined logical action of all unitaries applied at the individual time steps and the conjugate basis measurements inducing the transitions $\Acurvy_t \rightarrow \Acurvy_{t+1}$ is an element of the $D$-th level of the Clifford Hierarchy.
 \end{corollary}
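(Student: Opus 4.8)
The plan is to collapse the entire dynamical process into a single unitary $W$ on the full Hilbert space $\Hil^{\otimes n}$ with three properties: (i) $W$ is constant-depth, (ii) $W$ is a morphism between the $D$-dimensional topological stabiliser codes $\Acurvy_1$ and $\Acurvy_{\tau}$ in the sense of theorem~\ref{thm:BKog}, i.e. $W\Pi_{\Acurvy_1}W^{\dag}=\Pi_{\Acurvy_{\tau}}$, and (iii) the restriction of $W$ to $\Hil_{\Acurvy_1}$ equals the combined logical action of the generalised unitaries $U_t$ and the intervening conjugate-basis measurements. Once such a $W$ is constructed, the corollary is immediate: the original Bravyi--K\"onig theorem~\ref{thm:BKog} forces $W$, and with it the logical operation it carries, into the $D$-th level of the Clifford hierarchy.

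The first step is to turn each pair (generalised unitary $U_t$, subsequent measurement) into an honest unitary. By the canonical form of theorem~\ref{thm:canonicalform}, on the codespace one has $U_t\Pi_{\Acurvy_t}=E_t\,U_{\Acurvy_t}\Pi_{\Acurvy_t}$, where $E_t:=\exp\!\big(i\sum_{b\in\Bcurvy_{e,t}}\phi_b b\big)$ is a product of exponentials of elements of $\Bcurvybar_t$ and $U_{\Acurvy_t}$ preserves $\Hil_{\Acurvy_t}$. Because $U_t$ is constant-depth by hypothesis, theorem~\ref{thm:explocal} in contrapositive form guarantees that $E_t$ is itself constant-depth, so that $\tilde U_{\Acurvy_t}:=E_t^{\dag}U_t$ is a constant-depth unitary on all of $\Hil^{\otimes n}$ that agrees with $U_{\Acurvy_t}$ on $\Hil_{\Acurvy_t}$ and hence preserves that subspace. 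Since $E_t$ commutes with $\Pi_{\Bcurvybar_t}$ and acts as a scalar on each eigenspace $\Hil_{\Bcurvy_{t+1}}(\vecmb)$, it is absorbed into the measurement as a phase; combining this observation with lemma~\ref{lem:Uasame} and proposition~\ref{prop:VabimplementsKab} yields, for the reference (all-positive) outcome $\vec{m}$,
\begin{equation}
\Pi_{\Bcurvybar_t}(\vec{m})\,U_t\,\Pi_{\Acurvy_t}=2^{-n_m/2}\,e^{i\phi}\,V_{\Acurvy_t,\Acurvy_{t+1}}\,\tilde U_{\Acurvy_t}\,\Pi_{\Acurvy_t},
\end{equation}
so the logical effect of the $t$-th step coincides with that of the honest unitary $V_{\Acurvy_t,\Acurvy_{t+1}}\tilde U_{\Acurvy_t}$ on $\Hil_{\Acurvy_t}$, and the logical-equivalence condition~\eqref{eq:logicalequivdef} ensures the same logical map arises for every other measurement outcome.

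Next I would assemble $W:=\prod_{t} V_{\Acurvy_t,\Acurvy_{t+1}}\,\tilde U_{\Acurvy_t}$, ordered so that the $t=1$ factor acts first. Each $\tilde U_{\Acurvy_t}$ restricts to an isometry of $\Hil_{\Acurvy_t}$ into itself, hence onto it by dimension, and each $V_{\Acurvy_t,\Acurvy_{t+1}}$ maps $\Hil_{\Acurvy_t}$ onto $\Hil_{\Acurvy_{t+1}}$ while sending $\Hil_{\Acurvy_t}^{\perp}$ to $\Hil_{\Acurvy_{t+1}}^{\perp}$ (proposition~\ref{prop:VabimplementsKab}); thus $W$ maps $\Hil_{\Acurvy_1}$ unitarily onto the equal-dimensional space $\Hil_{\Acurvy_{\tau}}$, which forces $W\Pi_{\Acurvy_1}W^{\dag}=\Pi_{\Acurvy_{\tau}}$ and makes $W$ a morphism. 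For the depth, $W$ is a product of $\mathcal{O}(\tau)=\mathcal{O}(1)$ factors, each constant-depth --- the $\tilde U_{\Acurvy_t}$ by the previous paragraph and the $V_{\Acurvy_t,\Acurvy_{t+1}}$ by corollary~\ref{cor:Vablocal} --- so $W$ is constant-depth. Finally I would confirm property (iii) by telescoping the single-step identity through the sequence, which shows that the physical process and $W$ induce the same transformation on logical operators, the accumulated phases being outcome-independent and therefore irrelevant to the logical operation.

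With $W$ established, the statement follows from theorem~\ref{thm:BKog}. The main obstacle I anticipate lies in the second step: one must make rigorous that the code-preserving factor $U_{\Acurvy_t}$, extracted only as a codespace identity in theorem~\ref{thm:canonicalform}, can be promoted to a constant-depth unitary $\tilde U_{\Acurvy_t}$ on the \emph{entire} Hilbert space. This is precisely where theorem~\ref{thm:explocal} is indispensable, as it certifies that the non-Pauli exponential factor $E_t$ is geometrically local, so that peeling it off leaves a constant-depth remainder; the argument would fail if the $U_t$ were assumed merely code-compatible rather than constant-depth. A secondary point to treat carefully is that $\tilde U_{\Acurvy_t}$ need only match $U_{\Acurvy_t}$ on $\Hil_{\Acurvy_t}$: its behaviour on the orthogonal complement is harmless because it is always followed by $V_{\Acurvy_t,\Acurvy_{t+1}}$, which keeps $\Hil_{\Acurvy_t}^{\perp}$ mapped into $\Hil_{\Acurvy_{t+1}}^{\perp}$ and so keeps codespace and complement cleanly separated across all $\mathcal{O}(1)$ steps.
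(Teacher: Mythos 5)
Your proposal is correct and follows essentially the same route as the paper's own proof: use theorem~\ref{thm:explocal} in contrapositive form (via the canonical form of theorem~\ref{thm:canonicalform}) to conclude that the exponential factor is constant-depth, peel it off to obtain the constant-depth code-preserving unitary $E_t^{\dag}U_t$ (the paper's $U_{a,t}$), replace each measurement by $V_{\Acurvy_t,\Acurvy_{t+1}}$ via corollary~\ref{cor:Vablocal}, and apply the original BK theorem~\ref{thm:BKog} to the resulting product of $\tau=\mathcal{O}(1)$ constant-depth circuits. Your additional verifications (the morphism property of $W$, outcome-independence of the phases via condition~\eqref{eq:logicalequivdef}) are details the paper leaves implicit but do not change the argument.
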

 \begin{proof}
 	Theorem~\ref{thm:explocal} implies that for each pair $\Acurvy_t,\, \Acurvy_{t+1}$, the generalised logical unitary $U_t$ abiding to the conditions listed in definition~\ref{def:genunitaries} fulfils  Eqn.~\eqref{eq:Ugenrep}, i.e. $U_t\Pi_{\Acurvy_t}=e^{i  \vec{\phi} \vec{b} }U_{\Acurvy_t}\Pi_{\Acurvy_t}$, where $e^{i  \vec{\phi} \vec{b} }$ is constant-depth and $U_{\Acurvy_t}$ is unitary on $\Hil_{\Acurvy_t}$. Therefore, there exists the code-preserving unitary $U_{a,t} :=e^{-i  \vec{\phi} \vec{b} } U_t$, which, as a product of two constant-depth circuits, is constant-depth. Thus, the unitary map induced by $U$ and the measurement is given by $V_{t,t+1} U_{a,t} $, where $U_{a,t}$ is constant-depth and $V_{t,t+1}$ is a constant-depth circuit due to corollary~\ref{cor:Vablocal}. As $\tau =\mathcal{O}(1)$, we can then immediately apply the original BK theorem to the product of $\tau$ constant-depth circuits.
 \end{proof}
 
 \subsection{Proof of theorem~\ref{thm:explocal}}
 We will prove theorem~\ref{thm:explocal} by contradiction. We will do so by showing that - assuming that $U$ is constant-depth - the connected correlation between two observables $A$ and $B$ whose distance is super-constant must be zero. Then, we show that this is not the case when $e^{i\vec{\phi}\vec{b}}$ is not constant-depth.
 First, in lemma~\ref{lem:correlationstab}, we prove that the connected correlation between two observables whose respective supports are fully contained within two far apart constant regions is zero with respect to the ground state of a topological stabiliser code. In the following, we will abbreviate the support of an observable $H$ $\mathrm{supp}(H)$ by $|H|$.~ \begin{lemma}\label{lem:correlationstab}
 	Let $\Scurvy$ be a stabiliser group generated by the geometrically local generators $\{s_i\}$ with a distance $d$. Then, for any two observables $A$ and $B$ where $|A|,\, |B| <d$ and $\mathrm{dist}(|A|,|B|)= \omega(1)$ and therefore $|A| \cap |B| =\emptyset$, the connected correlation function with respect to any code state $\ket{\Psi} \in \Hil_{\Scurvy}$ is zero: 
 		\begin{equation}
 		\left|\bra{\Psi} A B \ket{\Psi} - 	\bra{\Psi} A \ket{\Psi}\bra{\Psi} B \ket{\Psi}\right| =0
 	\end{equation}
 \end{lemma}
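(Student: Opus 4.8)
The plan is to reduce the statement to individual Pauli operators and then dispose of each resulting pair by a combined parity/locality argument. First I would expand both observables in the Pauli basis restricted to their supports, writing $A=\sum_P a_P P$ and $B=\sum_Q b_Q Q$ with $\mathrm{supp}(P)\subseteq|A|$ and $\mathrm{supp}(Q)\subseteq|B|$. By bilinearity the connected correlation becomes $\sum_{P,Q} a_P b_Q\big(\bra{\Psi}PQ\ket{\Psi}-\bra{\Psi}P\ket{\Psi}\bra{\Psi}Q\ket{\Psi}\big)$, so it suffices to prove the factorisation $\bra{\Psi}PQ\ket{\Psi}=\bra{\Psi}P\ket{\Psi}\bra{\Psi}Q\ket{\Psi}$ for each such pair. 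Since $|A|\cap|B|=\emptyset$, the operators $P$ and $Q$ act on disjoint qubit sets, hence commute, and $PQ$ is again a Pauli operator.

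Next I would invoke the two standard facts about Pauli expectation values in a stabiliser code state: $\bra{\Psi}R\ket{\Psi}=0$ whenever $R$ anticommutes with some element of $\Scurvy$ (shown by inserting that stabiliser and using $s\ket{\Psi}=\pm\ket{\Psi}$ together with $sRs=-R$), and $\bra{\Psi}R\ket{\Psi}=\pm 1$ whenever $R$ is proportional to an element of $\Scurvy$. This motivates a case split. If both $P$ and $Q$ lie in $\mathcal{N}(\Scurvy)$, then because their weights are $<d$ and $d$ is the minimum weight of a nontrivial logical operator, neither can represent a nontrivial class in $\mathcal{N}(\Scurvy)/\Scurvy$; hence $P,Q\in\Scurvy$ up to phase, so $P$, $Q$ and $PQ$ all act as definite signs on $\ket{\Psi}$, and since $P$ and $Q$ commute these signs multiply, giving the factorisation directly.

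The remaining case is where at least one operator, say $P$, anticommutes with $\Scurvy$, so that $\bra{\Psi}P\ket{\Psi}=0$; here I only need $\bra{\Psi}PQ\ket{\Psi}=0$ as well. Writing the offending stabiliser as a product of the local generators $\{s_i\}$, a parity count shows at least one generator $s^\ast$ anticommutes with $P$, so $\mathrm{supp}(s^\ast)$ must overlap $\mathrm{supp}(P)\subseteq|A|$. This is where geometric locality enters, and I expect it to be the crux: since each $s_i$ is supported on a region of constant diameter $w$ while $\mathrm{dist}(|A|,|B|)=\omega(1)$, for large $n$ the support of $s^\ast$ is confined to the neighbourhood $\mathcal{B}_{w}(|A|)$, which is disjoint from $|B|\supseteq\mathrm{supp}(Q)$. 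Consequently $s^\ast$ commutes with $Q$, and as it anticommutes with $P$ it anticommutes with $PQ$; the first standard fact then gives $\bra{\Psi}PQ\ket{\Psi}=0$. In every case each term of the Pauli sum vanishes, so the connected correlation is exactly zero. The only delicate point is the locality step, where one must guarantee that the anticommuting generator can be localised near $|A|$ rather than reaching across to $|B|$ — which is precisely what the super-constant separation $\mathrm{dist}(|A|,|B|)=\omega(1)$ together with bounded-diameter generators secures.
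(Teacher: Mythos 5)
Your proof is correct and follows essentially the same route as the paper's: expand $A$ and $B$ in the Pauli basis, use the distance bound to force small normaliser elements into $\Scurvy$, and use the geometric locality of the generators together with the super-constant separation $\mathrm{dist}(|A|,|B|)=\omega(1)$ to handle the cross terms. If anything, your second case (localising an anticommuting generator $s^{\ast}$ inside $\mathcal{B}_w(|A|)$ so that it commutes with $Q$ and hence kills $\bra{\Psi}PQ\ket{\Psi}$) makes explicit the step the paper only asserts, namely that $P_iP_j$ can lie in $\Scurvy$ (equivalently, have nonzero expectation) only if both $P_i$ and $P_j$ do.
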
~\begin{proof}
 	We denote the normaliser of $\Scurvy$ by $\mathcal{N}(\Scurvy)$ or simply $\mathcal{N}$ as an abbreviation. For any $P \in \mathcal{P}_n$ and any $s \in \{s_i\}$, $P$ either commutes or anti-commutes with $s$. Thus, if $P \notin \mathcal{N} \setminus \Scurvy$, its expectation value with respect to any $\ket{\Psi} \in \Hil_{\Scurvy}$ is:
 	\begin{equation} \label{eq:experror0}
 		\bra{\Psi} P \ket{\Psi}=0,
 	\end{equation}
 for all $P \notin \mathcal{N} \setminus \Scurvy$. We write out both $A$ and $B$ in terms of the Pauli basis:
  \begin{equation}
 	A =\sum_{i} \alpha_{i} P_i, \label{eq:APauli}
 \end{equation}
and 
  \begin{equation}
 	B=\sum_{j} \beta_{j} P_j,\label{eq:BPauli}
 \end{equation}
 where each $P_i, P_j \in \mathcal{P}_n$. By definition, the support of each $P_i$ and $P_j$ in each sum in Eqn.~\eqref{eq:APauli} and Eqn.~\eqref{eq:BPauli}  fulfils:
  \begin{equation}
  	|P_i| \leq |A| = \mathcal{O}(1),\quad |P_j| \leq |B| = \mathcal{O}(1).
 \end{equation}
And thus, as per definition of distance, each $P_i$ ($P_j$) cannot be a non-trivial element of the normaliser, i.e. $P_i \notin \mathcal{N}\setminus \Scurvy$ ($P_j \notin \mathcal{N} \setminus \Scurvy$). They are all either a stabiliser: $P_i \in \Scurvy$ or fulfil Eqn.~\eqref{eq:experror0}. Thus, we can compute their expectation values:
 \begin{equation}
 	\bra{\Psi} A \ket{\Psi} = \sum_{i} \alpha_{i} \delta_{P \in \Scurvy},
 \end{equation}
 and 
    \begin{equation}
 	\bra{\Psi} A \ket{\Psi} = \sum_{j} \beta_{j} \delta_{P \in \Scurvy}.
 \end{equation}
 Additionally, as both $|A|$ and $|B|$ are constant, $|P_i P_j|$ has constant support and thus the expectation value $\bra{\Psi} P_i P_j \ket{\Psi}$ is either 0 if $P_iP_j \notin \Scurvy$ or 1 if $P_i P_j \in \Scurvy$.
 However, as $\mathrm{dist}(|A|,|B|)=\omega(1)$ and the stabilisers are locally generated, $P_i P_j$ can only be a stabiliser if both $P_i \in \Scurvy$ and $P_j \in \Scurvy$:
 \begin{equation}
 	\bra{\Psi} P_i P_j \ket{\Psi} = \delta_{P_i \in \Scurvy} \delta_{P_j \in \Scurvy}.
 \end{equation}
 And therefore: 
 \begin{align}
	\bra{\Psi} AB \ket{\Psi}&=\sum_{i,j} \alpha_{i}\beta_j \delta_{P_iP_j \in \Scurvy} \\
	&=\sum_{i,j} \alpha_{i}\beta_j \delta_{P_i \in \Scurvy} \delta_{P_j \in \Scurvy} \\
	&= \left(\sum_{i} \alpha_{i}\delta_{P_i \in \Scurvy}\right) \left(\sum_j \beta_j   \delta_{P_j \in \Scurvy}\right). \label{eq:expfactorises}
\end{align}
 Thus, the connected correlation for all $\ket{\Psi} \in \Hil_{\Scurvy}$ is:
 \begin{align}
C&= |	\bra{\Psi} A B \ket{\Psi}-\bra{\Psi} A  \ket{\Psi}\bra{\Psi} B  \ket{\Psi}|\\
&= \left(\sum_{i} \alpha_{i}\delta_{P_i \in \Scurvy}\right) \left(\sum_j \beta_j   \delta_{P_j \in \Scurvy}\right) \notag\\
&\hspace{10em}-\left(\sum_{i} \alpha_{i} \beta_j \delta_{P_i P_j \in \Scurvy}\right) \\
&=0,
 \end{align}
where we inserted Eqn.~\eqref{eq:expfactorises}.
 \end{proof}
Now we can continue with the proof of theorem~\ref{thm:explocal}. For any Hilbert space $\Hil$ and any unitary $U$, we introduce the short-hand notation $U\Hil U^{\dag}$ for the image of $\Hil$ under $U$.

\begin{proof}
	
First we assume, w.l.o.g., that all $\phi_b \neq  \pi k$, where $k$ is an integer. Such a $\phi_b$ only contributes a Pauli operator to $e^{i\vec{\phi} \vec{b}}$, i.e. a transversal operator. We can prove theorem~\ref{thm:explocal} for $e^{i\vec{\phi} \vec{b}}$ with all constant-depth layers removed.
	
If we assume $e^{i\vec{\phi} \vec{b}}$ is not a constant-depth circuit, then there exists at least one $b \in B_e$ such that $\mathrm{diam}(|b|)=\omega(1)$. That implies that there exist an $a$ and an $a'$, both in $\Acurvybar$, with $\mathrm{dist}(|a|,|a'|)=\omega(1)$ which both anti-commute with $b$. More precisely, we assume that $\mathrm{dist}(|a|,|a'|)>hr$, where $h$ and $r$ are the depth and range of $U$. (For which we assume $hr=\mathcal{O}(1)$.) Keeping this in mind, we will now show that due to lemma~\ref{lem:correlationstab} their connected correlation with respect to a $\ket{\Psi} \in U\Hil_{\Acurvy}U^{\dag}$ must be zero:
\begin{equation}\label{eq:Cproof}
	C=  |\bra{\Psi} a a'\ket{\Psi}-\bra{\Psi}  a'\ket{\Psi}\bra{\Psi}  a'\ket{\Psi}|=0.
\end{equation}
%
For this, let $H = U^{\dag} a U$. It support is bound to:
\begin{equation}
	|H|\subseteq \Bcurvy_{hr}(|a|).
\end{equation}
We write out $H$ in terms of Paulis:
\begin{equation}
	H = \sum_{P| |P|\subseteq B_{\mathcal{O}(1)}(|a|)} \alpha_P P.
\end{equation}
We analogously define $H'=U^{\dag}a'U$, which we can also write out in terms of local Paulis in the same manner as $H$. Note that, as $\mathrm{dist}(|a|,|a'|)= \omega(1)$ and $U^{\dag}$ maps both $a$ and $a'$ to local neighbourhoods, $|H|\cap |H'|=\emptyset$.
Now we insert $H$ and $H'$ into Eqn.~\eqref{eq:Cproof}:
\begin{align}
	C&=	|\bra{\Psi} a a'\ket{\Psi}-\bra{\Psi}  a'\ket{\Psi}\bra{\Psi}  a'\ket{\Psi}|\\
	&=|\bra{\Psi} U U^{\dag} a U U^{\dag}  a'UU^{\dag}\ket{\Psi}\notag\\
	&\hspace{2em} -\bra{\Psi} U U^{\dag} a' U U^{\dag}\ket{\Psi}\bra{\Psi} U U^{\dag}  a' U U^{\dag}\ket{\Psi}| \label{eq:insertid}\\
	&=|\bra{\Phi}  HH' \ket{\Phi}-\bra{\Phi}  H \ket{\Phi}\bra{\Phi}  H' \ket{\Phi}| \label{eq:defphi}\\
	&=0
\end{align}
where we inserted identity in line~\eqref{eq:insertid}, defined $\ket{\Phi}:= U^{\dag} \ket{\Psi}$ in line~\eqref{eq:defphi}, and utilised lemma~\ref{lem:correlationstab} in the last line. Crucially this works, as both $H$ and $H'$ are supported on non-overlapping regions of constant size. The image of $U\Hil_{\Acurvy}U^{\dag}$ under $U^{\dag}$ is once again $\Hil_{\Acurvy}$, meaning that $\ket{\Phi} \in \Hil_{\Acurvy}$. Lemma~\ref{lem:correlationstab} holds for all $\ket{\Phi} \in \Hil_{\Acurvy}$. Thus, we can conclude that $C=0$ for all $\ket{\Psi} \in U\Hil_{\Acurvy}U^{\dag}$. \newline
We will now show how this leads to a contradiction if we impose the decomposition $U=e^{i\vec{\phi} \vec{b}}U_{\Acurvy}$. As $U_{\Acurvy}$ is code preserving, we can find a $\ket{\Phi} \in \Hil_{\Acurvy}$ for any $\ket{\Psi} \in U\Hil_{\Acurvy}U^{\dag}$ such that:
\begin{equation}
	\ket{\Psi}=e^{i\vec{\phi} \vec{b}}\ket{\Phi}.
\end{equation}
Thus, computing the connected correlator $C$ again yields:
\begin{align}
C&=\left|\bra{\Psi} aa'\ket{\Psi} - \bra{\Psi}a\ket{\Psi}\bra{\Psi}a'\ket{\Psi}\right|\\
&=|\bra{\Phi}e^{-i\vec{\phi}\vec{b}} aa'e^{i\vec{\phi}\vec{b}}\ket{\Phi} \\
&\hspace{3em}- \bra{\Phi}e^{-i\vec{\phi}\vec{b}} a e^{i\vec{\phi}\vec{b}}\ket{\Phi}\bra{\Phi}e^{-i\vec{\phi}\vec{b}} a'e^{i\vec{\phi}\vec{b}}\ket{\Phi}|.
\end{align}
Note that for any $Q \in \mathcal{P}_n$:
\begin{align}
	Q^{-\phi} &:= e^{-i\vec{\phi}\vec{b}} Q e^{i\vec{\phi}\vec{b}}\\
	&=\prod_{b \in K(Q)} Q \left(\cos(2\phi_b) -i\sin(2\phi_b) b\right),
\end{align}
where $K(Q)$ is defined as:
\begin{equation}
	K(Q):= \{b \in B_e: \{b,Q\}=0\}.
\end{equation}
Here $\{.,.\}$ denotes the anti-commutator. From now on, we abbreviate:
\begin{equation}
	f(\phi_b):= \cos(2\phi_b),
\end{equation}
and additionally, in a slight abuse of notation, for any $Q \in \mathcal{P}_n$:
\begin{equation}
	f(Q):= \{ \cos(2\phi_b)| b \in K(Q)\}.
\end{equation}
Thus, as $\ket{\Phi} \in \Hil_{\Acurvy}$ and $a$ and $a'$ are stabilisers in $\Hil_{\Acurvy}$, whereas any $b \in B_e \subseteq \Bcurvybar$ yields $\bra{\Phi}b \ket{\Phi}=0 $ for all $\ket{\Phi} \in \Hil_{\Acurvy}$, we can directly compute:
\begin{align}
\bra{\Psi}a^{-\phi}\ket{\Psi}&=	\prod_{b \in K(a)} f(\phi_b)=\prod_{x \in f(a)} x, \\
\bra{\Psi}a'^{-\phi}\ket{\Psi}&=	\prod_{x \in f(a')} x, \\
	\bra{\Psi}aa'^{-\phi}\ket{\Psi}&=	\prod_{x \in f(aa')} x.
\end{align}
As $f(aa')$ can theoretically be empty, we define:
\begin{equation}
	\prod_{x \in \emptyset} x := 1.
\end{equation}
 In total:
	\begin{align}
		C&=|\bra{\Psi} (aa')^{-\phi}\ket{\Psi} - \bra{\Psi}a^{-\phi}\ket{\Psi}\bra{\Psi}a'^{-\phi}\ket{\Psi}|\\
		&= \left|\prod_{x \in f(aa')}x-\left(\prod_{x \in f(a)}x\right)\left(\prod_{ x' \in f(a')}x'\right) \right|\label{eq:fdoubleprod}\\
		&=\left|\prod_{x \in f(aa')}x-\prod_{ x' \in f(a) \sqcup f(a')} x'\right| \label{eq:fdisjointunion},
	\end{align}	
	where $\sqcup$ denotes the disjoint union. In a slight abuse of notation, we neglect the indexation in the disjoint union and treat it as a collection of sets. We let the product in Eqn.~\eqref{eq:fdisjointunion} run over all of these sets.
	The set $f(aa')$ is given by:	
	\begin{equation}
	f(aa')=	\{f(\phi_b)| \{b,a\}=0 \veebar \{b,a'\}=0\},
	\end{equation} 	
	where $\veebar$ denotes the \textbf{exclusive} OR.
 As there exists at least one $b \in B_e$ that anti-commutes with both $a$ and $a'$, the product over $f(aa')$ is not the same as product over the disjoint union in Eqn.~\eqref{eq:fdisjointunion}. Each $f(\phi_b)$ for a $b$ that exclusively anti-commutes with either $a$ or $a'$ appears in the product over $f(aa')$ \emph{and} in the product over the disjoint union in Eqn.~\eqref{eq:fdisjointunion}, but the latter product also runs over terms in the intersection of $f(a)$ and $f(a')$:
 \begin{equation}\label{eq:Cfinalform}
 	C=\left|\prod_{x \in f(aa')}x\right|\left|1-\prod_{x \in f(a) \cap f(a')}x^2\right|
 \end{equation}
 The power of two appears as each term in the double product in Eqn.~\eqref{eq:fdoubleprod} contributes one $x \in f(a) \cap f(a')$.
 As per assumption, all products in Eqn.~\eqref{eq:Cfinalform} run over  terms larger than 0 and strictly smaller than 1. Thus, $C$ cannot be zero, which yields the contradiction.

\end{proof}

\section{Discussion and outlook}\label{sec:concl}

We have analysed limitations for fault-tolerant gates in dynamical stabiliser codes. More specifically, we have discussed the implementation of logical operations via constant-depth circuits for topological Floquet codes, in which one switches between instantaneous stabiliser groups (ISGs) via local measurements. Our question was whether constant-depth circuits at different time steps composed with the measurements can circumvent the Bravyi-K\"onig theorem~\cite{bravyiClassificationTopologicallyProtected2013}.

To do so, we first fixed a definition of Floquet codes based on locally conjugate stabiliser groups~\cite{aasenMeasurementQuantumCellular2023a}, which is in line with many such codes introduced so far, such as those in ~\cite{hastingsDynamicallyGeneratedLogical2021,davydovaQuantumComputationDynamic2024,davydovaFloquetCodesParent2023,townsend-teagueFloquetifyingColourCode2023a}. This is however not the only possible definition. For instance, the authors of ~\cite{fuErrorCorrectionDynamical2025} utilise a broader definition of dynamical codes. They define a general sequence of stabiliser measurements as a dynamical code, and discuss error correction and detection in detail for this sequence. Additionally, they consider the possibility that at each time step the logical information is encoded in a subsystem code.\\

We summarised how conjugate stabiliser groups enable the preservation of logical information in section~\ref{sec:presinfo}.
We then showed that the original Bravyi-K\"onig theorem  holds for our definition of Floquet codes based on locally conjugate stabiliser groups and \emph{code preserving} constant-depth circuits at every time step. \\

We observed that the projective measurements allow for additional freedom in the choice of circuits applied at each time step. That is, one can apply unitaries that need not preserve the codespace but that nevertheless, combined with the subsequent projective measurement, constitute a valid logical operation. Our question was whether these generalised logical unitaries allow for circumventing the BK theorem. Therefore, in section~\ref{sec:beyondcodepres}, we first defined a sensible class of such generalised logical unitaries based on conditions we consider in line with the nature of Floquet codes. Then, we derived a canonical form. It turns out to be similar to the form of unitary operations in subsystem codes, such as the one considered in~\cite{pastawskiFaulttolerantLogicalGates2015}. However, the latter factorise with respect to the tensor product structure $\Hil_L \otimes \Hil_G$ of the subsystem code, where $\Hil_L$ is the logical space and $\Hil_G$ the gauge qubit space. Here, we do not require such factorisation. 

In section~\ref{sec:BKbeyondcodepres}, we showed the main result of this work, namely that the Bravyi-K\"onig theorem also holds for our generalised logical unitaries.
It should be noted that Fu et al. prove a Pastawski-Yoshida theorem (see~\cite{pastawskiFaulttolerantLogicalGates2015}) for two ISGs in a Floquet sequence in~\cite{fuErrorCorrectionDynamical2025}. Our work differs from theirs, as we do not view two ISGs $\Acurvy$ and $\Bcurvy$ in a Floquet sequence in a dynamical code as a subsystem code with $\Acurvy \cap \Bcurvy$ as the stabiliser group and $\Acurvy \cup \Bcurvy$ as the gauge group. The Pastawski-Yoshida theorem applies to unitaries that are independent of the gauge operators, which, as mentioned above, are different to the general unitaries we discuss. Additionally, we consider the combined logical action of unitaries at the different time steps, whereas the authors of~\cite{fuErrorCorrectionDynamical2025} consider one unitary at a time that may connect a stabiliser group $\Acurvy_t$ to some other $\Acurvy_{t'}$, where $t$ and $t'$ are different time steps, but they do not consider the combination of these unitaries with the measurements. \\

It would be interesting to further analyse limitations on fault tolerant operations for more general definitions of Floquet codes or spacetime codes, as in~\cite{fuErrorCorrectionDynamical2025,rodatzFloquetifyingStabiliserCodes2024a,delfosseSpacetimeCodesClifford2023}. The preservation of logical information due to the measurement of a Pauli operator that anti-commutes with at least one stabiliser element, as discussed in~\ref{sec:presinfo}, could be extended to general space-time codes, as defined in~\cite{delfosseSpacetimeCodesClifford2023}. That is because measuring a Pauli operator that anti-commutes with at least one stabiliser element amounts to measuring an element of the de-stabiliser. (See for example~\cite{aaronsonImprovedSimulationStabilizer2004a}). One could in principle always find conjugate bases for the destabiliser and the stabiliser and then replace the projective measurement by a unitary, similar to what we discuss in section~\ref{sec:presinfo}. This could be helpful to analyse the limitations of fault tolerant operations in spacetime codes.

As our generalised logical unitaries can also be viewed as operations on subsystem codes (though their action may depend on the state of the gauge qubits), one could assess what the statements made here imply for subsystem code gauge fixing.

We focused on the Bravyi-K\"onig theorem, which only holds for topological stabiliser codes, but everything should straightforwardly extend to disjointness style theorems~\cite{jochym-oconnorDisjointnessStabilizerCodes2018} or more general no-go theorems for bounded-spread quantum channels~\cite{websterUniversalFaulttolerantQuantum2022}, provided that any measurements in the Floquet sequence admit conjugate bases (as discussed in section~\ref{sec:revpairs}) that fulfil some abstract form of locality. By this we mean that limited growth of errors seems to generally imply non-universal computation. For dynamical codes, but also general space-time codes or subsystem code gauge fixing, it may be fruitful to utilise the replacement of measurements of a Pauli that anti-commutes with at least one stabiliser by unitaries (see section~\ref{sec:logicaleffectHH}), in order to quantify more clearly how much fault tolerance must be sacrificed in  general for universal computation.\\

Additionally, the fundamental working principle behind logical preservation in Floquet codes, the measurements that actually do not contain any information about the logical state, are not unique to Pauli measurements, but stem from mutually unbiased measurements~\cite{farkasMutuallyUnbiasedMeasurements2023}, which could be a way to define non-Pauli Floquet codes.

\section{Acknowledgments}
We thank Esther Xiaozhen Fu for the insightful comments and Alex Townsend-Teague and Johannes Jakob Meyer for the
for helpful discussions. JH acknowledges funding from the Dutch Research Council (NWO) through a Veni grant (grant No.VI.Veni.222.331). JH and JM acknowledge funding from the Quantum Software Consortium (NWO Zwaartekracht Grant No.024.003.037).

 \bibliography{bktex.bib}
 \onecolumngrid

\section{Remaining proofs}\label{sec:remainingproofs}

Here, we give the proofs for the statements left open in the main text.
First, we re-introduce some basic notation. Let $\Acurvy$ and $\Bcurvy$ be two conjugate stabiliser groups in a Floquet transition. Let $\Scurvy=\Acurvy\cap\Bcurvy$, $\Acurvybar = \Acurvy \setminus \Scurvy$ and $\Bcurvybar=\Bcurvy \setminus \Scurvy$. We denote the conjugate bases by $\{a_i\}$ and $\{b_i\}$, where $|\{a_i\}|=|\{b_i\}|=n_m$. For the joint eigenspace of a stabiliser group $\Gcurvy$ with generators $\{g_i\}$, we specify the eigenvalues by the vector $\vec{m}_{\Gcurvy} \in \mathbb{F}_2^{n_m}$,  where the eigenvalue of each respective $g_i$ is $(-1)^{m_{\Gcurvy,i}}$ (with respect to some enumeration). Further, the projection operator $\Pi_{\Gcurvy}(\vec{m}_{\Gcurvy})$ projects onto the joint eigenspace of $\{g_i\}$ determined by $\vec{m}_{\Gcurvy}$.  

We now prove proposition~\ref{prop:pbpapbidHH}, which we will restate here as lemma~\ref{lem:pbpapbid}:

\begin{lemma}\label{lem:pbpapbid}
	Let $\Acurvy$ and $\Bcurvy$ be two conjugate stabiliser groups. Then, for any $\vec{m}_{\Acurvybar}$ and any $\vec{m}_{\Bcurvybar}$:
	
	\begin{align}\label{eq:papbpaid}
			\Pi_{\Acurvybar}(\vec{m}_{\Acurvybar}) \Pi_{\Bcurvybar}(\vec{m}_{\Bcurvybar}) \Pi_{\Acurvybar}(\vec{m}_{\Acurvybar}) = \frac{1}{2^{n_m}} \Pi_{\Acurvybar}(\vec{m}_{\Acurvybar}).
	\end{align}
	and:
	
	\begin{align}\label{eq:pbpapbid}
		\Pi_{\Bcurvybar}(\vec{m}_{\Bcurvybar}) \Pi_{\overline{\Acurvy}}(\vec{m}_{\overline{\Acurvy}}) \Pi_{\Bcurvybar}(\vec{m}_{\Bcurvybar}) = \frac{1}{2^{n_m}} \Pi_{\overline{\Bcurvy}}(\vec{m}_{\overline{\Bcurvy}})
	\end{align}
	
\end{lemma}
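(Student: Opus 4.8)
The plan is to exploit the conjugate basis structure of definition~\ref{def:reversiblepairHH} directly: each $b_j$ anticommutes with exactly one $a_j$ and commutes with every other $a_i$. Writing $\alpha_i := (-1)^{m_{\Acurvybar}^{(i)}}$ and $\beta_j := (-1)^{m_{\Bcurvybar}^{(j)}}$, I first record the projectors in the conjugate bases,
\[
\Pi_{\Acurvybar} = \prod_{i=1}^{n_m}\frac{\mathbb{1}+\alpha_i a_i}{2}, \qquad \Pi_{\Bcurvybar} = \prod_{j=1}^{n_m}\frac{\mathbb{1}+\beta_j b_j}{2},
\]
where all $a_i$ commute among themselves (they lie in the abelian group $\Acurvy$) and likewise all $b_j$ commute.

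The key lemma I would establish is that $\Pi_{\Acurvybar}\, b_J\, \Pi_{\Acurvybar} = 0$ for every nonempty $J\subseteq\{1,\dots,n_m\}$, where $b_J := \prod_{j\in J} b_j$. Since $b_J a_i = (-1)^{[i\in J]} a_i b_J$, pushing $b_J$ rightward through the first projector flips the sign of exactly the $a_i$-factors with $i\in J$, giving $b_J \Pi_{\Acurvybar} = \widetilde\Pi_{\Acurvybar}\, b_J$, where $\widetilde\Pi_{\Acurvybar}$ is the projector obtained from $\Pi_{\Acurvybar}$ by the replacement $\alpha_j \mapsto -\alpha_j$ for $j\in J$. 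Because all the $a$-factors commute, the product $\Pi_{\Acurvybar}\widetilde\Pi_{\Acurvybar}$ factorises over the index $i$, and for each $j\in J$ the corresponding factor $\tfrac{\mathbb{1}+\alpha_j a_j}{2}\cdot\tfrac{\mathbb{1}-\alpha_j a_j}{2}=0$, as these are orthogonal projectors onto opposite eigenspaces of $a_j$. Hence $\Pi_{\Acurvybar}\widetilde\Pi_{\Acurvybar}=0$ whenever $J\neq\emptyset$, which proves the lemma.

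With this in hand I would expand the middle projector as a Pauli sum, $\Pi_{\Bcurvybar} = 2^{-n_m}\sum_{J}\big(\prod_{j\in J}\beta_j\big)\, b_J$, and substitute it into $\Pi_{\Acurvybar}\Pi_{\Bcurvybar}\Pi_{\Acurvybar}$. Every term with $J\neq\emptyset$ is annihilated by the lemma, so only the $J=\emptyset$ term survives, contributing $\Pi_{\Acurvybar}\,\mathbb{1}\,\Pi_{\Acurvybar}=\Pi_{\Acurvybar}$ with weight $2^{-n_m}$; this yields Eqn.~\eqref{eq:papbpaid}. Equation~\eqref{eq:pbpapbid} then follows from the identical computation with the roles of $\Acurvy$ and $\Bcurvy$ interchanged, which is legitimate precisely because the reversible-pair relation is symmetric in the two groups.

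I do not anticipate a genuine obstacle here; the only points requiring care are the bookkeeping of signs when commuting $b_J$ through $\Pi_{\Acurvybar}$ and the observation that two projectors onto opposite eigenspaces of the same $a_j$ multiply to zero. Everything else is a routine expansion that decouples across the conjugate pairs, so the factor $1/2^{n_m}$ arises simply as one factor of $1/2$ per pair.
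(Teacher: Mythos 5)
Your proof is correct, and it organises the computation differently from the paper. The paper's proof groups the triple product $\Pi_{\Acurvybar}\Pi_{\Bcurvybar}\Pi_{\Acurvybar}$ into per-pair factors $\bigl(\tfrac{\mathbb{1}+\alpha_i a_i}{2}\bigr)\bigl(\tfrac{\mathbb{1}+\beta_i b_i}{2}\bigr)\bigl(\tfrac{\mathbb{1}+\alpha_i a_i}{2}\bigr)$ --- legitimate because factors with distinct indices commute --- and multiplies each one out directly using $a_ib_i=-b_ia_i$ and $a_i^2=\mathbb{1}$, obtaining $\tfrac{1}{2}\cdot\tfrac{\mathbb{1}+\alpha_i a_i}{2}$ per pair. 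You instead expand the \emph{middle} projector as the subset sum $\Pi_{\Bcurvybar}=2^{-n_m}\sum_J\bigl(\prod_{j\in J}\beta_j\bigr)b_J$ and annihilate every term with $J\neq\emptyset$ through the vanishing lemma $\Pi_{\Acurvybar}\,b_J\,\Pi_{\Acurvybar}=0$, leaving only the identity term. Both arguments rest on the same two facts (the conjugate anticommutation structure and the orthogonality of opposite-eigenvalue projectors), and both produce $2^{-n_m}$ as one factor of $\tfrac{1}{2}$ per conjugate pair; your handling of the second identity by symmetry of the reversible-pair relation mirrors the paper, which proves one identity and declares the other analogous. The trade-off: the paper's pairwise expansion is marginally shorter and entirely self-contained, while your vanishing lemma is the more modular and reusable tool --- it only needs that $b_J$ anticommutes with \emph{at least one} generator being projected onto, so it applies verbatim to any Pauli outside the normaliser of $\langle a_i\rangle$, not just to products of conjugate basis elements, and it isolates cleanly why only the identity component of $\Pi_{\Bcurvybar}$ survives conjugation by $\Pi_{\Acurvybar}$.
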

 
\begin{proof}
First, the LHS of Eqn.~\ref{eq:papbpaid} factorises into a product of individual projection operators:
 
\begin{align}
		 	\Pi_{\Acurvybar}(\vec{m}_{\Acurvybar}) \Pi_{\Bcurvybar}(\vec{m}_{\Bcurvybar}) \Pi_{\Acurvybar}(\vec{m}_{\Acurvybar})&=	\prod_{i=0}^{n_m-1} \left(\frac{\mathds{1}+(-1)^{m^{(\overline{\Acurvy})}_i} a_i}{2} \right) \left(\frac{\mathds{1}+(-1)^{m^{(\overline{\Bcurvy})}_i} b_i}{2}\right)   \left(\frac{\mathds{1}+(-1)^{m^{(\overline{\Acurvy})}_i} a_i}{2}\right),
	\end{align} 
 where the index $i$ runs over an enumeration over the conjugate basis pairs, i.e. $a_i$ anti-commutes with exactly one generator of $\overline{\Bcurvy}$, namely $b_i$. This turns into:
 \begin{align}
		\phantom{=}&\prod_{i} \left(\frac{\mathds{1}+(-1)^{m^{(a)}_i} a_i}{2} \right) \left(\frac{\mathds{1}+(-1)^{m^{(b)}_i} b_i}{2}\right) \left(\frac{\mathds{1}+(-1)^{m^{(a)}_i} a_i}{2}\right)\\
		=&\frac{1}{2^{3n_m}}\prod_{i} \left(\mathds{1}+(-1)^{m^{(a)}_i} a_i+(-1)^{m^{(b)}_i} b_i+(-1)^{m^{(a)}_i+m^{(b)}_i} a_ib_i \right)   \left( \mathds{1}+(-1)^{m^{(a)}_i} a_i \right) \\
		&=\frac{1}{2^{2n_m}}\prod_{i}  \left(\mathds{1}+(-1)^{m_i^{(a)}} a_i \right) \\
		&= \frac{1}{2^{n_m}} \Pi_{\Acurvybar},
	\end{align}
 where we used the anti-commutation relations between $a_i$ and $b_i$.
Eqn.~\ref{eq:pbpapbid} is shown analogously.  
\end{proof}

\begin{lemma}
	Let $\Acurvy$ be a stabiliser group and $\mathcal{N}(\Acurvy)$ its normaliser. Let $L$ be a linear operator spanned by elements of $\mathcal{N}(\Acurvy)$. If $\Pi_{\Acurvy}L\Pi_{\Acurvy}$ is unitary on $\Hil_{\Acurvy}$, then there exists a unitary representative $U$ such that $U\Pi_{\Acurvy} =L \Pi_{\Acurvy}$. 
\end{lemma}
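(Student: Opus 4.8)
The plan is to exploit the fact that an operator built from normaliser elements commutes with the code projector, so that $L$ automatically restricts to an endomorphism of $\Hil_{\Acurvy}$; the hypothesis then turns this restriction into a genuine unitary of the finite-dimensional space $\Hil_{\Acurvy}$, which extends to a unitary on all of $\Hil^{\otimes n}$ by elementary linear algebra. First I would write $L=\sum_{Q}\alpha_Q Q$ with each $Q\in\mathcal{N}(\Acurvy)$. By the definition of the normaliser (Eqn.~\eqref{eq:normaliser}) every such $Q$ commutes element-wise with the generators of $\Acurvy$, hence with the projector $\Pi_{\Acurvy}$ built from those generators in Eqn.~\eqref{eq:PiG}; by linearity $L$ commutes with $\Pi_{\Acurvy}$ as well. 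Consequently $L\Pi_{\Acurvy}=\Pi_{\Acurvy}L\Pi_{\Acurvy}$, and for any $\ket{\psi}\in\Hil_{\Acurvy}$ one has $\Pi_{\Acurvy}L\ket{\psi}=L\Pi_{\Acurvy}\ket{\psi}=L\ket{\psi}$, so $L$ maps $\Hil_{\Acurvy}$ into itself and its restriction to $\Hil_{\Acurvy}$ coincides with the operator $\Pi_{\Acurvy}L\Pi_{\Acurvy}$ appearing in the hypothesis.

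Next I would invoke the hypothesis that $\Pi_{\Acurvy}L\Pi_{\Acurvy}$ is unitary on $\Hil_{\Acurvy}$; equivalently, $L|_{\Hil_{\Acurvy}}$ is a bijective isometry of the finite-dimensional space $\Hil_{\Acurvy}$. I would then define $U$ on the orthogonal decomposition $\Hil^{\otimes n}=\Hil_{\Acurvy}\oplus\Hil_{\Acurvy}^{\perp}$ by setting $U|_{\Hil_{\Acurvy}}:=L|_{\Hil_{\Acurvy}}$ and $U|_{\Hil_{\Acurvy}^{\perp}}:=\mathbb{1}$. Since $U$ is an isometry on each of the two orthogonal, $U$-invariant summands, it is unitary on all of $\Hil^{\otimes n}$, and choosing the identity on the complement makes it manifestly code-preserving. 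Finally, because $\Pi_{\Acurvy}\ket{\phi}\in\Hil_{\Acurvy}$ for every $\ket{\phi}$ and $U$ agrees with $L$ there, we get $U\Pi_{\Acurvy}\ket{\phi}=L\Pi_{\Acurvy}\ket{\phi}$, i.e. $U\Pi_{\Acurvy}=L\Pi_{\Acurvy}$ as required.

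I do not expect any deep obstacle here: the only steps requiring genuine care are verifying the commutation $[L,\Pi_{\Acurvy}]=0$ from the normaliser condition, and recognising that ``unitary on $\Hil_{\Acurvy}$'' is exactly the statement that the restriction is an isometric bijection of that subspace, which therefore admits a (non-unique) unitary extension to the whole Hilbert space. The non-uniqueness is precisely the freedom hidden in the word \emph{representative}; picking the identity on $\Hil_{\Acurvy}^{\perp}$ additionally yields the code-preserving relations $\Pi_{\Acurvy}U=U\Pi_{\Acurvy}=\Pi_{\Acurvy}U\Pi_{\Acurvy}$ that are needed when this lemma is applied to $L_A$ in the canonical-form theorem~\ref{thm:canonicalform}.
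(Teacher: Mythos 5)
Your proof is correct, and every step checks out: elements of $\mathcal{N}(\Acurvy)$ commute with all of $\Acurvy$ by Eqn.~\eqref{eq:normaliser}, hence with the projector of Eqn.~\eqref{eq:PiG}, so $[L,\Pi_{\Acurvy}]=0$, $L$ preserves $\Hil_{\Acurvy}$, and its restriction there equals the compression $\Pi_{\Acurvy}L\Pi_{\Acurvy}$; the hypothesis makes that restriction a bijective isometry of the finite-dimensional space $\Hil_{\Acurvy}$, and the block-diagonal extension $U:=L|_{\Hil_{\Acurvy}}\oplus \mathbb{1}_{\Hil_{\Acurvy}^{\perp}}$ is unitary (a direct sum of surjective isometries on orthogonal invariant subspaces) and satisfies $U\Pi_{\Acurvy}=L\Pi_{\Acurvy}$. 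For comparison: the paper states this lemma in its appendix \emph{without} any accompanying proof, so there is no paper argument to measure yours against; your write-up in fact supplies the missing proof, and it is the natural one. Your closing remark is also on target: the additional relations $\Pi_{\Acurvy}U=U\Pi_{\Acurvy}=\Pi_{\Acurvy}U\Pi_{\Acurvy}$, which come for free from choosing the identity on $\Hil_{\Acurvy}^{\perp}$, are exactly the properties required of $U_{\Acurvy}$ when this lemma is invoked for the operator $L_A$ in theorem~\ref{thm:canonicalform}.
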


Below, we will leave out the specification of $\vec{m}_{\Acurvybar}$ and $\vec{m}_{\Bcurvybar}$. For completeness, we restate definition~\ref{def:KdefHH} as definition~\ref{def:KdefHHapp}.
  \begin{definition}[$K_{\Acurvy,\Bcurvy}$]\label{def:KdefHHapp}
 	Let $\Acurvy \leftrightarrow \Bcurvy$ be a reversible pair and let the system be in the state $\ket{\Psi} \in \Hil_{\Acurvy}$. Now project via $\Pi_{\Bcurvybar}$. The unitary operator $K_{\Acurvy,\Bcurvy}$ describes the effect of $\Pi_{\Bcurvybar}$:  
 	
 	\begin{equation}\label{eq:HHKapp}
 		K_{\Acurvy,\Bcurvy}: \Hil_{\Acurvy} \rightarrow \Hil_{\Bcurvy}: \ket{\Psi} \mapsto 2^{n_m/2} \Pi_{\Bcurvybar} \ket{\Psi}.
 	\end{equation}
 \end{definition}

Next, we will prove lemma~\ref{lem:KunitaryHH} in section~\ref{sec:floquettransitionop}, which we will restate here as lemma~\ref{lem:Kunitary}:
 \begin{lemma} \label{lem:Kunitary}
	$K_{\Bcurvy,\Acurvy}: \Hil_{\Bcurvy} \rightarrow \Hil_{\Acurvy}$ is the inverse of $K_{\Acurvy,\Bcurvy}$, i.e. $K_{\Bcurvy,\Acurvy}K_{\Acurvy,\Bcurvy}= \mathbb{1}_{\Acurvy}$, where $\mathbb{1}_{\Acurvy}$ denotes the identity on $\Hil_{\Acurvy}$. Additionally, $K_{\Bcurvy,\Acurvy}=K_{\Acurvy,\Bcurvy}^{\dag}$.
\end{lemma}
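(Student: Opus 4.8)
The plan is to prove both statements by a direct computation on an arbitrary code state $\ket{\Psi}\in\Ha$, with Proposition~\ref{prop:pbpapbidHH} doing the essential work. First I would unfold the two maps using Definition~\ref{def:KdefHH}. Applying $K_{\Acurvy,\Bcurvy}$ gives $K_{\Acurvy,\Bcurvy}\ket{\Psi}=2^{n_m/2}\Pi_{\Bcurvybar}\ket{\Psi}$, which lies in $\Hil_{\Bcurvy}$: the state $\ket{\Psi}$ is already a joint eigenstate of $\Scurvy=\Acurvy\cap\Bcurvy$, and $\Pi_{\Bcurvybar}$ fixes the remaining signs $\vec{m}_{\Bcurvybar}$. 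Composing with $K_{\Bcurvy,\Acurvy}$ then yields $K_{\Bcurvy,\Acurvy}K_{\Acurvy,\Bcurvy}\ket{\Psi}=2^{n_m}\Pi_{\Acurvybar}\Pi_{\Bcurvybar}\ket{\Psi}$.

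The key step is to insert a redundant projector. Since $\ket{\Psi}\in\Ha$ is a joint eigenstate of $\Acurvybar$ with the prescribed signs, $\Pi_{\Acurvybar}\ket{\Psi}=\ket{\Psi}$, so I may write $\Pi_{\Bcurvybar}\ket{\Psi}=\Pi_{\Bcurvybar}\Pi_{\Acurvybar}\ket{\Psi}$. This turns the expression into $2^{n_m}\,\Pi_{\Acurvybar}\Pi_{\Bcurvybar}\Pi_{\Acurvybar}\ket{\Psi}$, at which point Eqn.~\eqref{eq:papbpaidHH} collapses $\Pi_{\Acurvybar}\Pi_{\Bcurvybar}\Pi_{\Acurvybar}$ to $2^{-n_m}\Pi_{\Acurvybar}$. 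The two powers of $2^{n_m}$ cancel and I am left with $\Pi_{\Acurvybar}\ket{\Psi}=\ket{\Psi}$, which establishes $K_{\Bcurvy,\Acurvy}K_{\Acurvy,\Bcurvy}=\mathbb{1}_{\Acurvy}$.

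For the adjoint identity there are two equivalent routes. The quickest is to recall from Lemma~\ref{lem:orthpres} that $K_{\Acurvy,\Bcurvy}$ is unitary; a left inverse of a unitary map between spaces of equal dimension is automatically its two-sided inverse and hence its adjoint, so $K_{\Bcurvy,\Acurvy}=K_{\Acurvy,\Bcurvy}^{-1}=K_{\Acurvy,\Bcurvy}^{\dag}$. Alternatively one can verify the adjoint relation by matrix elements: for $\ket{\Psi}\in\Ha$ and $\ket{\Phi}\in\Hil_{\Bcurvy}$, the facts $\Pi_{\Bcurvybar}\ket{\Phi}=\ket{\Phi}$ and $\Pi_{\Acurvybar}\ket{\Psi}=\ket{\Psi}$ reduce both $\bra{\Phi}K_{\Acurvy,\Bcurvy}\ket{\Psi}$ and $\overline{\bra{\Psi}K_{\Bcurvy,\Acurvy}\ket{\Phi}}$ to the common value $2^{n_m/2}\bra{\Phi}\ket{\Psi}$, which is exactly the statement $K_{\Bcurvy,\Acurvy}=K_{\Acurvy,\Bcurvy}^{\dag}$.

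There is no deep obstacle here; the computation is short once Proposition~\ref{prop:pbpapbidHH} is in hand. The only point demanding care is the bookkeeping flagged in the footnote to Lemma~\ref{lem:orthpres}: $K_{\Acurvy,\Bcurvy}$ and $K_{\Bcurvy,\Acurvy}$ have different domains and codomains, so I would consistently read every identity as holding on the relevant subspace (equivalently, interpret each $K$ as an operator on $\Hil^{\otimes n}$ that annihilates the orthogonal complement), and track the normalisation factors $2^{n_m/2}$ so that they cancel correctly against the $2^{-n_m}$ produced by Proposition~\ref{prop:pbpapbidHH}.
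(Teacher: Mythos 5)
Your proof is correct and follows essentially the same route as the paper's: both unfold Definition~\ref{def:KdefHH}, insert a redundant projector so that Proposition~\ref{prop:pbpapbidHH} collapses $\Pi_{\Acurvybar}\Pi_{\Bcurvybar}\Pi_{\Acurvybar}$ to $2^{-n_m}\Pi_{\Acurvybar}$, and establish $K_{\Bcurvy,\Acurvy}=K_{\Acurvy,\Bcurvy}^{\dag}$ by comparing matrix elements exactly as in your second route. The only cosmetic difference is that the paper carries the factor $\Pi_{\Scurvy}$ (via $\Pi_{\Bcurvy}=\Pi_{\Scurvy}\Pi_{\Bcurvybar}$) explicitly through the computation, which your cleaner insertion of $\Pi_{\Acurvybar}$ alone avoids.
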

\begin{proof}
 First, we insert the definition of $K_{\Acurvy,\Bcurvy}$  and $K_{\Bcurvy,\Acurvy}$ (definition~\ref{def:KdefHHapp}) into $K_{\Bcurvy,\Acurvy}K_{\Acurvy,\Bcurvy}\ket{\Psi}$ for a $\ket{\Psi} \in \Hil_{\Acurvy}$ and utilise $\Pi_{\Bcurvy}=\Pi_{\Scurvy}\Pi_{\Bcurvybar}$:
  \begin{equation}
  		K_{\Bcurvy,\Acurvy}K_{\Acurvy,\Bcurvy}\ket{\Psi}=2^{n_m} \Pi_{\overline{\Acurvy}} \Pi_{\Bcurvy}\Pi_{\overline{\Bcurvy}} \Pi_{\Acurvy} \ket{\Psi} \\
  	=2^{n_m} \Pi_{\overline{\Acurvy}} \Pi_{\overline{\Bcurvy}}\Pi_{\Scurvy}\Pi_{\overline{\Bcurvy}} \Pi_{\Acurvy} \ket{\Psi} 
  	\end{equation}
  	
  	Next we rearrange the projection operators (note that $\Pi_{\Scurvy}$ commutes with $\Pi_{\Acurvybar}$ and $\Pi_{\Bcurvybar}$ and that $\ket{\Psi}$ is an eigenstate of $\Scurvy$ per assumption) make use of lemma~\ref{lem:pbpapbid} to obtain:
  	
  	\begin{equation}
  			K_{\Bcurvy,\Acurvy}K_{\Acurvy,\Bcurvy}\ket{\Psi}=2^{n_m} \Pi_{\overline{\Acurvy}} \Pi_{\overline{\Bcurvy}}\Pi_{\overline{\Bcurvy}} \Pi_{\Acurvy} \Pi_{\Scurvy}\ket{\Psi}=2^{n_m} \Pi_{\overline{\Acurvy}}  \Pi_{\overline{\Bcurvy}} \Pi_{\Acurvybar} \Pi_{\Acurvy} \ket{\Psi}=\Pi_{\Acurvybar} \Pi_{\Acurvy} \ket{\Psi} =\ket{\Psi}
  	\end{equation}
 To see that $K_{\Bcurvy,\Acurvy}=K_{\Acurvy,\Bcurvy}^{\dag}$, note that:
\begin{equation}
	\bra{\Psi}_{\Bcurvy} (2^{n_m}\Pi_{\Bcurvybar} \Pi_{\Acurvy}\ket{\Psi}_{\Acurvy})=(\bra{\Psi}_{\Bcurvy} 2^{n_m}\Pi_{\Bcurvy} \Pi_{\Acurvybar})\ket{\Psi}_{\Acurvy},
\end{equation}
for all $\ket{\Psi}_{\Acurvy} \in \Hil_{\Acurvy}$ and $\ket{\Psi}_{\Bcurvy} \in \Hil_{\Bcurvy}$.
\end{proof}

\end{document}